\documentclass[11pt,letterpaper]{article}

\usepackage[margin=1in]{geometry}
\usepackage[utf8]{inputenc}
\usepackage[T1]{fontenc}
\usepackage{lmodern}
\usepackage{microtype}
\usepackage{colortbl}

\usepackage{amsmath,amssymb,amsthm}
\usepackage{mathtools}
\usepackage{bm}
\usepackage{graphicx}
\usepackage{booktabs}
\usepackage{algorithm}
\usepackage{algpseudocode}
\usepackage{enumitem}
\usepackage{xcolor}
\usepackage{tikz}
\usepackage[hidelinks,breaklinks=true]{hyperref}
\usepackage[capitalise,noabbrev]{cleveref}
\usepackage[square,sort&compress,numbers]{natbib}
\newcommand{\pvec}{\mathbf{p}}
\newcommand{\svec}{\mathbf{s}}
\newcommand{\avec}{\mathbf{a}}
\newcommand{\cP}{\mathcal{P}}
\newcommand{\cE}{\mathcal{E}}
\newcommand{\cD}{\mathcal{D}}

\newcommand{\cM}{\mathcal{M}}
\newcommand{\cB}{\mathcal{B}}
\newcommand{\EE}{\mathbb{E}}
\newcommand{\NN}{\mathbb{N}}
\newcommand{\RR}{\mathbb{R}}
\newcommand{\Kmax}{K_{\max}}
\newcommand{\Univ}{\mathfrak{U}}
\newcommand{\Ep}{\mathrm{Ep}}
\newcommand{\argmax}{\operatorname*{arg\,max}}

\newcommand{\Regret}{\mathrm{Regret}}
\newcommand{\Api}{\mathcal{A}^{\mathrm{pi}}}

\theoremstyle{plain}
\newtheorem{theorem}{Theorem}
\newtheorem{lemma}{Lemma}
\newtheorem{proposition}{Proposition}
\newtheorem{corollary}{Corollary}
\theoremstyle{definition}
\newtheorem{definition}{Definition}
\newtheorem{problem}{Problem}
\theoremstyle{remark}
\newtheorem{remark}{Remark}

\usepackage{fancyhdr}
\title{\textbf{\LARGE No-Regret Learning in Stackelberg Security Games\\
with an Unknown, Bounded Set of Attacker Types}}
\author{%
\begin{tabular}{c@{\hspace{4em}}c}
Sabil Ahmad$^{*}$ & Ishita$^{*}$\\
\small\texttt{23b1057@iitb.ac.in} & \small\texttt{23b0921@iitb.ac.in}\\[10pt]
Aditi Singh$^{*}$ & Swaprava Nath\\
\small\texttt{23b1053@iitb.ac.in} & \small\texttt{swaprava@iitb.ac.in}\\[10pt]
\multicolumn{2}{c}{Indian Institute of Technology Bombay}
\end{tabular}}
\date{}

\begin{document}
\maketitle
\begingroup
\renewcommand{\thefootnote}{*}
\footnotetext{Equal contributions.}
\endgroup

\begin{abstract}
In a repeated Stackelberg security game, a defender commits in every round
to a randomized allocation of its resources, and an attacker, drawn from a
set of types, best-responds. Existing no-regret guarantees assume that the
defender knows this set of types in advance. We drop this assumption: the
set of types is unknown, and its size is at most $\Kmax$, an upper bound
that the defender is not told. We show that what can be learned then
depends on what the defender observes. If it observes the type of each
attacker, an algorithm that adds types to its list as they appear achieves
regret $\tilde O(n\Kmax\sqrt{T})$ over $T$ rounds with $n$ targets, the same
$\sqrt T$ rate as with known types, at an extra factor of only
$\sqrt{\Kmax}$ and without ever using $\Kmax$. If it observes only the
attacked target and types come from an infinite universe, no algorithm has
sublinear regret: in a fixed three-target game every algorithm suffers
regret at least $T-2$, even against a single attacker type, because it
cannot tell which type it faces. If, in addition, it is told every $B$
rounds which types have appeared so far, sublinear regret returns, at
$\tilde O(n\Kmax^{4/3}T^{2/3}+\Kmax B)$. The first term is the rate of the
partial-information learner we use, and we show that the additive cost of
the delay cannot be avoided.
\end{abstract}

\section{Introduction}
\label{sec:intro}

In a Stackelberg security game (SSG), a defender allocates a limited set of
resources to protect $n$ targets against a strategic attacker who observes
the defender's randomized deployment and then attacks the target it likes
best \citep{tambe2012}. SSGs underlie several deployed systems: patrols of
the US Coast Guard in ports \citep{shieh2012}, the scheduling of Federal Air
Marshals on flights \citep{tsai2009}, checkpoints and canine patrols at Los
Angeles International Airport \citep{pita2008}, and ranger patrols against
poaching in national parks \citep{yang2014,fang2016}. In each of them the
defender commits to its deployment before any attack, and the attacker's
payoffs encode which targets it values and how much it fears being caught.

The classical analysis assumes that the defender knows the attacker's
payoffs; the optimal commitment is then the solution of a linear program
\citep{conitzer2006}. In practice the defender faces many attackers whose
payoffs it knows only roughly. \citet{balcan2015} address this through
online learning. The attacker of each round is one of a finite set $K$ of
\emph{types}, chosen by an adversary, and the defender seeks low
\emph{regret}, that is, a total payoff close to that of the best fixed
deployment in hindsight. If the defender learns the type after each round,
regret $\tilde O(n\sqrt{|K|T})$ is achievable; if it sees only which target
was attacked, $\tilde O(n|K|T^{2/3})$ is, and \citet{bernasconi2023} later
improved this to $\tilde O(\mathrm{poly}(|K|)\sqrt T)$. If the number of types is
unrestricted, regret is linear in the worst case.

These guarantees require the defender to know the set $K$ in advance, with
the payoffs of every type in it. This is rarely the case. A new group of
poachers may enter a park and target different species; a new smuggling
network may favor different routes through a port; in cyber defense, new
threat actors with new objectives appear routinely. What the defender
can reasonably assume is that only a limited number of such adversaries
are active, without knowing who they are or exactly how many. The
information the defender receives also varies across these settings. After
an incident it may identify who attacked, as when forensic evidence or
captured equipment reveals the poacher group. It may learn only which
target was hit, as when an intrusion is detected but not attributed. Or it
may receive periodic intelligence reports listing the groups known to be
active, without saying which of them carried out which attack. This paper
asks:
\begin{center}
\emph{Can the defender achieve sublinear regret when the set of attacker
types is unknown and only its size is bounded, and how does the answer
depend on what the defender observes?}
\end{center}
The question lies between the two regimes of \citet{balcan2015}: a known
set of types, where sublinear regret is possible, and an unrestricted
number of types, where it is not.

\subsection{Our contributions}
\label{sec:contrib}

We answer the question for each of the three kinds of feedback above. The
answers share one message: a bound on the \emph{number} of types helps
only if the defender can learn \emph{which} types it faces. Throughout,
$\Kmax$ is an upper bound on the number $|K|$ of types, and the defender is
told neither the set $K$ nor $\Kmax$.

\begin{itemize}[leftmargin=1.1em, labelsep=0.4em]
\item \emph{Observing the type: no-regret at the known-type rate}
(\Cref{sec:unknownK}). Our algorithm \textsc{Grow-Catalog}
(\Cref{alg:gc}) keeps a list of the types seen so far and runs the
known-type learner of \citet{balcan2015} on it, restarting the learner
whenever a new type appears. \Cref{thm:main} shows that its regret is
$O\bigl(n\Kmax\sqrt{T\log(n\Kmax)}\bigr)$. This is the first no-regret
guarantee when an adversary chooses the types from an unknown set. It has the same $\sqrt T$
dependence as the known-type bound, and the price of not knowing the types
is a factor $\sqrt{\Kmax}$. The algorithm never uses $\Kmax$, and its regret
scales with the number of types that actually appear (\Cref{cor:adaptive}).
The $\sqrt T$ rate cannot be improved: as in any expert problem, even with
two known types and full information, regret $\Omega(\sqrt T)$ is
unavoidable (\Cref{prop:sqrtT}).

\item \emph{Observing only the target: impossibility, even with one type}
(\Cref{sec:partial}). \Cref{thm:impossible} shows that when the defender
sees only the attacked target and the types come from an infinite universe,
every algorithm, randomized or not, suffers regret at least $T-2$ on a fixed
three-target game, even when a single type attacks in every round. The
linear lower bound of \citet{balcan2015} needs a number of types that grows
with $T$, even when the type is observed; ours needs one type, fixed before
the game starts. The obstacle is
therefore not how many types there are, but that the defender cannot
identify the one it faces. The types form a continuum, and each coverage
vector catches at most one of them, so the defender must find a good
deployment by exact-match search. Each round reveals at most one bit about
the type, and the counting argument of \Cref{lem:count} shows that this is
too little to win such a search in time. The result also differs from the
impossibility of \citet{bollini2026}, where the follower's type is drawn
from a distribution; here the attacker is one fixed type.

\item \emph{Periodic lists of types: sublinear regret again}
(\Cref{sec:block}). Suppose the defender sees only the attacked target, but
every $B$ rounds it is told the set of types that have attacked so far,
not when or how often. Our algorithm \textsc{Block-Grow-Catalog}
(\Cref{alg:bgc}) restores sublinear regret,
$O\bigl(n\Kmax^{4/3}T^{2/3}\log^{1/3}(n\Kmax)+\Kmax B\bigr)$
(\Cref{thm:block}). The key tool is a partial-information learner with an
anytime guarantee (\Cref{lem:anytime}), built with the doubling trick
\citep{auer1995,cesa2006}; we need it because the defender does not know how
long each epoch will last. The $T^{2/3}$ rate is that of
this learner, built on the partial-information learner of
\citet{balcan2015}, and our analysis uses it as a black box; any anytime
learner with a better rate for known types gives a correspondingly better
bound. The additive term is the cost of the delay, and it is tight:
\Cref{thm:blocklb} shows that every algorithm must pay $\Kmax(B-2)$.
Reports are needed rarely: $O(T^{1/3})$ of them over the whole horizon
leave the rate unchanged (\Cref{cor:B}).
\end{itemize}
Experiments in \Cref{sec:exp} illustrate the behavior predicted by the
analysis. \Cref{tab:landscape} places these results among the known ones.

\begin{table}[!ht]
\centering
\footnotesize
\renewcommand{\arraystretch}{1.2}
\begin{tabular}{@{}>{\raggedright\arraybackslash}p{3.0cm}p{3.8cm}>{\raggedright\arraybackslash}p{2.8cm}>{\raggedright\arraybackslash}p{5.6cm}@{}}
\toprule
Set of types & Regret & Result & What is new\\
\midrule
\multicolumn{4}{@{}l}{\emph{The defender observes the type of the attacker}}\\
known                     & $\tilde O(n\sqrt{|K|T})$ & \citet[Thm.~5.1]{balcan2015} & \\
known, $|K|$ unrestricted & $\Omega(T)$              & \citet[Thm.~7.1]{balcan2015} & \\
known, $|K|=2$            & $\Omega(\sqrt T)$        & \Cref{prop:sqrtT}            & standard expert lower bound, here in an SSG\\
\rowcolor{black!8}
\textbf{unknown}, $|K|\le\Kmax$ & $\tilde O(n\Kmax\sqrt{T})$ & \Cref{thm:main,cor:adaptive} & first bound for adversarial unknown types; $\Kmax$ not used; adapts to the types that appear\\
\midrule
\multicolumn{4}{@{}l}{\emph{The defender observes only the attacked target}}\\
known                     & $\tilde O(n|K|T^{2/3})$  & \citet[Thm.~6.1]{balcan2015} & \\
known                     & $\tilde O(\mathrm{poly}(|K|)\sqrt{T})$ & \citet{bernasconi2023} & \\
\rowcolor{black!8}
\textbf{unknown}, even $|K|=1$  & $\Omega(T)$  & \Cref{thm:impossible} & one fixed type suffices; \citet[Thm.~7.1]{balcan2015} need $|K|$ growing with $T$ even when types are observed\\
\midrule
\multicolumn{4}{@{}l}{\emph{The defender observes the attacked target, and every $B$ rounds the types seen so far}}\\
\rowcolor{black!8}
\textbf{unknown}, $|K|\le\Kmax$ & $\tilde O(n\Kmax^{4/3}T^{2/3}+\Kmax B)$ & \Cref{thm:block} & sublinear for $B=o(T)$; works with any anytime learner (\Cref{lem:anytime})\\
\rowcolor{black!8}
\textbf{unknown}, $|K|\le\Kmax$ & $\Omega(\Kmax B)$ & \Cref{thm:blocklb} & the delay cost is tight; $O(T^{1/3})$ reports suffice\\
\bottomrule
\end{tabular}
\caption{Regret in the repeated SSG, grouped by what the defender observes.
$n$: targets, $T$: rounds, $\Kmax$: upper bound on $|K|$, not known to the
defender; $\tilde O$ hides logarithmic factors; $\Omega$ bounds hold for
every algorithm. Shaded rows are the results of this paper.}
\label{tab:landscape}
\end{table}

\subsection{Related work}
\label{sec:related}

Computing the optimal commitment against a single known attacker is a
linear program \citep{conitzer2006}, and compact formulations scale it to
large deployments \citep{kiekintveld2009,korzhyk2010}. Uncertainty about the
attacker has been handled by Bayesian SSGs, which optimize against a known
prior over types \citep{paruchuri2008,kiekintveld2011}, by robust
formulations over payoff intervals \citep{pita2010,kiekintveld2013}, and by
behavioral models fitted to past attacks \citep{nguyen2013,haghtalab2016}.
Each of these assumes a known population of attackers, given as a prior, an
uncertainty set or a parametric model.

A second line learns the optimal commitment against one fixed but unknown
attacker from its best responses \citep{letchford2009,blum2014b,peng2019},
or from revealed preferences \citep{roth2016,dong2018}. The number of
queries these methods need depends on the geometry of the attacker's
best-response regions. This is consistent with our impossibility result,
whose adversary chooses a type with an exponentially thin region
(\Cref{sec:impossibility}).

Online learning against a sequence of attackers was introduced by
\citet{balcan2015}, whose setting we build on: an oblivious adversary picks
each round's type from a finite set that the defender knows, together with
every type's payoffs. They obtain $\tilde O(n\sqrt{|K|T})$ regret under full
information and $\tilde O(n|K|T^{2/3})$ under partial information.
\citet{bernasconi2023} improve the partial-information rate to
$\tilde O(\mathrm{poly}(|K|)\sqrt T)$ through a reduction to online linear
optimization, and \citet{harris2024side} and \citet{balcan2026} extend the
setting to side information observed by both players, reaching
$\tilde O(\sqrt T)$ under bandit feedback. \citet{chen2026} allow several
attackers and varying attack intensities in each round. Other work studies
attackers that are not myopic \citep{haghtalab2022}, or combines bandit
algorithms with equilibrium computation when the defender has no prior
knowledge \citep{xu2016,klima2015,marecki2012}. Every upper bound in this
line requires the set of types, and their payoffs, to be known before the
first round; our results remove that requirement.

The work closest to ours is that of \citet{bollini2026}, who study Bayesian
Stackelberg games in which the leader knows neither the follower's payoffs
nor the distribution of its types. They show that no-regret learning is
impossible when the leader observes only the follower's action, and give an
$\tilde O(\sqrt T)$ algorithm when it also observes the type.
\citet{personnat2026} study several followers whose utilities are known but
whose type distribution is not. Both papers draw types i.i.d.\ from a fixed
distribution, whereas our types form an adversarial sequence and our
benchmark is the best fixed strategy on that sequence. Our impossibility
result holds in a security game with a single type, where theirs uses three
types in a general Stackelberg game; theirs, in turn, needs only polynomial
precision in the payoffs, where ours needs precision exponential in $T$.
Neither paper considers revelation of types after a delay.

Our algorithms run an expert algorithm
\citep{littlestone1994,freund1997,cesabianchi2007,cesa2006} over a set of
experts that grows when a new type appears. Expert sets that change over
time have been studied as sleeping experts
\citep{freund1997b,kleinberg2010}, growing sets \citep{mourtada2017} and
dying experts \citep{shayestehmanesh2019}. There the learner is told when
the set changes, and the losses of existing experts do not. In our setting
a new type changes the loss of every existing expert, since an expert is a
coverage vector and its payoff depends on how the new type responds to it,
so we restart the learner instead. Under block revelation the defender
learns of a new type only after a delay, which relates our additive term to
online learning with delayed feedback
\citep{weinberger2002,joulani2013,quanrud2015}; there the delay applies to
the losses, here to the identity of the attacker.

\section{Preliminaries}
\label{sec:model}

A defender protects $n$ targets with limited resources against a sequence of
attackers over $T$ rounds. In each round the defender commits to a randomized
allocation of its resources; the attacker of that round sees the
allocation, but not its random outcome, and attacks the target it likes
best; both players then receive payoffs that depend on whether the attacked
target was covered. Attackers differ in their payoffs, and each attacker is
one of finitely many \emph{types}. The first four paragraphs below define
this model, and the last recalls how the defender learns when the set of
types is known.

\paragraph{Game primitives.}
There are $n$ targets $N=\{1,\dots,n\}$. The defender has $R$ resources and a
collection $\cD\subseteq 2^N$ of \emph{schedules}, each a set of
targets that one resource can protect at the same time. An
\emph{assignment function} $A:R\to 2^{\cD}$ lists the schedules
each resource may take. A \emph{pure strategy} assigns every resource to one
of its allowed schedules; let $m$ be the number of pure strategies. A
\emph{mixed strategy} is a distribution $\svec\in\Delta_m$ over pure
strategies. Its \emph{coverage vector} is $\pvec=M\svec\in[0,1]^n$, where
$M\in\{0,1\}^{n\times m}$ has $M_{ij}=1$ if and only if the $j$-th pure
strategy covers target $i$, so $p_i$ is the probability that target $i$ is
covered. Let $\cP$ be the set of coverage vectors of all mixed strategies.
It is the convex hull of the coverage vectors of the pure strategies, and
hence a convex polytope. For example, with $n=3$ targets and one guard who
can stand at any single target, the pure strategies are the three targets,
and $\cP=\{\pvec\in[0,1]^3:p_1+p_2+p_3=1\}$. Payoffs and the attacker's
behavior depend on a mixed strategy only through its coverage vector, so
from now on the defender is taken to choose $\pvec\in\cP$ directly.

\paragraph{Defender payoffs.}
For each target $i$, the defender receives $u_d^c(i)\in[-1,1]$ if $i$ is
attacked and covered, and $u_d^u(i)\in[-1,1]$ if $i$ is attacked and
uncovered. Averaging over the defender's own randomization, under which $i$
is covered with probability $p_i$, the defender's utility when target $i$ is
attacked is
\begin{equation}
U_d(i,\pvec)\;=\;u_d^c(i)\,p_i \;+\; u_d^u(i)\,(1-p_i),
\label{eq:Ud}
\end{equation}
which is linear in $\pvec$ for each fixed target $i$.

\paragraph{Attacker types and best response.}
An attacker \emph{type} $\alpha$ is a pair
$(u_\alpha^c,u_\alpha^u)\in[-1,1]^n\times[-1,1]^n$ of per-target payoffs,
for a covered and an uncovered attack respectively. Its utility for
attacking target $i$ is $U_\alpha(i,\pvec)=u_\alpha^c(i)\,p_i+u_\alpha^u(i)\,(1-p_i)$.
Seeing $\pvec$, type $\alpha$ attacks a target in
$\argmax_{i\in N}U_\alpha(i,\pvec)$, breaking ties by a fixed order over the
targets, which may depend on the type; we write $b_\alpha(\pvec)$ for the
target it attacks. The map $b_\alpha$ is piecewise constant, so the
defender's payoff $U_d(b_\alpha(\pvec),\pvec)$ jumps where the attacker
switches targets. For example, with one guard and two targets, an attacker
who values both targets equally attacks whichever is less covered, and the
defender's payoff jumps as $p_1$ crosses $1/2$. These jumps are why standard
methods for online linear optimization do not apply directly, and the last
paragraph of this section explains how they are handled.

\paragraph{Repeated game and regret.}
There is a set $K$ of attacker types; throughout, $|K|$ denotes its size.
An \emph{oblivious} adversary fixes a sequence
$\avec=(a_1,\dots,a_T)\in K^T$ before the game starts, and the defender does
not know it. In round $t=1,\dots,T$ the defender commits to
$\pvec_t\in\cP$, the attacker $a_t$ attacks $b_{a_t}(\pvec_t)$, and the
defender receives $U_d(b_{a_t}(\pvec_t),\pvec_t)$. Under
\emph{full-information feedback}, at the end of round $t$ the defender
learns the type $a_t$, and the first time a type is seen its payoffs
$(u_\alpha^c,u_\alpha^u)$ are revealed as well. The benchmark is the best
fixed coverage vector in hindsight,\footnote{Because of tie-breaking the
maximum in \Cref{eq:pstar} may not be attained; as in
\citet[footnote~1]{balcan2015}, $\pvec^{*}$ then denotes a coverage vector
whose total payoff is arbitrarily close to the supremum.}
\begin{equation}
\pvec^{*}\;=\;\argmax_{\pvec\in\cP}\;\sum_{t=1}^T U_d\bigl(b_{a_t}(\pvec),\pvec\bigr),
\label{eq:pstar}
\end{equation}
and the regret of a defender that plays $\pvec_1,\dots,\pvec_T$ is
\begin{equation}
\Regret(T)\;=\;\sum_{t=1}^T\!U_d(b_{a_t}(\pvec^{*}),\pvec^{*})\;-\;\EE\!\bigg[\sum_{t=1}^T\!U_d(b_{a_t}(\pvec_t),\pvec_t)\bigg],
\label{eq:regret}
\end{equation}
where the expectation is over the defender's internal randomization. Up to
this point the model is that of \citet[Sec.~2]{balcan2015}, with the same
notation; the difference, studied in the rest of the paper, is that the
defender does not know $K$.

\paragraph{Learning with a known type set.}
\label{sec:polytopes}
Suppose for the moment that the defender knows a finite set $\Theta$ of types
containing every attacker. \citet{balcan2015} solve this case, and our
algorithms use their solution as a subroutine, with $\Theta$ the set of types
seen so far. It combines two ingredients.

\emph{An expert algorithm.} An \emph{expert} is a fixed action that the
learner may choose; here it will be a fixed coverage vector. An
\emph{expert algorithm} keeps a weight on each of finitely many experts,
plays an expert drawn with probability proportional to its weight, then
sees the loss of every expert and lowers the weights of those that did
badly. Polynomial Weights \citep{cesabianchi2007}, for instance, multiplies
each weight by $1-\eta\cdot\text{loss}$ for a learning rate $\eta$.

\begin{proposition}[{\citet[Proposition~1]{balcan2015}, after \citet{blum2007}; anytime form by the doubling trick \citep[Sec.~2.3]{cesa2006}}]
\label{prop:hedge}
Consider a learner that in each round chooses a distribution over a finite
set $\cM$ of experts, after which every expert receives a loss in
$[-\kappa,\kappa]$. There is an expert algorithm whose expected regret after
$T'$ rounds, against the best fixed expert in hindsight, is
$O\bigl(\sqrt{T'\kappa\log|\cM|}\bigr)$ for every $T'$, and which is not told
$T'$.
\end{proposition}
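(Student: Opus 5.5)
The plan is to take a fixed-horizon expert algorithm with the stated guarantee and make it anytime with the doubling trick. For the fixed-horizon ingredient I will use Polynomial Weights (or Hedge) on losses rescaled to $[0,1]$ via $\ell\mapsto(\ell+\kappa)/(2\kappa)$. This rescaling does not change which expert is best in hindsight, and a regret bound on the rescaled losses multiplies back by $2\kappa$. With learning rate tuned to a known horizon $H$, the standard potential argument gives expected regret $O(\sqrt{H\log|\cM|})$ on the rescaled scale. Scaling back, and keeping the $\kappa$-dependence in the form used by \citet[Proposition~1]{balcan2015}, I get $O(\sqrt{H\kappa\log|\cM|})$. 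I will invoke that proposition, or the bound of \citet{cesabianchi2007}, as a black box rather than redo the potential analysis. The one point to check is that their parameterization of $\kappa$ matches the one in the statement; if it does not, the rescaling gives the bound with $\kappa$ in place of $\sqrt\kappa$. That discrepancy would be harmless downstream, because $\kappa$ is a constant here (losses lie in $[-1,1]$).

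The learner is not told $T'$, so I split time into epochs $r=0,1,2,\dots$. Epoch $r$ covers rounds $2^r,\dots,2^{r+1}-1$ and so has length $H_r=2^r$. At the start of each epoch I restart the fixed-horizon algorithm from uniform weights, tuned for horizon $H_r$. Now fix an arbitrary $T'$. The rounds $1,\dots,T'$ are covered by epochs $0,\dots,R$ with $R=\lfloor\log_2 T'\rfloor$, and the last epoch may be truncated. The fixed-horizon bound also holds for any prefix of its horizon, since the potential argument bounds regret at every round $\le H$ by the same quantity. So the regret within epoch $r$, against the best expert for that epoch, is at most $c\sqrt{2^r\kappa\log|\cM|}$.

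Next I combine the epochs. The best fixed expert over all of $1,\dots,T'$ is one particular expert. In each epoch its cumulative loss is at least that of the best expert for that epoch. Hence the total regret is at most the sum of the per-epoch regrets. This sum is geometric:
\[
\sum_{r=0}^{R} c\sqrt{2^r\kappa\log|\cM|}\;\le\;\frac{c\sqrt2}{\sqrt2-1}\,\sqrt{2^{R}\kappa\log|\cM|}\;\le\;\frac{c\sqrt2}{\sqrt2-1}\,\sqrt{T'\kappa\log|\cM|},
\]
where the last step uses $2^R\le T'$. The expectation over the learner's randomization passes through the sum by linearity. The adversary is oblivious, so the losses do not depend on the learner's draws and the per-epoch bounds apply to expected regret directly.

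The main obstacle is the prefix property used for the truncated last epoch. A version of the fixed-horizon bound stated only at time exactly $H$ would not suffice, because the final epoch may end early. I would verify it from the potential argument: the bound $\ln W_{t}$ versus the best expert's loss, together with the per-round second-order term, holds at each $t\le H$ with the same tuned $\eta$. The result is regret $\le \ln|\cM|/\eta+\eta t\le\ln|\cM|/\eta+\eta H$ at every $t\le H$. Everything else is bookkeeping.
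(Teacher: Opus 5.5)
The doubling-trick part of your argument is sound, and it is the route the paper cites. Restarting on epochs of length $2^r$, using that the tuned bound holds at every prefix $t\le H$, using subadditivity of regret across epochs, and summing the geometric series with $2^R\le T'$ are all correct. The gap is the dependence on $\kappa$. Your own rescaling $\ell\mapsto(\ell+\kappa)/(2\kappa)$ yields $O(\kappa\sqrt{H\log|\cM|})$. The sentence that turns this into $O(\sqrt{H\kappa\log|\cM|})$ by matching the form in \citet{balcan2015} is an appeal to a citation, not a derivation. Under the stated hypotheses no derivation exists. Suppose an algorithm had regret at most $f$ on every sequence of losses in $[-\kappa,\kappa]$. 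Feeding it $\kappa$ times the losses of a $[-1,1]$ instance would then give regret at most $f/\kappa$ on that instance. The $\Omega(\sqrt{T'})$ lower bound for two experts with losses in $[-1,1]$ therefore forces $f=\Omega(\kappa\sqrt{T'})$, which is not $O(\sqrt{T'\kappa\log|\cM|})$ with a constant independent of $\kappa$. A $\sqrt\kappa$ rate needs structure the statement does not assume. One example: nonnegative losses in $[0,\kappa]$ that are conditionally unbiased estimates of true losses in $[0,1]$, with regret measured on the true losses. In that case the first-order bound of Polynomial Weights, $\hat L_{\mathrm{alg}}\le(1+\eta)\hat L_{\min}+\kappa\ln|\cM|/\eta$, combined with $\EE[\hat L_{\min}]\le\min_k\EE[\hat L_k]\le T'$, does give $O(\sqrt{T'\kappa\log|\cM|})$.

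Your fallback, that the discrepancy is harmless because $\kappa$ is a constant, holds for \Cref{lem:known}, where $\kappa=1$. It fails for the proof of \Cref{lem:anytime}, which applies this proposition to the background learner with $\kappa=n|C|$, the range of the estimates $\hat c_\tau$. With linear $\kappa$ the background-learner term there becomes $G\,R(Z)=O\bigl(Hn^2|C|^{3/2}\log^{1/2}(n|C|)/\sqrt Z\bigr)$. Rebalancing against the exploration cost $2|C|Z$ then gives order $H^{2/3}(n|C|)^{4/3}\log^{1/3}(n|C|)$ instead of $H^{2/3}n|C|\log^{1/3}(n|C|)$. Moreover, the $\hat c_\tau$ are signed, so the nonnegative-estimate argument above does not apply to them directly. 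What your proposal actually establishes is the anytime bound $O(\kappa\sqrt{T'\log|\cM|})$, which coincides with the statement only when $\kappa=O(1)$. To claim more, you would have to add hypotheses on the losses and use them in the fixed-horizon step itself, not only in the citation.
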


\emph{A finite set of experts.} \Cref{prop:hedge} cannot use every coverage
vector as an expert, since $\cP$ is infinite. For a \emph{response profile}
$\sigma:\Theta\to N$, which names a target for every type, let
\begin{equation}
\cP_\sigma(\Theta)\;=\;\{\pvec\in\cP:\ b_\alpha(\pvec)=\sigma(\alpha)\text{ for every }\alpha\in\Theta\}
\label{eq:Psigma}
\end{equation}
be the region where every type attacks the target $\sigma$ names. It is cut
out of $\cP$ by linear inequalities, so it is a convex polytope, and the
nonempty regions partition $\cP$. Inside one region every attacker's
target is fixed, so by \Cref{eq:Ud} the defender's total payoff is linear in
$\pvec$ and is maximized at a corner. The corners of all regions would
therefore be enough experts, except that tie-breaking makes some defining
inequalities strict: a corner may lie just outside its region, where some
attacker already switches target. Each such corner is replaced by a point of
its region within $\ell_1$-distance $\epsilon$ of it. The resulting finite
set $\cE(\Theta;\epsilon)$ of \emph{$\epsilon$-approximate extreme points}
loses little and is small.

\begin{lemma}[{\citet[Lemma 4.3]{balcan2015}}]
\label{lem:43}
For every finite $\Theta$, every $\epsilon>0$ and every sequence
$(a'_1,\dots,a'_{T'})\in\Theta^{T'}$,
\[
\max_{\pvec\in\cE(\Theta;\epsilon)}\!\sum_{t=1}^{T'}\!U_d\bigl(b_{a'_t}(\pvec),\pvec\bigr)\;\ge\;\sup_{\pvec\in\cP}\!\sum_{t=1}^{T'}\!U_d\bigl(b_{a'_t}(\pvec),\pvec\bigr)-2\epsilon\,T'.
\]
\end{lemma}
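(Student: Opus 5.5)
The plan is to reduce the supremum over the infinite set $\cP$ to a maximum over finitely many convex regions, and then to a maximum over their corners, paying $2\epsilon$ per round only for the corners replaced because of tie-breaking. Fix $\Theta$, $\epsilon$ and the sequence $(a'_1,\dots,a'_{T'})$, and write $F(\pvec)=\sum_{t=1}^{T'}U_d(b_{a'_t}(\pvec),\pvec)$. The nonempty regions $\cP_\sigma(\Theta)$ of \Cref{eq:Psigma} partition $\cP$, and there are at most $n^{|\Theta|}$ of them. Hence
\[
\sup_{\pvec\in\cP}F(\pvec)=\max_\sigma\sup_{\pvec\in\cP_\sigma(\Theta)}F(\pvec),
\]
and it suffices to show, for each nonempty region, that some $\pvec'\in\cE(\Theta;\epsilon)$ satisfies $F(\pvec')\ge\sup_{\cP_\sigma(\Theta)}F-2\epsilon T'$.

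Fix $\sigma$. On $\cP_\sigma(\Theta)$ the attacked target in round $t$ is $\sigma(a'_t)$, so there $F$ coincides with the affine function
\[
L_\sigma(\pvec)=\sum_t\bigl[u_d^c(\sigma(a'_t))p_{\sigma(a'_t)}+u_d^u(\sigma(a'_t))(1-p_{\sigma(a'_t)})\bigr].
\]
This formula is defined on all of $\RR^n$. The region is a polytope cut out by finitely many linear inequalities, some of them strict because of tie-breaking. Its closure $\overline{\cP_\sigma}$ is therefore a compact polytope, assuming the region is nonempty, as the lemma only concerns nonempty regions. Since $L_\sigma$ is continuous, the supremum of $F$ over the region equals $\max_{\overline{\cP_\sigma}}L_\sigma$. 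By linearity this maximum is attained at a vertex $\mathbf v$ of $\overline{\cP_\sigma}$. These closure vertices are exactly the corners from which $\cE(\Theta;\epsilon)$ is built.

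There are two cases for $\mathbf v$. If $\mathbf v\in\cP_\sigma(\Theta)$, then $\mathbf v\in\cE$ and $F(\mathbf v)=L_\sigma(\mathbf v)$, so there is no loss. Otherwise $\cE$ contains a replacement $\pvec'\in\cP_\sigma(\Theta)$ with $\|\pvec'-\mathbf v\|_1\le\epsilon$, and $F(\pvec')=L_\sigma(\pvec')$. Each summand of $L_\sigma$ depends on a single coordinate with slope $u_d^c-u_d^u\in[-2,2]$, so each summand changes by at most $2|p'_i-v_i|\le2\epsilon$. Summing over the $T'$ rounds gives $L_\sigma(\pvec')\ge L_\sigma(\mathbf v)-2\epsilon T'$. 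Taking the maximum over $\sigma$ finishes the proof.

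The main obstacle is the geometric justification behind the replacement step: a closure vertex $\mathbf v$ that lies outside the region must still have a point of $\cP_\sigma(\Theta)$ within $\ell_1$-distance $\epsilon$. This holds because $\mathbf v$ is in the closure of the nonempty convex set $\cP_\sigma(\Theta)$. Points $\mathbf v+\lambda(\mathbf q-\mathbf v)$, with $\mathbf q$ in the region and $\lambda>0$ small, lie in the region by convexity, since a segment from a closure point to an interior-side point stays inside the strict inequalities. The one convention to fix is that the construction of $\cE$ uses closure vertices of each region.
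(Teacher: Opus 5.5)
Your proof is correct and takes the same route the paper sketches before citing \citet[Lemma 4.3]{balcan2015}: the total payoff is linear on each region $\cP_\sigma(\Theta)$, so its supremum is attained at a corner of the closed region, and a corner cut off by a strict tie-breaking inequality is replaced by a point of the region within $\ell_1$-distance $\epsilon$. Your explicit per-round Lipschitz bound, $|(u_d^c(i)-u_d^u(i))(p'_i-v_i)|\le 2\epsilon$, together with the closure argument that guarantees such a replacement point exists, supplies exactly the details the paper leaves implicit and yields the $2\epsilon T'$ loss.
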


\begin{lemma}[{\citet[Lemma 4.4]{balcan2015}}]
\label{lem:44}
For every finite $\Theta$ and every $\epsilon>0$,
$\log|\cE(\Theta;\epsilon)|=O\bigl(n^2|\Theta|\log(n|\Theta|)\bigr)$, a
bound that does not depend on $\epsilon$.
\end{lemma}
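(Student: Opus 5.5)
The plan is to bound $|\cE(\Theta;\epsilon)|$ by (number of nonempty regions) $\times$ (number of corners per region). Each corner, whether it lies in its region or is replaced by a nearby point, contributes exactly one element of $\cE(\Theta;\epsilon)$. So $\epsilon$ only affects where each chosen point sits, not how many points there are, and the resulting bound will not depend on $\epsilon$.

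\emph{Step 1: count response profiles.} A profile is a map $\sigma:\Theta\to N$, so there are at most $n^{|\Theta|}$ regions $\cP_\sigma(\Theta)$. This contributes $|\Theta|\log n$ to the logarithm.

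\emph{Step 2: describe each region by few hyperplanes.} The closure of $\cP_\sigma(\Theta)$ is cut out of $\cP$ by the inequalities $U_\alpha(\sigma(\alpha),\pvec)\ge U_\alpha(j,\pvec)$ for $\alpha\in\Theta$ and $j\ne\sigma(\alpha)$. Each is linear in $\pvec$ because $U_\alpha(i,\cdot)$ is affine, so each region adds at most $(n-1)|\Theta|$ hyperplanes. Across all regions, the relevant hyperplanes lie in the fixed family of indifference hyperplanes $\{U_\alpha(i,\pvec)=U_\alpha(j,\pvec)\}$, of which there are at most $n^2|\Theta|$. To these we add the defining inequalities of $\cP$. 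Here I would use, as in the setting of \citet{balcan2015}, that $\cP\subseteq[0,1]^n$ is described by a number $f$ of inequalities that we may take to be polynomial in $n$. For example, $f=2n+1$ in the one-guard example.

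\emph{Step 3: count corners.} Every corner of a region in $\RR^n$ is the unique solution of $n$ linearly independent tight constraints drawn from the family above, of size $H\le n^2|\Theta|+f$. Hence the total number of candidate corners is at most $\binom{H}{n}\le H^n$. This gives
\[
\log|\cE(\Theta;\epsilon)|\;\le\;|\Theta|\log n+n\log\bigl(n^2|\Theta|+f\bigr)\;=\;O\bigl(n^2|\Theta|\log(n|\Theta|)\bigr).
\]
In fact a global count of $H^n$ corners already suffices without Step 1, since a corner together with its region determines the element of $\cE$. The product bound is a safe upper estimate either way.

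\emph{Main obstacle.} The delicate point is Step 2's treatment of $\cP$ itself. Its natural description as the convex hull of $m$ pure coverage vectors could involve exponentially many facets. If $f$ were exponential in $n$, then $n\log f$ could exceed the claimed bound. I would first try to show that only facets of $\cP$ meeting the indifference arrangement matter. If that fails, I would fall back on the assumption, implicit in \citet{balcan2015}, that $\log f=O(n|\Theta|\log(n|\Theta|))$, which holds for the schedule polytopes considered there.
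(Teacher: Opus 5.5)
Your scheme (response profiles times corners, with each corner cut out by $n$ independent tight hyperplanes) is the standard count behind \citet[Lemma~4.4]{balcan2015}. The paper imports that lemma without proof. For the one-resource simplex your count is exactly the enumeration over $n$-subsets of hyperplanes timed in \Cref{tab:oracle}. Steps~1--3 are sound as far as the indifference hyperplanes go. The gap is the one you flag, and neither of your remedies closes it.

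In the model of \Cref{sec:model}, a single resource whose allowed schedules form an arbitrary family $\cD\subseteq 2^N$ makes $\cP$ an arbitrary $0/1$-polytope. So its facet count $f$ is not polynomial in $n$ in general: for $n\ge3$, the hull of the even-weight $0/1$ vectors has at least $2^{n-1}$ facets. Restricting to facets that meet the indifference hyperplanes comes with no bound on how many remain.

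What your count actually needs is $\log f=O(n|\Theta|\log(n|\Theta|))$, which for $|\Theta|=1$ means $\log f=O(n\log n)$. The elementary bound $f\le\binom{2^n}{n}$ (each facet is spanned by $n$ of the at most $2^n$ vertices) only gives $n\log f=O(n^3)$, which is too large. Assuming the bound you need is not a proof of the lemma as stated. The bound is in fact true, by the $n^{O(n)}$ bound of Fleiner, Kaibel and Rote on the facets of $0/1$-polytopes, so quoting that theorem would rescue your route.

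There is also an elementary fix: count faces of $\cP$ through its at most $2^n$ vertices, instead of counting facets.
\begin{itemize}
\item Let $v$ be a corner of a closed region $Q_\sigma=\{\pvec\in\cP:U_\alpha(\sigma(\alpha),\pvec)\ge U_\alpha(j,\pvec)\ \text{for all }\alpha,j\}$. Let $F$ be the face of $\cP$ containing $v$ in its relative interior, and set $k=\dim F$.
\item Suppose some direction $d\ne0$ parallel to $\mathrm{aff}(F)$ were also parallel to every indifference hyperplane tight at $v$. Then $v\pm td$ would lie in $F$ and in $Q_\sigma$ for small $t>0$, contradicting that $v$ is a corner.
\item Hence $\{v\}=\mathrm{aff}(F)\cap H_1\cap\dots\cap H_k$ for some $k$ of the tight indifference hyperplanes.
\item Since $F=\cP\cap\mathrm{aff}(F)$ and $\mathrm{aff}(F)$ is spanned by $k+1$ vertices of $\cP$, the number of distinct corners is at most $\sum_{k=0}^{n}\binom{2^n}{k+1}\binom{n^2|\Theta|}{k}\le(n+1)\,2^{n(n+1)}(n^2|\Theta|)^n$. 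Its logarithm is $O(n^2+n\log(n|\Theta|))$.
\item Multiplying by the $n^{|\Theta|}$ profiles gives $\log|\cE(\Theta;\epsilon)|=O(n^2+|\Theta|\log n+n\log(n|\Theta|))$. This is within the claimed bound and does not depend on $\epsilon$.
\end{itemize}

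Do keep the profile factor. Your aside that a global corner count suffices without Step~1 does not follow: a corner lying outside several regions receives a separate replacement point in each of them.
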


\emph{Putting them together.} Under full information the defender knows
$a_t$ and the payoffs of every type in $\Theta$, so it can compute the loss
$\ell_t(\pvec)=-U_d(b_{a_t}(\pvec),\pvec)$ of every expert in
$\cE(\Theta;\epsilon)$. Running an expert algorithm on
$\cE(\Theta;\epsilon)$ with these losses is \citet[Theorem~5.1]{balcan2015}.
We use it with two changes, both needed later when the learner is
restarted at times the defender cannot foresee. First, since the size of
$\cE(\Theta;\epsilon)$ does not depend on $\epsilon$ (\Cref{lem:44}),
$\epsilon$ can be made as small as we like at no cost; we fix $\epsilon=1/T$
from the global horizon, so that $L$ rounds lose at most $2L/T$ to the
approximation, and any collection of disjoint stretches of rounds loses at
most $2$ in total. Second, by \Cref{prop:hedge} the bound holds after every
number of rounds $L$, without $L$ being supplied.

\begin{lemma}[{Known type set; adapted from \citet[Theorem~5.1]{balcan2015}}]
\label{lem:known}
Let $\Theta$ be a finite set of types, $\epsilon=1/T$, and
$\mathcal{A}(\Theta)$ an algorithm satisfying \Cref{prop:hedge}, run on the
experts $\cE(\Theta;\epsilon)$ with the losses $\ell_t$. There is an
absolute constant $c_1$ such that, for every $L\in\{1,\dots,T\}$ and every
sequence $(a_1,\dots,a_L)\in\Theta^L$ fixed in advance,
\begin{equation}
\sup_{\pvec\in\cP}\sum_{t=1}^{L}U_d\bigl(b_{a_t}(\pvec),\pvec\bigr)
-\EE\Bigl[\sum_{t=1}^{L}U_d\bigl(b_{a_t}(\pvec_t),\pvec_t\bigr)\Bigr]
\;\le\;c_1\,n\sqrt{|\Theta|\,L\,\log\bigl(n|\Theta|\bigr)}\;+\;\frac{2L}{T}.
\label{eq:known}
\end{equation}
The learner $\mathcal{A}(\Theta)$ is not told $L$.
\end{lemma}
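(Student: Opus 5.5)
The plan is to split the left-hand side of \Cref{eq:known} at the best expert, into an approximation part and a learning part. Fix $L$ and the sequence $(a_1,\dots,a_L)\in\Theta^L$, and write $G(\pvec)=\sum_{t=1}^L U_d(b_{a_t}(\pvec),\pvec)$. Then
\[
\sup_{\pvec\in\cP}G(\pvec)-\EE\Bigl[\sum_{t=1}^L U_d(b_{a_t}(\pvec_t),\pvec_t)\Bigr]
=\Bigl(\sup_{\pvec\in\cP}G(\pvec)-\max_{\pvec\in\cE(\Theta;\epsilon)}G(\pvec)\Bigr)
+\Bigl(\max_{\pvec\in\cE(\Theta;\epsilon)}G(\pvec)-\EE\Bigl[\sum_{t=1}^L U_d(b_{a_t}(\pvec_t),\pvec_t)\Bigr]\Bigr).
\]

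\emph{Approximation term.} Apply \Cref{lem:43} with $T'=L$, the sequence $(a_1,\dots,a_L)$ and $\epsilon=1/T$. This bounds the first bracket by $2\epsilon L=2L/T$, which is exactly the additive term in \Cref{eq:known}.

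\emph{Learning term.} The learner $\mathcal{A}(\Theta)$ plays in round $t$ an expert $\pvec_t\in\cE(\Theta;\epsilon)$, drawn from its current distribution. Under full information it knows $a_t$ and the payoffs of every type in $\Theta$, so it can compute $\ell_t(\pvec)=-U_d(b_{a_t}(\pvec),\pvec)$ for every expert. These losses lie in $[-1,1]$, because all payoffs lie in $[-1,1]$ and $U_d$ is a convex combination of them; so $\kappa=1$. The sequence is fixed in advance, so the losses form an oblivious sequence, and the expected loss of the learner is its expected loss against the drawn expert. Hence \Cref{prop:hedge}, which holds for every $T'$ without the learner being told $T'$, applies at $T'=L$. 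It bounds the second bracket by $O\bigl(\sqrt{L\log|\cE(\Theta;\epsilon)|}\bigr)$.

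\emph{Counting experts.} By \Cref{lem:44}, $\log|\cE(\Theta;\epsilon)|\le c\,n^2|\Theta|\log(n|\Theta|)$ with $c$ independent of $\epsilon$. Substituting gives $c_1\,n\sqrt{|\Theta|\,L\log(n|\Theta|)}$, with an absolute constant $c_1$ that does not depend on $T$, even though $\epsilon=1/T$. Adding the two terms yields \Cref{eq:known}. The learner never uses $L$, so this last claim is inherited from \Cref{prop:hedge}.

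The main point to check is that the expected payoff of the algorithm, played at experts, really matches the expert-regret guarantee. One issue is tie-breaking, which is harmless: each expert is a point of its region, and its realized payoff is exactly $-\ell_t$. The other is that the bound must hold uniformly in $L$, which follows from the anytime form of \Cref{prop:hedge} together with $\epsilon$ being fixed from the global horizon rather than from $L$. A small technicality is that $\log(n|\Theta|)$ could vanish when $n|\Theta|=1$. In that case I would take $\log(n|\Theta|)$ to mean $\max\{1,\log(n|\Theta|)\}$, or note that with $n=1$ the regret is trivially zero.
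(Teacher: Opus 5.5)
Your proposal is correct and is the paper's proof: split at the best expert in $\cE(\Theta;\epsilon)$, bound the approximation gap by $2\epsilon L=2L/T$ via \Cref{lem:43}, and bound the learning gap by \Cref{prop:hedge} with $\kappa=1$ combined with \Cref{lem:44}. One correction to your aside on the degenerate case: with $n=1$ and $\cP=[0,1]$ the learner still hedges between two endpoint experts and can have positive regret, so the regret is not trivially zero. Reading the logarithm as $\max\{1,\log(n|\Theta|)\}$ is therefore the right fix, a case the paper's proof does not address.
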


\begin{proof}
The left side of \Cref{eq:known} splits into two gaps. The first is between
the best expert and $\mathcal{A}(\Theta)$. Losses lie in $[-1,1]$, so
\Cref{prop:hedge} with $\kappa=1$ bounds it by
$O\bigl(\sqrt{L\log|\cE(\Theta;\epsilon)|}\bigr)$, which by \Cref{lem:44} is
$O\bigl(n\sqrt{|\Theta|L\log(n|\Theta|)}\bigr)$. The second is between the
supremum over $\cP$ and the best expert, at most $2\epsilon L=2L/T$ by
\Cref{lem:43}. Adding the two gives \Cref{eq:known}.
\end{proof}

\Cref{lem:known} needs every attacker to lie in $\Theta$. When $K$ is
unknown, the defender still knows the types it has seen, and until a new
type appears every attacker is among them. The next section builds an
algorithm on exactly this observation.

\section{Unknown Type Set and Full Information: Sublinear Regret}
\label{sec:unknownK}

This section treats the case in which the set of attacker types is unknown
but the defender learns, after each round, which type attacked.

\begin{problem}[Unknown-$K$ repeated SSG]\label{prob:unknownK}
A repeated SSG as in \Cref{sec:model} is played under the following
conditions.
\begin{enumerate}[leftmargin=1.2em,itemsep=0.05em,topsep=0.2em]
  \item The set $K$ of attacker types is finite, and $|K|\le\Kmax$ for some
  constant $\Kmax\ge1$. Neither $K$ nor $\Kmax$ is revealed to the defender.
  \item Feedback is full information: after round $t$ the defender
  observes $a_t$, and the first time any type $\alpha$ appears, its
  utility vectors $(u_\alpha^c,u_\alpha^u)$ are revealed.
\end{enumerate}
The adversary is oblivious, and the benchmark $\pvec^{*}$ and regret
$\Regret(T)$ are as in \Cref{eq:pstar,eq:regret}.
\end{problem}

\begin{remark}[a bound on $|K|$ is necessary]\label{rem:necessity}
\citet[Theorem 7.1]{balcan2015} exhibit, for every $T$, a game with
$|K|<2^{T+1}$ on which every online algorithm has regret at least $T/2$.
Hence sublinear regret needs some bound on $|K|$ that does not grow with
$T$. \Cref{prob:unknownK} asks only that such a bound exist, not that the
defender know it.
\end{remark}

\paragraph{The \textsc{Grow-Catalog} algorithm.}
\label{sec:alg}
The defender does not know $K$, but at every round it knows which types it
has observed so far, together with their utilities. We call this set the
\emph{catalog} and write $C_t\subseteq K$ for the catalog at round $t$. It is
the only information about the attacker population on which the choice of
$\pvec_t$ can depend, and it is needed because the defender can optimize only
against types it knows. \Cref{alg:gc} treats the catalog as if it were the
full type set. The rounds at which a type appears for the first time cut the
horizon into \emph{epochs}. Within an epoch the catalog does not change and
every attacker that appears belongs to it, so the defender faces a game with a
known type set $\Theta=C_t$ and runs the learner of \Cref{lem:known} on it.
When a new type appears, the catalog grows, the learner is restarted on the
enlarged catalog, and the next epoch begins.

Formally, $\mathcal{A}(C)$ denotes the learner of \Cref{lem:known} with
$\Theta=C$: an expert algorithm satisfying \Cref{prop:hedge}, for
instance Polynomial Weights, run over the expert set
$\cE(C;\epsilon)$ with $\epsilon=1/T$ and the losses
$\ell_t(\pvec)=-U_d(b_{a_t}(\pvec),\pvec)$. Full-information feedback reveals
$a_t$ and the utilities of every type in the catalog, so these losses can be
evaluated at every expert. The accuracy $\epsilon$ is fixed once from the
horizon $T$, which the defender knows, and $\mathcal{A}(C)$ does not need the
length of an epoch in advance; the proof of \Cref{thm:main} uses both facts.
The round of first contact with a new type falls outside that guarantee and
is charged separately in the proof. \Cref{alg:gc} never uses $\Kmax$.

\begin{algorithm}[t]
\caption{\textsc{Grow-Catalog} (full information).}
\label{alg:gc}
\begin{algorithmic}[1]
\small
\Require horizon $T$,
   game primitives, default $\pvec_{\text{def}}\in\cP$
\State $\epsilon\gets 1/T$
   \Comment{fixed once, from the global horizon}
\State $C\gets\emptyset$;~$\mathcal{A}\gets\textsc{null}$
   \Comment{empty catalog, no inner learner yet}
\For{$t=1,\dots,T$}
  \State \textbf{Play} $\pvec_t\gets\pvec_{\text{def}}$ if $C=\emptyset$, else the strategy $\mathcal{A}$ prescribes for round $t$
  \State \textbf{Observe} the attacking type $a_t$
  \If{$a_t\notin C$} \Comment{discovery round: new attacker type}
    \State $C\gets C\cup\{a_t\}$; record $(u_{a_t}^c,u_{a_t}^u)$
    \State recompute $\cE(C;\epsilon)$
       \Comment{the enlarged expert set, \Cref{lem:43}}
    \State $\mathcal{A}\gets\mathcal{A}(C)$
       \Comment{restart the learner of \Cref{lem:known} on $\cE(C;\epsilon)$}
  \Else \Comment{routine round}
    \State pass $a_t$ to $\mathcal{A}$ as its round-$t$ feedback
  \EndIf
\EndFor
\end{algorithmic}
\end{algorithm}

Our main result for full information is that this simple scheme keeps the
$\sqrt T$ rate of the known-type case.

\begin{theorem}[unknown $K$ with full information is learnable]
\label{thm:main}
Let the defender play \Cref{alg:gc} in \Cref{prob:unknownK}. For any
oblivious sequence $\avec\in K^T$,
\begin{equation}
\Regret(T)\;\le\;O\!\Bigl(n\Kmax\sqrt{T\log (n\Kmax)}\Bigr).
\label{eq:main}
\end{equation}
In particular, if $\Kmax=O(1)$ then $\Regret(T)=\tilde O(n\sqrt T)$,
matching the known-type bound of \Cref{lem:known}
\citep[Theorem 5.1]{balcan2015} up to logarithmic factors.
\end{theorem}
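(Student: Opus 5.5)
We will decompose the horizon into epochs and apply \Cref{lem:known} separately on each one. Let $k\le|K|\le\Kmax$ be the number of distinct types that appear in $\avec$. Let $\tau_1<\dots<\tau_k$ be their discovery rounds. These are determined by $\avec$ alone, not by the defender's randomness, since the adversary is oblivious and the catalog depends only on which types have appeared. For $j<k$, epoch $j$ consists of the rounds $\tau_j+1,\dots,\tau_{j+1}-1$, and epoch $k$ runs from $\tau_k+1$ to $T$. Write $L_j$ for the length of epoch $j$, so $\sum_j L_j\le T$. The rounds $1,\dots,\tau_1-1$ are empty, because the first attacker is new at round $1$, so $\tau_1=1$. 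During epoch $j$ the catalog is $C_j=\{a_{\tau_1},\dots,a_{\tau_j}\}$ with $|C_j|=j$, the learner $\mathcal{A}(C_j)$ is freshly started, and every attacker in the epoch lies in $C_j$.

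The first step splits the benchmark. For the fixed comparator $\pvec^{*}$,
\[
\sum_{t=1}^T U_d(b_{a_t}(\pvec^{*}),\pvec^{*})=\sum_{j}\ \sum_{t\in \text{epoch } j}U_d(b_{a_t}(\pvec^{*}),\pvec^{*})+\sum_{j=1}^k U_d(b_{a_{\tau_j}}(\pvec^{*}),\pvec^{*}).
\]
Each inner epoch sum is at most the supremum over $\cP$ for that epoch's subsequence. This is the key point: the sum of per-epoch optima dominates the global fixed optimum, so restarting the learner costs nothing in the comparator. The discovery rounds contribute at most $k$ to the benchmark. The defender's payoff in each discovery round is at least $-1$, so these rounds cost at most $2k\le 2\Kmax$ of regret in total.

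The second step applies \Cref{lem:known} with $\Theta=C_j$, $L=L_j$, and the subsequence of epoch $j$, which is fixed in advance. This gives regret at most $c_1 n\sqrt{j L_j\log(nj)}+2L_j/T$ in epoch $j$. The guarantee is valid even though the learner is not told $L_j$; this is exactly why the anytime form of \Cref{prop:hedge} and the fixed choice $\epsilon=1/T$ matter. The expectation over the defender's randomness in different epochs is handled by linearity, since each epoch's learner is independent of earlier plays, given that the epoch boundaries are deterministic. Summing over epochs, the approximation terms total at most $2$. The main term is bounded by $c_1 n\sqrt{\Kmax\log(n\Kmax)}\sum_{j=1}^k\sqrt{L_j}$. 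By Cauchy--Schwarz, $\sum_j\sqrt{L_j}\le\sqrt{kT}\le\sqrt{\Kmax T}$. Hence the total regret is at most $c_1 n\Kmax\sqrt{T\log(n\Kmax)}+2\Kmax+2$, which is \Cref{eq:main}. The additive terms are absorbed, since $\sqrt{T}\ge1$.

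The step needing the most care is the first one: justifying that the fixed comparator over the whole horizon is dominated by the sum of per-epoch suprema, and that the discovery rounds can be charged separately. In particular, one must handle the fact that $\pvec^{*}$ may be only a near-supremum, per the footnote; any slack $\delta$ can be taken arbitrarily small. One must also check that the learner on a catalog receives only rounds whose attackers lie in that catalog. The algorithm ensures this by not passing the discovery round to the new learner. With these points settled, the rest is a direct summation.
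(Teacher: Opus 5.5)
Your proposal is correct and follows essentially the same route as the paper's proof. The paper also cuts the horizon at the discovery rounds and charges each discovery round at most $2$. It applies \Cref{lem:known} on each epoch against the per-epoch supremum, which dominates $\pvec^{*}$, and sums the $\sqrt{L_j}$ terms by Cauchy--Schwarz. The paper arrives at the same bound $c_1 n|K'|\sqrt{T\log(n|K'|)}+2|K'|+2$.
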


\begin{proof}
\emph{Idea.} Cut the horizon at the rounds where a type attacks for the
first time. Between two cuts the catalog is fixed and contains every
attacker, so the defender faces the known-type problem and \Cref{lem:known}
bounds its regret. Each cut round is charged the largest possible loss, and
the Cauchy--Schwarz inequality adds the per-epoch bounds.

Fix an oblivious sequence $\avec=(a_1,\dots,a_T)\in K^T$, and let
$K'=\{a_1,\dots,a_T\}\subseteq K$ be the set of types that appear in it, so
that $|K'|\le|K|\le\Kmax$. For a set of rounds $S\subseteq[T]$ write
\[
R(S)\;=\;\sum_{t\in S}U_d(b_{a_t}(\pvec^{*}),\pvec^{*})
-\EE\Bigl[\sum_{t\in S}U_d(b_{a_t}(\pvec_t),\pvec_t)\Bigr]
\]
for the regret accumulated on $S$ against the benchmark $\pvec^{*}$ of
\Cref{eq:pstar}. Then $R([T])=\Regret(T)$, and $R$ adds up over disjoint
sets of rounds.

\emph{Epochs.} Let $\tau_1<\dots<\tau_{|K'|}$ be the \emph{discovery rounds},
the rounds $t$ with $a_t\notin\{a_1,\dots,a_{t-1}\}$, and set
$\tau_{|K'|+1}=T+1$. The $i$-th \emph{epoch} is the stretch strictly between
two consecutive discovery rounds,
\begin{equation}
\Ep_i=\{\tau_i+1,\dots,\tau_{i+1}-1\},\qquad L_i=|\Ep_i|,\qquad i=1,\dots,|K'|.
\label{eq:epoch-def}
\end{equation}
Every round is either a discovery round or in exactly one epoch, so with
$\cD=\{\tau_1,\dots,\tau_{|K'|}\}$,
\begin{equation*}
\Regret(T)\;=\;R(\cD)+\sum_{i=1}^{|K'|}R(\Ep_i),
\qquad \sum_{i=1}^{|K'|}L_i\le T.
\end{equation*}
Since $\avec$ is oblivious, the discovery rounds, and hence the epochs, are
fixed before play starts and do not depend on what the defender does.

\emph{Each epoch.} Throughout $\Ep_i$ the catalog is
$C_i=\{a_{\tau_1},\dots,a_{\tau_i}\}$, every attacker of the epoch belongs to
$C_i$, and \Cref{alg:gc} runs one copy of $\mathcal{A}(C_i)$, started at the
first round of the epoch. The attackers of the epoch form a sequence in $C_i$
fixed in advance, so \Cref{lem:known} with $\Theta=C_i$ and $L=L_i$ applies.
It bounds the regret against the best coverage vector \emph{for this epoch
alone}. The global benchmark $\pvec^{*}$ is one of the candidates in that
supremum, so its payoff on the epoch is no larger, and the same bound holds
against $\pvec^{*}$:
\begin{equation}
R(\Ep_i)\;\le\;c_1\,n\sqrt{|C_i|\,L_i\log(n|C_i|)}+\frac{2L_i}{T}
\;\le\;c_1\,n\sqrt{|K'|\log(n|K'|)}\,\sqrt{L_i}+\frac{2L_i}{T}.
\label{eq:combined-epoch}
\end{equation}
The second inequality uses $|C_i|\le |K'|$ and that $x\mapsto x\log(nx)$ is
increasing. The learner does not need $L_i$, which is unknown when the epoch
starts.

\emph{Discovery rounds.} Payoffs lie in $[-1,1]$, so a single round
contributes at most $2$ to $R$, whatever the defender plays. There are $|K'|$
discovery rounds, hence
\begin{equation}
R(\cD)\;\le\;2|K'|.
\label{eq:disc-sum}
\end{equation}

\emph{Adding up.} Summing \Cref{eq:combined-epoch} over the epochs, using
$\sum_i 2L_i/T\le 2$, and adding \Cref{eq:disc-sum} gives
$\Regret(T)\le 2|K'|+2+c_1 n\sqrt{|K'|\log(n|K'|)}\,\sum_i\sqrt{L_i}$. The
Cauchy--Schwarz inequality, applied to $(1,\dots,1)$ and
$(\sqrt{L_1},\dots,\sqrt{L_{|K'|}})$, bounds the last sum:
\begin{equation}
\sum_{i=1}^{|K'|}\sqrt{L_i}\;\le\;\sqrt{|K'|\textstyle\sum_{i}L_i}\;\le\;\sqrt{|K'|\,T}.
\label{eq:CS}
\end{equation}
Therefore
\begin{equation}
\Regret(T)\;\le\;c_1\,n|K'|\sqrt{T\log(n|K'|)}\;+\;2|K'|+2,
\label{eq:main-Kprime}
\end{equation}
and $|K'|\le\Kmax$ gives \Cref{eq:main}. Applied once to the whole horizon
with the type set known, \Cref{lem:known} would give
$c_1 n\sqrt{|K'|\,T\log(n|K'|)}$; the extra factor $\sqrt{|K'|}$ comes from
\Cref{eq:CS} and is the price of not knowing the types in advance.
\end{proof}

Since \Cref{alg:gc} never reads $\Kmax$, the proof gives more than
\Cref{thm:main}: the bound \Cref{eq:main-Kprime} depends only on the types
that actually appear.

\begin{corollary}[instance-adaptive bound]
\label{cor:adaptive}
Let $K'=\{a_1,\dots,a_T\}$ be the set of attacker types that appear in the
sequence. Then \Cref{alg:gc} achieves
\[
\Regret(T)\;\le\;O\!\Bigl(n|K'|\sqrt{T\log (n|K'|)}\Bigr).
\]
\end{corollary}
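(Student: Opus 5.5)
The plan is to read the corollary directly off the proof of \Cref{thm:main}, which already produced the sharper inequality \Cref{eq:main-Kprime} before $\Kmax$ entered. Fix an oblivious sequence $\avec$ and let $K'=\{a_1,\dots,a_T\}$. The first step is to check that no step of that proof used $\Kmax$; only the final substitution $|K'|\le\Kmax$ did. The points to verify are these. The discovery rounds $\tau_1<\dots<\tau_{|K'|}$ and the epochs $\Ep_i$ are defined from $\avec$ alone. Each catalog satisfies $C_i\subseteq K'$, so $|C_i|\le|K'|$. \Cref{lem:known} is applied with $\Theta=C_i$, and its constant $c_1$ is absolute. \Cref{alg:gc} itself never reads $\Kmax$, so its behaviour, and hence its regret, is the same whatever bound is assumed.

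With that confirmed, \Cref{eq:main-Kprime} gives
\[
\Regret(T)\le c_1 n|K'|\sqrt{T\log(n|K'|)}+2|K'|+2 .
\]
The remaining step is to absorb the additive terms into the leading one. Since $|K'|\ge 1$ and $n\ge 1$, we have $2|K'|+2\le 4|K'|\le 4n|K'|\sqrt{T}$. The main obstacle is the logarithm: it can vanish when $n|K'|=1$. In that case the leading term is zero while the additive $O(1)$ term is not. I would handle this by reading the $O(\cdot)$ with $\log$ replaced by $\max\{1,\log(\cdot)\}$, the usual convention, which the paper also implicitly uses in \Cref{thm:main}. If $n\ge2$, then $\log(n|K'|)\ge\log 2$ and the additive terms are dominated directly.

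With that convention, $\Regret(T)=O\bigl(n|K'|\sqrt{T\log(n|K'|)}\bigr)$. This holds uniformly over oblivious sequences, with the constant independent of $K$, $\Kmax$ and $T$, which is the claim.
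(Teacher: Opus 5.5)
Your proposal is correct and is exactly the paper's argument: the corollary is read off \Cref{eq:main-Kprime}, the bound reached in the proof of \Cref{thm:main} before $|K'|$ is replaced by $\Kmax$. Your extra step of absorbing the additive $2|K'|+2$ into the leading term fills in a detail the paper leaves implicit. That step is valid whenever $n|K'|\ge 2$, which is the right case split rather than $n\ge 2$, and it needs the usual $\max\{1,\log(\cdot)\}$ convention in the degenerate case $n=|K'|=1$.
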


\begin{proof}
This is \Cref{eq:main-Kprime}, the bound reached in the proof of
\Cref{thm:main} before $|K'|$ is replaced by $\Kmax$.
\end{proof}

The dependence on $T$ in \Cref{thm:main} is the best possible:
\Cref{prop:sqrtT} in \Cref{sec:sqrtT-main} shows that two types already
force regret of order $\sqrt T$, even when both are known in advance. What
the defender needed in this section, however, was to see the type of each
attacker. The next section shows what happens without it.

\section{Unknown Type Set and Partial Information: An Impossibility Result}
\label{sec:partial}

\Cref{thm:main} shows that not knowing $K$ is affordable: under
full-information feedback it costs at most a factor $\sqrt{\Kmax}$ over the
known-type rate of \citet[Theorem 5.1]{balcan2015}, as the last paragraph of
its proof records. Not observing $a_t$ is affordable too: with $K$ known
and only the attacked target observed, \citet[Theorem 6.1]{balcan2015}
achieve $O(T^{2/3}n|K|\log^{1/3}(n|K|))$, and \citet{bernasconi2023} improve
this to $\tilde O(\mathrm{poly}(|K|)\sqrt T)$.

This section shows that the two relaxations cannot be combined. Under
partial information with an unknown type set, no algorithm achieves regret
$o(T)$, \emph{even when $\Kmax=1$}. Here $\Kmax=1$ means that a single type
attacks in every round; the defender knows that this type comes from a given
infinite universe of types, but not which one it is. The obstruction is
exactly this: the defender sees only a target index, and infinitely many
types produce the same index, so it can never learn \emph{which} type it
faces. What the defender lacks is the identity of its opponent, and
\Cref{sec:block} studies the cheapest way of supplying it.

\begin{problem}[Unknown-$K$ repeated SSG, partial information]
\label{prob:partial}
A repeated SSG as in \Cref{sec:model} is played, with three modifications.
\begin{enumerate}[leftmargin=1.2em,itemsep=0.05em,topsep=0.2em]
  \item There is a set $\Univ$ of attacker types, called the \emph{universe},
  known to the defender and not assumed finite.
  \item The set $K\subseteq\Univ$ is finite, and $|K|\le\Kmax$ for some
  constant $\Kmax\ge1$. Neither $K$ nor $\Kmax$ is revealed to the defender.
  \item Feedback is \emph{partial information} \citep[Section 3]{balcan2015}:
  at the end of round $t$ the defender observes only the attacked target
  $b_{a_t}(\pvec_t)$. It never observes $a_t$, and utility vectors are never
  revealed.
\end{enumerate}
The adversary is oblivious, and $\pvec^{*}$ and $\Regret(T)$ are as in
\Cref{eq:pstar,eq:regret}.
\end{problem}

The defender's own payoff is not extra feedback: it knows $u_d^c$ and
$u_d^u$, so its payoff $U_d(b_{a_t}(\pvec_t),\pvec_t)$ follows from the
observed target and its own play $\pvec_t$. Moreover, \Cref{prob:partial}
is at least as hard as each setting it weakens: revealing $a_t$ recovers
\Cref{prob:unknownK}, and revealing $K$ before the first round recovers
\citet[Section 6]{balcan2015}. Our lower bounds hold even for a defender
that is told $\Kmax$.

\subsection{The lower-bound game \texorpdfstring{$G_3$}{G3}}
\label{sec:gadget}

All of our lower bounds are proved in a single small game, which we call
$G_3$; it is the game \citet[Theorem 7.1]{balcan2015} use for their
unrestricted-$K$ lower bound. There are three targets, $N=\{1,2,3\}$, and the
defender has one resource that may guard any one of them. The schedules are
therefore $\cD=\{\emptyset,\{1\},\{2\},\{3\}\}$: guard nothing, or
guard a single target. The induced coverage vector $\pvec=(p_1,p_2,p_3)$
records the probability with which each target is guarded, and the set of
feasible coverage vectors is
\begin{equation}
\cP=\{\pvec\in[0,1]^3\;:\;p_1+p_2+p_3\le 1\}.
\label{eq:g3-polytope}
\end{equation}
Every coverage vector constructed in the proofs satisfies
$p_1+p_2+p_3=1$, so the arguments apply verbatim when $\cP$ is restricted to
the simplex.

\paragraph{The defender's payoffs.} In $G_3$ the defender loses one unit
whenever target~1 or target~3 is attacked, loses nothing when target~2 is
attacked, and in each case the loss is the same whether or not the attacked
target happened to be guarded:
\begin{equation}
u_d^c(i)=u_d^u(i)=-1\ \ (i\in\{1,3\}),\qquad u_d^c(2)=u_d^u(2)=0.
\label{eq:g3-def}
\end{equation}
Because the two payoffs coincide for each target, the defender's expected
utility \Cref{eq:Ud} does not depend on the coverage at all:
\begin{equation}
U_d(1,\pvec)=U_d(3,\pvec)=-1,\qquad U_d(2,\pvec)=0
\qquad\text{for every }\pvec\in\cP.
\label{eq:g3-payoff}
\end{equation}
Coverage has not become irrelevant; it has become purely \emph{indirect}.
The attacker's utilities do depend on $\pvec$, so by choosing coverage the
defender changes which target the attacker prefers, and that choice is the
defender's entire problem: it must arrange for the attacker to prefer
target~2. Such payoffs are allowed by the model of \Cref{sec:model}, which
requires only that all utilities lie in $[-1,1]$, never that guarding a
target reduce the damage of an attack. A round is thus worth $0$ if target~2
is attacked and $-1$ otherwise. We say the defender \emph{steers
successfully} at round $t$ if $b_{a_t}(\pvec_t)=2$ and \emph{fails}
otherwise, so that the regret against a benchmark which always steers
successfully is exactly the expected number of failures. Ties in the
attacker's best response are broken toward the lowest-indexed target,
following \citet[Section 7]{balcan2015}; target~2 is the strict maximizer at
every benchmark we exhibit, so none of our bounds depends on this convention.

\paragraph{The attacker types.} Each attacker type we use in $G_3$ is
described by a pair of thresholds $0\le r_1<r_2\le 1$. The type
$\alpha_{r_1,r_2}$ has utilities
\begin{equation}
u^c_{\alpha}=(r_1-1,\;0,\;-r_2),\qquad
u^u_{\alpha}=(r_1,\;0,\;1-r_2);
\label{eq:type-utils}
\end{equation}
that is, attacking target~1 pays the attacker $r_1-1$ if the target is
covered and $r_1$ if it is not, target~2 pays $0$ either way, and target~3
pays $-r_2$ or $1-r_2$. All six entries lie in $[-1,1]$. Substituting them
into $U_\alpha(i,\pvec)=u^c_\alpha(i)\,p_i+u^u_\alpha(i)\,(1-p_i)$ gives
\begin{equation}
U_\alpha(1,\pvec)=r_1-p_1,\quad U_\alpha(2,\pvec)=0,\quad
U_\alpha(3,\pvec)=(1-r_2)-p_3 .
\label{eq:type-U}
\end{equation}
In words: attacking target~1 is attractive when target~1 is lightly
guarded, attacking target~3 is attractive when target~3 is lightly guarded,
and target~2 is a neutral fallback worth exactly zero.

\paragraph{A one-dimensional picture.}
It is convenient to look at coverage vectors that split the single
resource between targets~1 and~3 and never guard target~2. Write these as
$\pvec=(p,0,1-p)$ for a scalar $p\in[0,1]$. Substituting $p_1=p$ and
$p_3=1-p$ into \Cref{eq:type-U} turns the three utilities into $r_1-p$, $0$
and $p-r_2$. Comparing them, and breaking ties toward the lowest index,
gives
\begin{equation}
b_{\alpha_{r_1,r_2}}\bigl((p,0,1-p)\bigr)=
\begin{cases}
1 & p\in[0,r_1],\\
2 & p\in(r_1,r_2],\\
3 & p\in(r_2,1].
\end{cases}
\label{eq:type-line}
\end{equation}
So along this segment the type $\alpha_{r_1,r_2}$ may be pictured as the
interval $(r_1,r_2]$: it strikes target~2 precisely when $p$ falls inside
the interval, and strikes target~1 or target~3 according to whether $p$
falls short of the interval or overshoots it. For example, the type
$\alpha_{0.3,0.5}$ attacks target~1 under $(0.2,0,0.8)$, target~2 under
$(0.4,0,0.6)$ and target~3 under $(0.7,0,0.3)$. Locating an unknown type
means locating its interval.

An online algorithm need not restrict its plays to this segment: it may also
guard target~2 or leave some probability unassigned. Since the lower bounds
must apply to all possible plays, the following lemma describes the
attacker's behavior over the whole of $\cP$; its proof is in
\Cref{app:proof-hit}.

\begin{lemma}[when a threshold type strikes target 2]
\label{lem:hit}
Let $0\le r_1<r_2\le 1$ and $\alpha=\alpha_{r_1,r_2}$. For every
$\pvec=(p_1,p_2,p_3)\in\cP$,
\begin{equation}
b_\alpha(\pvec)=2
\iff
p_1>r_1\ \text{ and }\ p_3\ge 1-r_2 .
\label{eq:hit-iff}
\end{equation}
Moreover, whenever $b_\alpha(\pvec)=2$, the coverage of target~1 satisfies
$p_1\in(r_1,r_2]$.
\end{lemma}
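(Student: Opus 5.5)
The plan is to compare the three utilities in \Cref{eq:type-U} directly, using the tie-breaking rule toward the lowest-indexed target. Since $U_\alpha(2,\pvec)=0$, target~2 is the chosen target exactly when it beats target~1 \emph{strictly*} (target~1 has lower index, so a tie goes to~1) and beats target~3 \emph{weakly} (target~2 has lower index than~3, so a tie goes to~2). Written out, $b_\alpha(\pvec)=2$ if and only if
\[
0>U_\alpha(1,\pvec)=r_1-p_1\quad\text{and}\quad 0\ge U_\alpha(3,\pvec)=(1-r_2)-p_3 .
\]
These two inequalities are precisely $p_1>r_1$ and $p_3\ge 1-r_2$, which gives \Cref{eq:hit-iff}. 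I will check both directions explicitly.

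\emph{If $b_\alpha(\pvec)=2$.} Target~2 must lie in the argmax. No lower-indexed target may share its value, so target~1 is strictly worse. Target~3 must not exceed it, since otherwise~3 would be the unique maximizer.

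\emph{If both inequalities hold.} Then $\max_i U_\alpha(i,\pvec)=0$, attained at~2 and possibly at~3 but not at~1. Lowest-index tie-breaking then selects~2.

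For the ``moreover'' part, assume $b_\alpha(\pvec)=2$. The lower bound $p_1>r_1$ is already given by \Cref{eq:hit-iff}. For the upper bound I use feasibility in \Cref{eq:g3-polytope}: $p_1\le 1-p_2-p_3\le 1-p_3$, because $p_2\ge0$. Combining this with $p_3\ge1-r_2$ gives $p_1\le 1-(1-r_2)=r_2$. Hence $p_1\in(r_1,r_2]$.

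The only delicate point is matching strict and weak inequalities to the tie-breaking order. That bookkeeping is exactly what makes the interval half-open, consistent with \Cref{eq:type-line}.
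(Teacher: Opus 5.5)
Your proof is correct and matches the paper's own argument: lowest-index tie-breaking makes $b_\alpha(\pvec)=2$ a strict comparison with target~1 and a weak one with target~3, which gives $p_1>r_1$ and $p_3\ge 1-r_2$. The bound $p_1\le 1-p_3\le r_2$ then follows, as in the paper, from the single-resource constraint together with $p_2\ge 0$.
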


The second assertion is the key to the lower bounds. Although the
condition on $p_3$ may hold for several types at once, the single-resource
constraint forces $p_1\le r_2$, so the type $\alpha_{r_1,r_2}$ can strike
target~2 only when $p_1\in(r_1,r_2]$. It is therefore the coordinate $p_1$
that separates the types.

\subsection{A family of types with disjoint intervals}
\label{sec:family}

The lower bounds use families of threshold types whose intervals do not
overlap. The basic one chops the unit interval into $M\ge2$ equal pieces:
\begin{equation}
I_j=\Bigl(\tfrac{j-1}{M},\tfrac{j}{M}\Bigr],\qquad
\alpha^{(j)}=\alpha_{\frac{j-1}{M},\frac{j}{M}},\qquad j\in[M],
\label{eq:dyadic}
\end{equation}
and we write $\Univ_M=\{\alpha^{(1)},\dots,\alpha^{(M)}\}$. By
\Cref{eq:type-line} these $M$ types respond differently to coverage vectors
on the segment, so they are distinct. By \Cref{lem:hit}, disjoint intervals
mean that no coverage vector can serve two of them at once.

\begin{corollary}[one coverage vector steers at most one type]
\label{cor:onehit}
Let $\alpha^{(1)},\dots,\alpha^{(M)}$ be threshold types in $G_3$ whose
intervals are pairwise disjoint, for instance those of \Cref{eq:dyadic}.
For every $\pvec\in\cP$, at most one of them strikes target~2 under $\pvec$.
\end{corollary}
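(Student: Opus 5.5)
This follows directly from the second assertion of \Cref{lem:hit}, by contradiction. Fix $\pvec=(p_1,p_2,p_3)\in\cP$. Suppose two distinct types $\alpha^{(j)}=\alpha_{r_1^{(j)},r_2^{(j)}}$ and $\alpha^{(k)}=\alpha_{r_1^{(k)},r_2^{(k)}}$ of the family, with $j\neq k$, both strike target~2 under $\pvec$.

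The first step applies \Cref{lem:hit} to each of the two types separately. Its second assertion gives $p_1\in(r_1^{(j)},r_2^{(j)}]$, because $b_{\alpha^{(j)}}(\pvec)=2$. It likewise gives $p_1\in(r_1^{(k)},r_2^{(k)}]$, because $b_{\alpha^{(k)}}(\pvec)=2$. Hence the single number $p_1$ lies in the intersection of the two intervals $(r_1^{(j)},r_2^{(j)}]$ and $(r_1^{(k)},r_2^{(k)}]$. By hypothesis these intervals are disjoint, so this is a contradiction, and at most one type of the family strikes target~2 under $\pvec$.

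For the instance \Cref{eq:dyadic}, the only remaining check is that the intervals $I_j=\bigl(\tfrac{j-1}{M},\tfrac jM\bigr]$ are pairwise disjoint. This holds because they are half-open and consecutive: for $j<k$, every point of $I_j$ is at most $j/M\le (k-1)/M$, while every point of $I_k$ is strictly greater than $(k-1)/M$.

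No step is a real obstacle, since the substance sits in \Cref{lem:hit}. The one point to stress is that its second assertion holds for \emph{every} $\pvec\in\cP$, not only for $\pvec$ on the segment $(p,0,1-p)$. This generality is what makes the corollary cover arbitrary plays of the defender, including those that guard target~2 or leave coverage unassigned.
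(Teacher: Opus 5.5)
Your proof is correct and matches the paper's: apply the second assertion of \Cref{lem:hit} to each of the two types, which places the single number $p_1$ in both disjoint intervals, a contradiction. Your extra verification that the intervals $I_j$ of \Cref{eq:dyadic} are pairwise disjoint is also correct.
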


The proof, two lines from \Cref{lem:hit}, is in \Cref{app:proof-hit}.

\subsection{Identification is a comparison search}
\label{sec:comparison}

Suppose the true instance is $K=\{\alpha^{(j)}\}$ for one unknown index
$j\in[M]$, so that the same type attacks in every round. Then the defender's
situation is a search problem. In each round it chooses a coverage vector
and observes the attacked target, which, for coverage vectors on the segment
$(p,0,1-p)$, tells it by \Cref{eq:type-line} whether $p$ fell below, inside,
or above the unknown interval $I_j$. Two of the three possible answers,
attacks on targets~1 or~3, merely narrow the range of candidates, like a
comparison in binary search; the third, an attack on target~2, identifies
$j$ exactly, and by \Cref{cor:onehit} it is the only answer that can do so.
The defender pays one unit of regret for every round in which it has not yet
received that third answer.

The next lemma bounds how quickly any such search can succeed. It sharpens
the classical fact that a worst-case comparison search over $M$ candidates
needs $\Omega(\log_2 M)$ queries in two respects: it charges the defender
for every unsuccessful round rather than counting rounds until success, and
it controls the whole distribution of the success time. In the lemma the
attacker's type is drawn uniformly at random. This is a choice of hard
instance, not an assumption about the adversary: a lower bound need only
exhibit instances on which the defender does badly. If the regret averaged
over the draw is large, then some fixed type makes it at least as large,
and that fixed type is the instance used in \Cref{thm:impossible}. Drawing
the type at random also handles randomized defenders, since their coin
tosses are independent of the draw; this is the easy direction of Yao's
minimax principle \citep{yao1977}.

\begin{lemma}[a search over disjoint intervals cannot succeed quickly]
\label{lem:count}
Let $\alpha^{(1)},\dots,\alpha^{(M)}$ be threshold types in $G_3$ whose
intervals are pairwise disjoint. Draw $j$ uniformly at random from $[M]$, and
suppose $\alpha^{(j)}$ attacks in every round. Consider a defender whose
coverage vector $\pvec_t$ in round $t$ is determined by the targets
$i_1,\dots,i_{t-1}$ it saw attacked in the earlier rounds, where
$i_s=b_{\alpha^{(j)}}(\pvec_s)$. Let $\tau$ be the first round in which the
attacker strikes target~2, with $\tau=\infty$ if this never happens. Then
for every $u\ge 1$ and every $T\ge 1$,
\begin{equation}
\Pr[\tau\le u]\;\le\;\frac{2^{u}-1}{M},
\label{eq:tau-tail}
\end{equation}
\begin{equation}
\EE\bigl[\min\{\tau-1,\,T\}\bigr]\;\ge\;T-\frac{2^{T+1}-T-2}{M}.
\label{eq:count-final}
\end{equation}
\end{lemma}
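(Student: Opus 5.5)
The plan is to model the deterministic defender as a decision tree, bound how many indices it can possibly steer by round $u$, and then sum the tail bound to get the expectation. A randomized defender is handled by conditioning on its coins, which are independent of $j$, and averaging the resulting deterministic bounds. So I assume the defender is deterministic: $\pvec_t$ is a fixed function of the history $(i_1,\dots,i_{t-1})$.

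The key observation concerns the event $\{\tau\le u\}$. On this event every round before $\tau$ returned target $1$ or $3$, so the history up to round $\tau$ is a string in $\{1,3\}^{\tau-1}$. Hence the coverage vector played at round $\tau$ is one of at most $2^{\tau-1}$ vectors, each determined by such a string. By \Cref{cor:onehit}, each fixed coverage vector steers at most one index $j$ to target~2.

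Now consider the set $S_u$ of all coverage vectors that the tree can play at round $t\le u$ after a history made only of $1$'s and $3$'s. It has size at most $\sum_{t=1}^{u}2^{t-1}=2^u-1$. Every $j$ with $\tau(j)\le u$ is steered by some vector in $S_u$, and each vector in $S_u$ accounts for at most one $j$. So at most $2^u-1$ indices satisfy $\tau\le u$, and \Cref{eq:tau-tail} follows from the uniform draw of $j$.

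For \Cref{eq:count-final}, I will write $\EE[\min\{\tau-1,T\}]=\sum_{u=1}^{T}\Pr[\tau>u]$. This identity holds because $\min\{\tau-1,T\}=\sum_{u=1}^{T}\mathbf{1}[\tau>u]$. Each term is at least $1-(2^u-1)/M$, and summing gives
\[
T-\frac{1}{M}\sum_{u=1}^{T}(2^u-1)=T-\frac{2^{T+1}-2-T}{M}.
\]

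The main subtlety is the counting step. One must not count leaves of a ternary tree. The point is that only histories avoiding target~2 matter, which reduces the branching to binary, and that distinct indices $j$ with the same history must be separated by distinct vectors. Applying \Cref{cor:onehit} per node handles this cleanly.
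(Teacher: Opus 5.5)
Your proposal is correct and follows the paper's proof: binary histories over $\{1,3\}$ before the first success, one coverage vector per history, at most one steered index per vector by \Cref{cor:onehit}, then the tail-sum formula. The only cosmetic difference is that you count all rounds $t\le u$ at once through an injection of $\{j:\tau(j)\le u\}$ into $S_u$, whereas the paper bounds $|\{j:\tau(j)=t\}|\le 2^{t-1}$ round by round and sums.
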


The defender never sees $j$. It sees only its own coverage vectors and the
attacked targets, and $j$ enters solely through the attacker's best response
$b_{\alpha^{(j)}}(\pvec_s)$. The intuition is binary search in which only an
exact hit counts. Each failed round tells the defender one bit, whether
target~1 or target~3 was struck, so its possible plays form a binary tree
with at most $2^{t-1}$ branches in round $t$, and each branch can catch at
most one candidate. With $M=2^T$ candidates, half of them can be caught only
in the last round, a quarter only in the round before, and so on. This is
the mechanism behind the impossibility of online learning of thresholds,
whose Littlestone dimension is infinite \citep{littlestone1988,bendavid2009};
\Cref{lem:count} adapts it to $G_3$, where a coverage vector yields one
comparison and success needs an exact hit.

\begin{proof}[Proof sketch]
Before its first success the defender has seen only attacks on targets~1
or~3, so at the start of round $t$ its history is one of at most $2^{t-1}$
strings over $\{1,3\}$, and each history determines one coverage vector. By
\Cref{cor:onehit} each coverage vector steers at most one candidate onto
target~2. Hence at most $2^{t-1}$ values of $j$ have $\tau=t$, so
$\Pr[\tau=t]\le 2^{t-1}/M$. Summing over $t\le u$ gives \Cref{eq:tau-tail},
and the tail-sum formula for $\min\{\tau-1,T\}$ gives \Cref{eq:count-final}.
The full proof is in \Cref{app:proof-count}.
\end{proof}

\subsection{The impossibility theorem}
\label{sec:impossibility}

Taking $M=2^T$ in \Cref{lem:count} makes the search too long for the
horizon, which is our main result for partial information.

\begin{theorem}[unknown $K$ with partial information is not learnable]
\label{thm:impossible}
Let $G_3$ be as in \Cref{eq:g3-def} and let
$\Univ=\{\alpha_{r_1,r_2}:0\le r_1<r_2\le 1\}$ be the infinite universe of
threshold types \Cref{eq:type-utils}. For every $T\ge 1$ and every
$\Kmax\ge 1$, every online algorithm for \Cref{prob:partial}, randomized
or deterministic, even one knowing $T$, $\Univ$, $\Kmax$ and the
construction, suffers, on some instance with $|K|=1$,
\begin{equation}
\Regret(T)\;\ge\;T-2+\frac{T+2}{2^{T}}\;>\;T-2 .
\label{eq:impossible}
\end{equation}
The game is fixed; only the adversary's choice of $K$ within it depends on $T$.
\end{theorem}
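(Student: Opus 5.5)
We will apply \Cref{lem:count} with $M=2^T$ and the family $\Univ_M$ of \Cref{eq:dyadic}, whose intervals are pairwise disjoint and whose types all lie in $\Univ$. Every instance we consider has $K=\{\alpha^{(j)}\}$ for a single $j\in[M]$, so $|K|=1\le\Kmax$ for every $\Kmax\ge1$.

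The first step is the benchmark. For $K=\{\alpha^{(j)}\}$, the coverage vector $\pvec=(p,0,1-p)$ with $p=j/M$ (or any $p\in I_j$) makes the attacker strike target~2 in every round by \Cref{eq:type-line}. By \Cref{eq:g3-payoff} this earns the maximal total payoff $0$. Hence the regret equals the expected number of failures, $\EE[\#\{t\le T: b_{a_t}(\pvec_t)\neq 2\}]$. Failures include every round before $\tau$, so the regret is at least $\EE[\min\{\tau-1,T\}]$.

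The second step reduces randomized defenders to the deterministic setting of \Cref{lem:count}. A randomized algorithm is a distribution over deterministic ones, obtained by fixing its random seed. For a fixed seed, the defender's play in round $t$ is a function of the seed and the observed targets $i_1,\dots,i_{t-1}$; its own payoffs are computable from these, so they give no extra information. Draw $j$ uniformly from $[M]$, independently of the seed. Conditional on each seed, \Cref{lem:count} gives
\[
\EE\bigl[\min\{\tau-1,T\}\bigr]\ge T-\frac{2^{T+1}-T-2}{2^T}=T-2+\frac{T+2}{2^T}.
\]
Averaging over seeds preserves this bound. Since the average over $j$ of the regret is at least this quantity, some fixed $j^\star$ attains it. The corresponding instance $K=\{\alpha^{(j^\star)}\}$ is oblivious and may depend on the algorithm. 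This gives \Cref{eq:impossible}, and the strict inequality holds because $(T+2)/2^T>0$.

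The step needing the most care is the conditioning in the second step. \Cref{lem:count} is stated for defenders whose plays depend only on past targets. I must check that fixing the seed yields exactly such a defender, and that the seed is independent of $j$, so that the lemma applies seed by seed. Knowledge of $T$, $\Univ$, $\Kmax$ and the construction is harmless, because the lemma allows an arbitrary deterministic map from histories to plays. Finally, I note that the game $G_3$ is fixed, and only $M=2^T$, and hence the choice of $K$, depends on $T$.
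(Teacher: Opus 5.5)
Your proposal is correct and follows the paper's proof essentially step for step. The steps match: a benchmark earning $0$, so that regret counts failures; \Cref{lem:count} with $M=2^T$ for each fixed seed; averaging over seeds drawn independently of $j$; then fixing a worst $j^\star$. The only difference is cosmetic: your choice $p=j/M$ creates a tie between targets~2 and~3, so it relies on lowest-index tie-breaking. The paper instead takes the midpoint of $I_j$, so that both conditions of \Cref{eq:hit-iff} hold strictly and the benchmark does not depend on that convention.
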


\begin{proof}[Proof sketch]
Set $M=2^{T}$, draw $j$ uniformly from $[M]$ and let $\alpha^{(j)}$ of
\Cref{eq:dyadic} attack in every round. The coverage vector whose first
coordinate is the midpoint of $I_j$ steers $\alpha^{(j)}$ in every round,
with both conditions of \Cref{eq:hit-iff} holding strictly, so the
benchmark earns $0$ and regret counts failed rounds. Under partial
information a deterministic algorithm chooses each coverage vector from the
targets attacked so far, so \Cref{lem:count} with $M=2^{T}$ applies directly
and yields \Cref{eq:impossible} on average over $j$. Averaging over the
algorithm's coins extends this to randomized algorithms, and some fixed $j$
attains the average. The full proof is in \Cref{app:proof-impossible}.
\end{proof}

\begin{remark}[the bound is exactly tight]
\label{rem:tight-lb}
The constant cannot be improved. The defender that keeps the set of indices
consistent with its observations, which by \Cref{eq:type-line} is an integer
interval, and plays $\pvec_t=(k/M,0,1-k/M)$ for its midpoint $k$ attains
regret exactly $T-2+(T+2)2^{-T}$ on average over $j$. Indeed, up to round $T$
every history still holds a nonempty interval, so each history catches
exactly one candidate and \Cref{eq:tau-tail} holds with equality;
\Cref{sec:exp-imp} confirms this in simulation. Since $\Regret(T)\le T$ always in $G_3$, the defender loses all
but two of the $T$ rounds.
\end{remark}

\citet{letchford2009} show that a defender facing a \emph{single}
unknown attacker type can learn a near-optimal commitment from a small number
of best-response queries, which at first sight is in tension with
\Cref{thm:impossible}. There is no contradiction. Any such query bound must
depend on the geometry of the best-response regions, and the family
$\Univ_M$ of \Cref{eq:dyadic} drives the region that has to be located down
to width $1/M=2^{-T}$. \citet{bollini2026} prove a related impossibility for
Bayesian Stackelberg games with three types drawn i.i.d.; there, payoffs of
polynomial precision suffice, whereas a single type, as in
\Cref{thm:impossible}, requires the universe to be exponentially large in
$T$. \Cref{lem:count} measures this cost of precision exactly. With the
finite universe $\Univ_M$ and a single type, summing
$\Pr[\tau>u]\ge 1-2^{u}/M$ from \Cref{eq:tau-tail} over
$u\le\lfloor\log_2 M\rfloor$ gives regret at least
$\min\{T,\lfloor\log_2 M\rfloor\}-2$, and the defender of
\Cref{rem:tight-lb} identifies the type within $\lceil\log_2 M\rceil$
rounds. A single type therefore costs $\Theta(\min\{T,\log M\})$, and
\Cref{thm:impossible} is the case of payoffs given to $T$ bits of
precision.

\subsection{Two types already cost \texorpdfstring{$\sqrt T$}{sqrt T}}
\label{sec:sqrtT-main}

\Cref{thm:impossible} is driven entirely by the universe being infinite.
With a single type from a finite universe $\Univ_M$, the defender of
\Cref{rem:tight-lb} identifies the type after $\lceil\log_2 M\rceil$ rounds
and loses nothing afterwards, so its regret does not grow with $T$. Two
types are different: even when the defender knows both of them and learns
the type after every round, it must keep guessing which one attacks next,
and this alone costs $\sqrt T$.

\begin{proposition}[two types already cost $\sqrt{T}$]
\label{prop:sqrtT}
Consider $G_3$ with the two types $\alpha^{(1)}=\alpha_{0,1/2}$ and
$\alpha^{(2)}=\alpha_{1/2,1}$, so that $|\Univ|=|K|=\Kmax=2$. For every
$T\ge 2$ and every online algorithm, randomized or deterministic, under
either feedback model, and even one told $K$ and the utilities of both types
before the first round, there is a sequence $\avec\in K^T$ with
\begin{equation}
\Regret(T)\;\ge\;\tfrac{1}{2}\sqrt{\lfloor T/2\rfloor}.
\label{eq:sqrtT}
\end{equation}
In fact the bound holds on average over $\avec$ drawn i.i.d.\ uniformly
from $K$, so some fixed sequence attains it.
\end{proposition}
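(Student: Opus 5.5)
The plan is to reduce the statement to the expected absolute value of a simple random walk, using the fact that no single coverage vector can steer both types. I draw $\avec$ i.i.d.\ uniformly from $K=\{\alpha^{(1)},\alpha^{(2)}\}$, bound the expected regret over this draw and over the algorithm's coins, and then conclude that some fixed sequence attains the average. The two intervals $(0,1/2]$ and $(1/2,1]$ are disjoint, so \Cref{cor:onehit} applies. By \Cref{eq:g3-payoff}, a round is then worth $0$ if the attacking type is steered onto target~2 and $-1$ otherwise.

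\emph{Step 1: the defender's payoff.} In any feedback model, even one told $K$ and the utilities in advance, the play $\pvec_t$ is a function of the algorithm's coins and of information from rounds $1,\dots,t-1$. Since $a_t$ is drawn independently of all of this, $\pvec_t$ is independent of $a_t$. By \Cref{cor:onehit}, $\pvec_t$ steers at most one of the two types. Hence the success probability in round $t$ is at most $1/2$, and the expected total payoff is at most $-T/2$.

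\emph{Step 2: the benchmark.} Let $N_1,N_2$ be the numbers of rounds attacked by $\alpha^{(1)}$ and $\alpha^{(2)}$. The vector $(1/4,0,3/4)$ steers $\alpha^{(1)}$ strictly, by \Cref{eq:hit-iff}: indeed $p_1=1/4>0$ and $p_3=3/4\ge 1/2$. The vector $(3/4,0,1/4)$ steers $\alpha^{(2)}$ strictly, since $3/4>1/2$ and $1/4\ge 0$. Each of these earns $-N_2$ or $-N_1$ respectively. So the benchmark earns at least
\[
-\min\{N_1,N_2\}=-T/2+|N_1-N_2|/2 .
\]
Combining with Step 1, the expected regret is at least $\tfrac12\EE|S_T|$, where $S_T=N_1-N_2$ is a sum of $T$ independent uniform $\pm1$ signs.

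\emph{Step 3: the random-walk bound, which is the only computational step.} Let $m=\lfloor T/2\rfloor$. Since $|S_t|$ is a submartingale, $\EE|S_T|\ge \EE|S_{2m}|$. For the even walk there is the classical identity
\[
\EE|S_{2m}|=2m\binom{2m}{m}4^{-m}.
\]
Combining it with the standard estimate $\binom{2m}{m}4^{-m}\ge 1/(2\sqrt m)$, which follows by induction on $m$ from the ratio $\tfrac{2m+1}{2m+2}$, gives $\EE|S_{2m}|\ge\sqrt m$. This yields $\Regret\ge\tfrac12\sqrt{\lfloor T/2\rfloor}$ on average over $\avec$, so some fixed $\avec$ attains it.

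I expect the main obstacle to be getting the constant right in this step. Checking that the binomial identity and the inequality combine exactly to $\sqrt m$ is the delicate part. If that fails, I would fall back on the Khintchine-type bound $\EE|S_T|\ge\sqrt{T/2}$, which is even stronger.
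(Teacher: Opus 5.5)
Your proposal is correct and follows the paper's proof essentially step for step: the i.i.d.\ uniform draw, the benchmark at least $-\min\{N_1,N_2\}$ via $(\tfrac14,0,\tfrac34)$ and $(\tfrac34,0,\tfrac14)$, the per-round success probability of at most $\tfrac12$ from independence of $\pvec_t$ and $a_t$ together with \Cref{cor:onehit}, and then the submartingale reduction to even length, the identity $\EE|S_{2m}|=2m\binom{2m}{m}4^{-m}$, and the inductive central-binomial bound. The constants you were worried about combine exactly, since $2m\cdot\frac{1}{2\sqrt m}=\sqrt m$, so the fallback is not needed.
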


\begin{proof}[Proof sketch]
Draw the attackers i.i.d.\ uniformly from $K$, and let $N_1$ and $N_2$
count the rounds of each type. The two intervals are disjoint, so by
\Cref{cor:onehit} a fixed coverage vector steers at most one type; the best
one steers the more frequent type and fails only on the rounds of the other,
so the benchmark earns $-\min\{N_1,N_2\}$. An algorithm, whatever it has
seen, cannot predict the next type, so its coverage vector steers the
attacker with probability at most $1/2$ in each round, and it earns at most
$-T/2$ in expectation. The gap is $\tfrac12\EE|N_1-N_2|$, half the expected
distance from the origin of a simple random walk after $T$ steps, which is
at least $\tfrac12\sqrt{\lfloor T/2\rfloor}$. The full proof is in
\Cref{app:proof-sqrtT}.
\end{proof}

So the $\sqrt T$ dependence of \Cref{thm:main} cannot be improved, even with
the type set known. Returning to \Cref{thm:impossible}: the defender failed
there because it could never learn the identity of its opponent. The next
section asks how little information about identities is enough to restore
sublinear regret.

\section{Block Revelation of Attacker Types: Sublinear Regret}
\label{sec:block}

\Cref{thm:impossible} shows that under partial information the defender
must receive some information about the identities of the attackers;
observing only the attacked target is not enough. This section studies a
minimal form of such information. Every $B$ rounds, an oracle gives the
defender the \emph{set} of attacker types seen so far, together with their
utility vectors. The oracle does not reveal when a type attacked, how often
it attacked, or whether it will attack again, so the defender still cannot
tell which type was responsible for any particular round. For example, an
anti-poaching unit may receive a monthly intelligence report listing the
poaching groups known to be active in a park, without saying which incident
each group was responsible for.

Identifying the attacker from its observed best responses is exactly the
difficulty considered by \citet[Section 6]{balcan2015}. Our additional
information does not remove this difficulty; it only tells the defender
which types are present. Our algorithm therefore runs a partial-information
learner for a known type set on the types reported so far, and its regret
is that learner's rate, summed over the stretches between new types, plus
an additive cost for the delay. With the learner of \citet{balcan2015} this
gives a bound of order $T^{2/3}$.

\begin{problem}[Unknown-$K$ repeated SSG with block revelation]
\label{prob:block}
Fix a \emph{revelation length} $B\in\NN$, known to the defender, and divide
the horizon into $N_B=\lceil T/B\rceil$ \emph{revelation blocks}
\begin{equation}
\cB_\nu
=
\{(\nu-1)B+1,\ \dots,\ \min\{\nu B,T\}\},
\qquad \nu\in[N_B].
\label{eq:blocks}
\end{equation}
The game is \Cref{prob:partial} with the following additional feedback: at
the end of $\cB_\nu$, the defender receives the set
\begin{equation}
D_\nu
=
\{a_t:\ t\le \min\{\nu B,T\}\}
\subseteq K
\label{eq:Dnu}
\end{equation}
of distinct types that have attacked in the first $\nu$ blocks, together
with the utility vectors of all types in $D_\nu$. We set $D_0=\emptyset$.
No other information is revealed, and within each round the feedback
remains partial.
\end{problem}

This model is weaker than \Cref{prob:unknownK}: even at $B=1$ the defender
learns a round's type only when that type is new. Taking $B\ge T$ makes the
revelation useless and recovers \Cref{prob:partial}, so by
\Cref{thm:impossible} some bound on $B$ is necessary.

\subsection{The algorithm}
\label{sec:block-alg}

\Cref{alg:bgc} keeps the outer skeleton of \Cref{alg:gc}: it maintains a
catalog $C$ of the types the defender has been told about, runs an inner
learner on the experts $\cE(C;\epsilon)$, and restarts the inner learner
whenever $C$ grows. It differs from \Cref{alg:gc} in two places. First, the
catalog can grow only at the checkpoints $t=B,2B,\dots$, so restarts happen
at block boundaries rather than at the exact round a new type first
attacks. Second, the full-information learner $\mathcal{A}(C)$ cannot be
used: it updates every expert with the loss $-U_d(b_{a_t}(\pvec),\pvec)$,
which requires knowing $a_t$. We therefore use the partial-information
learner of \citet[Algorithm 1]{balcan2015}, which we denote $\Api(C)$ and
which needs only the attacked target.

We treat $\Api(C)$ as a black box. Its input is a catalog $C$, which fixes
the expert set $\cE(C;\epsilon)$; at each round it receives the observed
target $i_t$ and outputs the coverage vector $\pvec_t$ to play. Internally
it groups its rounds into estimation blocks, which are unrelated to the
revelation blocks $\cB_\nu$, explores on a few rounds of each block, uses
them to form an unbiased estimate of every expert's average payoff on the
block, and feeds that estimate to an expert algorithm whose output drives
the next block. \Cref{app:balcan-alg} describes it in more detail. The
analysis uses its regret bound, which \citet{balcan2015} prove for a known
set $K$ of types, together with one further fact shown in the proof of
\Cref{lem:anytime}: the bound also holds at every prefix of a run.

\begin{theorem}[{\citet[Theorem 6.1]{balcan2015}}]
\label{thm:balcan61}
Given a repeated security game with partial-information feedback, $n$
targets, a known set $K$ of attacker types, and time horizon $T$, there is
an online algorithm that for any oblivious sequence of attackers $\avec$
plays at time $t$ a randomly chosen coverage vector $\pvec_t$ and satisfies
\[
\EE\Bigl[\sum_{t=1}^{T}U_d(b_{a_t}(\pvec_t),\pvec_t)\Bigr]
\;\ge\;\sum_{t=1}^{T}U_d(b_{a_t}(\pvec^{*}),\pvec^{*})
-O\bigl(T^{2/3}\,n|K|\log^{1/3}(n|K|)\bigr),
\]
where the expectation is over the algorithm's internal randomization.
\end{theorem}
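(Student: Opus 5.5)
The statement is Theorem 6.1 of \citet{balcan2015}, and I would reproduce their explore--exploit reduction to a full-information expert problem over the finite set $\cE(K;\epsilon)$.

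\textbf{Setup and block structure.} Fix $\epsilon=1/T$ and let $\cM=\cE(K;\epsilon)$. By \Cref{lem:43}, competing with the best expert in $\cM$ costs at most $2\epsilon T=2$ against $\pvec^{*}$. By \Cref{lem:44}, $\log|\cM|=O(n^2|K|\log(n|K|))$. Split the horizon into estimation blocks of length $Z$. In each block the algorithm draws an expert $\pvec$ from the distribution of a full-information expert algorithm (\Cref{prop:hedge}) and plays it on most rounds. It also reserves a small set of exploration rounds, placed uniformly at random inside the block.

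\textbf{Key step: unbiased estimates of every expert's block payoff.} Since $K$ is known, the payoff of expert $\pvec$ on block $\beta$ depends only on how many times each type attacks in that block. Write $f_\beta(\alpha)$ for the number of attacks by type $\alpha$ in block $\beta$. Then
\[
\sum_{t\in\beta}U_d(b_{a_t}(\pvec),\pvec)=\sum_{\alpha\in K}f_\beta(\alpha)\,U_d(b_\alpha(\pvec),\pvec).
\]
It is therefore enough to estimate, for each $(\alpha,i)$ pair, the frequencies with which types attack each target. I would use exploration points that reveal the type-to-target structure. For each target $i$ there is a coverage vector (for instance, covering all targets except $i$ as far as possible) under which the observed target indicates whether the attacker's type lies in a certain group. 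Playing one such vector at a uniformly random round of the block gives an unbiased estimate, scaled by $Z$, of the number of rounds in which the attacker would hit $i$ under it.

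Combining $n$ such estimators yields unbiased estimates of the quantities $\sum_\alpha f_\beta(\alpha)\mathbf{1}[b_\alpha(\pvec)=i]$, for every expert $\pvec$ at once. This is the delicate part of \citet{balcan2015}. The exploration vectors must be chosen, for instance as a barycentric spanning set of the relevant linear functionals, so that each expert's block payoff is a linear combination of $O(n)$ (or $O(n|K|)$) observable indicator frequencies with bounded coefficients. Obliviousness of $\avec$ together with uniformly random exploration positions guarantees unbiasedness. I expect this step to be the main obstacle: the estimated losses have range $\kappa$ of order $Z$ times $\mathrm{poly}(n)$, and this must be controlled.

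\textbf{Balancing.} Feed the normalized block estimates to \Cref{prop:hedge} over $T/Z$ rounds. Since the estimates are unbiased and the expert choice is independent of the exploration positions, the expected regret on exploit rounds is $O\bigl(Z\sqrt{(T/Z)\,\kappa'\log|\cM|}\bigr)$, where $\kappa'$ is the normalized range. Exploration costs $O(n\cdot T/Z)$ in total, at most $2$ per exploration round. The full bound is
\[
O\bigl(\sqrt{TZ\,n^2|K|\log(n|K|)}+nT/Z\bigr).
\]
Choosing $Z\approx T^{1/3}(n|K|\log(n|K|))^{-1/3}$, up to polynomial factors in $n$ and $|K|$ absorbed as in their analysis, gives $O(T^{2/3}n|K|\log^{1/3}(n|K|))$. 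Adding the $2\epsilon T$ discretization loss completes the proof.
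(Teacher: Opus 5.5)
\textbf{Verdict: genuine gap.} Your architecture matches the learner described in \Cref{app:balcan-alg}: estimation blocks, exploitation from the background expert algorithm over $\cE(K;\epsilon)$, exploration at uniformly random positions so that obliviousness gives unbiasedness, and the $2\epsilon T=2$ discretization loss. The gap is the estimation step, which you flag but do not resolve.

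For an expert $\pvec\in\cP_\sigma(K)$, the number of rounds of a block on which it would be answered by an attack on $i$ is $w_{\sigma,i}\cdot f_\beta$. Here $w_{\sigma,i}\in\{0,1\}^{K}$ marks the types with $\sigma(\alpha)=i$. This quantity depends on the region $\sigma$, not only on $i$, and there can be up to $n^{|K|}$ regions. Your $n$ probes, one per target, estimate $n$ fixed linear functionals of $f_\beta$. These need not span the vectors $w_{\sigma,i}$, and cannot when the $w_{\sigma,i}$ span more than $n$ dimensions. So the claim that combining them gives unbiased estimates for every expert at once is false.

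You gesture at a barycentric spanner but do not identify the functionals, and the missing observation is where they live: every $w_{\sigma,i}$ lies in $\RR^{K}$. Hence a barycentric spanner $b_1,\dots,b_d$ of $\mathcal{W}=\{w_{\sigma,i}\}$ has $d\le|K|$ elements, not $O(n)$ or $O(n|K|)$. Each $b_j=w_{\sigma_j,i_j}$ is probed by playing a fixed point of $\cP_{\sigma_j}(K)$ at a uniformly random round and recording whether $i_j$ is attacked. The coefficients $\lambda_j\in[-1,1]$ then give the estimate \Cref{eq:chat-app}, which is unbiased and bounded per round by $\kappa=n|K|$.

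This changes both terms of your balancing. Exploration costs at most $2|K|$ per block, not $n$. The range $\kappa=n|K|$ must also enter \Cref{prop:hedge}; your term $\sqrt{TZn^2|K|\log(n|K|)}$ silently sets it to $1$.

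Saying the rate holds ``up to polynomial factors in $n$ and $|K|$ absorbed as in their analysis'' cannot close the argument, because the statement is explicit in both $n$ and $|K|$. With your terms and your choice of $Z$, the two contributions come out as $n^{4/3}|K|^{1/3}T^{2/3}\log^{1/3}(n|K|)$ and $n^{5/6}|K|^{1/3}T^{2/3}\log^{1/3}(n|K|)$. That is not the claimed bound.

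With the correct terms, $2|K|T/Z+\sqrt{TZ\,n^{3}|K|^{2}\log(n|K|)}$ for block length $Z$, choose $Z=T^{1/3}n^{-1}\log^{-1/3}(n|K|)$. This is the block length $G$ of \Cref{app:balcan-alg}, where the number of blocks is the quantity called $Z$ in \Cref{eq:Z}. It balances both terms at $O(T^{2/3}n|K|\log^{1/3}(n|K|))$, as required.
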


In our use the known set is the catalog $C$ and the algorithm is $\Api(C)$.
Its horizon must be known in advance; \Cref{lem:anytime} below removes this
requirement.

\begin{algorithm}[t]
\caption{\textsc{Block-Grow-Catalog} (partial information, revelation length $B$).}
\label{alg:bgc}
\begin{algorithmic}[1]
\small
\Require horizon $T$; revelation length $B$; game primitives; default strategy $\pvec_{\text{def}}\in\cP$
\State $C\gets\emptyset$;~$\epsilon\gets 1/T$;~$\Api\gets\textsc{null}$
   \Comment{empty catalog}
\For{$t=1,\dots,T$}
  \State \textbf{Play} $\pvec_t\gets\pvec_{\text{def}}$ if $C=\emptyset$, else the strategy $\Api$ prescribes for round $t$
  \State \textbf{Observe} attacked target $i_t=b_{a_t}(\pvec_t)$
  \State pass $i_t$ to $\Api$ as its round-$t$ feedback
  \If{$t=\nu B$ for some $\nu$, or $t=T$}
    \State receive $D_\nu\subseteq K$ and the utilities of each $\alpha\in D_\nu$
    \If{$D_\nu\ne C$} \Comment{new type(s) appeared during $\cB_\nu$}
      \State $C\gets D_\nu$
      \State $\Api\gets\Api(C)$
         \Comment{restart on enlarged catalog}
    \EndIf
  \EndIf
\EndFor
\end{algorithmic}
\end{algorithm}

\subsection{Discovery blocks and epochs}
\label{sec:block-structure}

\begin{figure}[t]
\centering
\begin{tikzpicture}[x=0.55cm,y=0.42cm]
\foreach \i/\lbl/\sh in {0/{$\cB_1$}/black!28,%
                        1/{$\cB_2$}/black!5,%
                        2/{$\cB_3$}/black!5,%
                        3/{$\cB_4$}/black!28,%
                        4/{$\cB_5$}/black!5,%
                        5/{$\cB_6$}/black!5,%
                        6/{$\cB_7$}/black!5,%
                        7/{$\cB_8$}/black!28} {
  \draw[fill=\sh] (\i,0) rectangle (\i+1,1);
  \node[font=\scriptsize] at (\i+0.5,0.5) {\lbl};
}
\node[font=\scriptsize,anchor=south] at (0.5,1.05) {disc.};
\node[font=\scriptsize,anchor=south] at (3.5,1.05) {disc.};
\node[font=\scriptsize,anchor=south] at (7.5,1.05) {disc.};
\draw[thick] (1,-0.15) -- (1,-0.55) -- (3,-0.55) -- (3,-0.15);
\node[font=\scriptsize,anchor=north] at (2,-0.60) {epoch $\Ep_1$};
\draw[thick] (4,-0.15) -- (4,-0.55) -- (7,-0.55) -- (7,-0.15);
\node[font=\scriptsize,anchor=north] at (5.5,-0.60) {epoch $\Ep_2$};
\draw[->,thin] (-0.3,-0.05) -- (8.4,-0.05);
\node[font=\scriptsize,anchor=west] at (8.4,-0.05) {\;$t$};
\end{tikzpicture}
\caption{The block decomposition. Shaded blocks are the \emph{discovery
blocks}: their closing checkpoint enlarges the catalog. The run of
unshaded blocks strictly between two consecutive discovery blocks forms an
\emph{epoch}; inside an epoch the catalog is fixed and \Cref{alg:bgc} runs
one uninterrupted inner learner.}
\label{fig:blocks}
\end{figure}

The analysis cuts the horizon into two kinds of stretches, shown in
\Cref{fig:blocks}: \emph{discovery blocks}, during which a type the defender
has not yet been told about may attack, and \emph{epochs}, the runs of
blocks between them, during which every attacker is already in the catalog.
The proof charges discovery blocks the maximum regret and applies the
learner's guarantee on each epoch.

\begin{definition}[discovery blocks and epochs]
\label{def:blocks}
A revelation block $\cB_\nu$ is a \emph{discovery block} if its report
names a type not named before, that is, $D_\nu\ne D_{\nu-1}$. Let
$\cB_{\beta_1},\dots,\cB_{\beta_m}$ be the discovery blocks, in order. The
$j$-th \emph{epoch} $\Ep_j$ consists of the blocks strictly between
$\cB_{\beta_j}$ and $\cB_{\beta_{j+1}}$, or up to the end of the horizon
when $j=m$. We write $L_j$ for the number of rounds in $\Ep_j$ and
$C_j\coloneqq D_{\beta_j}$ for the catalog during it.
\end{definition}

\begin{lemma}[decomposition]
\label{lem:blocks}
Let $K'=\{a_1,\dots,a_T\}$ be the set of types appearing in $\avec$.
\begin{enumerate}[leftmargin=1.4em,itemsep=0.1em,topsep=0.2em,label=(\alph*)]
  \item There are $m\le |K'|$ discovery blocks, and together with the
  epochs they partition $[1,T]$; in particular $\sum_j L_j\le T$.
  \item Throughout $\Ep_j$ the catalog is $C_j$, every attacker belongs
  to $C_j$, and a single instance of $\Api(C_j)$ runs, started at the first
  round of $\Ep_j$.
  \item The discovery blocks and the epochs depend only on $\avec$, not on
  the defender's play.
\end{enumerate}
\end{lemma}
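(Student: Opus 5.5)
The plan is to read everything off the sequence $D_0\subseteq D_1\subseteq\dots\subseteq D_{N_B}$. By \Cref{eq:Dnu} this sequence is determined by $\avec$ and $B$ alone, and it is nondecreasing in $\nu$.

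\emph{Part (a).} First I count discovery blocks. Each one has $D_\nu\supsetneq D_{\nu-1}$, so it adds at least one new type of $K'$. Since $D_{N_B}=K'$, there are at most $|K'|$ of them, which gives $m\le|K'|$. For the partition I would argue as follows. Every block index $\nu\in[N_B]$ is either some $\beta_j$ or lies strictly between two consecutive discovery indices, or after $\beta_m$. Blocks before $\beta_1$ do not occur: $D_1\ne\emptyset=D_0$ because round $1$ has an attacker, so $\beta_1=1$. The revelation blocks partition $[1,T]$ by \Cref{eq:blocks}, so discovery blocks and epochs do too, and $\sum_j L_j\le T$ follows.

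\emph{Part (b).} I would prove this by induction on the checkpoints of \Cref{alg:bgc}. The invariant is that after the checkpoint closing $\cB_\nu$ we have $C=D_\nu$. The base case is $C=\emptyset=D_0$. At the checkpoint the algorithm sets $C\gets D_\nu$ whenever $D_\nu\ne C=D_{\nu-1}$, and otherwise leaves $C$ unchanged. So $C$ changes, and $\Api$ is restarted, exactly at the checkpoints closing discovery blocks.

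Now take the epoch $\Ep_j$, which consists of the blocks $\beta_j+1,\dots,\beta_{j+1}-1$. The last restart before it happens at the end of $\cB_{\beta_j}$, on the catalog $C_j=D_{\beta_j}$. No checkpoint inside the epoch restarts the learner, because $D_\nu=D_{\beta_j}$ there. Hence a single instance of $\Api(C_j)$ runs from the first round of $\Ep_j$ onward.

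It remains to check that every attacker $a_t$ with $t\in\Ep_j$ lies in $C_j$. Such a $t$ lies in a block $\nu<\beta_{j+1}$, so $a_t\in D_\nu=D_{\beta_j}=C_j$. For $j=m$ the same argument uses $D_\nu=D_{\beta_m}$ for all $\nu>\beta_m$.

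\emph{Part (c).} This is immediate: the $D_\nu$, and hence the $\beta_j$, the epochs and the $L_j$, are functions of $\avec$ alone, and an oblivious adversary fixes $\avec$ before play.

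The main obstacle is the bookkeeping in (b). I would state the checkpoint invariant $C=D_\nu$ explicitly to handle it. I would also note the last-block convention $t=T$, which makes the final checkpoint well defined when $B\nmid T$ but plays no role in the epoch structure.
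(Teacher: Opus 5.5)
Your proposal is correct and follows the paper's argument. Part (a) rests on $D_0=\emptyset$ and on each discovery block naming a new type of $K'$, part (b) on \Cref{alg:bgc} changing its catalog and restarting only after a discovery block, and part (c) on obliviousness; your explicit checkpoint invariant $C=D_\nu$ is a more careful version of the paper's one-line justification of (b).
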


\begin{proof}
(a)~The first block is a discovery block, since $D_0=\emptyset$, and every
discovery block names at least one of the $|K'|$ types for the first time.
(b)~A block that is not a discovery block names no new type, so every type
attacking in it had already been named; and \Cref{alg:bgc} changes its
catalog, and restarts its learner, only after a discovery block.
(c)~Whether a block is a discovery block depends only on which types
attack, and the oblivious adversary fixes these before play begins.
\end{proof}

Part~(c) is where the obliviousness of the adversary is used: it makes the
attackers within an epoch an oblivious sequence, as the learner's guarantee
requires. What the learner does during discovery blocks does not matter,
since the proof charges those rounds the maximum regret.

The proof of \Cref{thm:block} applies \Cref{thm:balcan61} on each epoch.
That theorem needs its horizon in advance, whereas an epoch ends only when
a new type appears, so its length is unknown when it starts. The following
lemma removes this requirement.

\begin{lemma}[anytime partial-information learner]
\label{lem:anytime}
For every finite catalog $C$ there is a variant of $\Api(C)$ that does not
require its horizon in advance, and an absolute constant $c_4$, such that on
any oblivious sequence $a'_1,a'_2,\dots$ with $a'_t\in C$ and for every
$L\le T$ simultaneously,
\begin{equation}
\max_{\pvec\in\cP}\sum_{t=1}^{L}U_d(b_{a'_t}(\pvec),\pvec)
 -\EE\Bigl[\sum_{t=1}^{L}U_d(b_{a'_t}(\pvec_t),\pvec_t)\Bigr]
\;\le\;c_4\,L^{2/3}\,n\,|C|\,\log^{1/3}(n|C|).
\label{eq:anytime}
\end{equation}
\end{lemma}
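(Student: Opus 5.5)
The plan is to build the variant by the doubling trick, with \Cref{thm:balcan61} applied to each phase as a black box. The variant runs fresh copies of $\Api(C)$ in phases $k=0,1,2,\dots$. Phase $k$ has planned length $2^k$, and its copy is tuned to the horizon $2^k$ (estimation-block length and exploration rate chosen as in \citet{balcan2015} for that horizon). Each copy uses the same expert set $\cE(C;\epsilon)$ with $\epsilon=1/T$. By \Cref{lem:44} the logarithm of the size of this set does not depend on $\epsilon$, so the constant in \Cref{thm:balcan61} is uniform across phases. No phase needs $L$, so the variant needs no horizon. The attacker sequence is oblivious and the phase boundaries are fixed in advance, so the restriction of $a'$ to any phase is again an oblivious sequence in $C$.

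\textbf{Key step: the prefix form of \Cref{thm:balcan61}.} The last phase is usually cut short at round $L$. So I need the bound of \Cref{thm:balcan61} at every prefix $L'\le 2^k$ of a run tuned to horizon $2^k$, in the form $O\bigl((2^k)^{2/3}n|C|\log^{1/3}(n|C|)\bigr)$. I would get this by opening the black box only as far as its structure. The run consists of estimation blocks of a fixed length $\Delta$. Inside each block the learner plays the expert algorithm's choice, except on a few exploration rounds. The regret decomposes into three parts:
\begin{itemize}
\item an exploration cost, proportional to the number of exploration rounds so far;
\item the expert-algorithm regret on the block-averaged estimated payoffs, which holds at every prefix because Polynomial Weights has an anytime-valid bound for a fixed learning rate;
\item an estimation error, which is zero in expectation by unbiasedness.
\end{itemize}
Each part is monotone in the number of completed blocks and is bounded by its value at the full horizon. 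A partially completed block costs at most $2\Delta$, and $\Delta$ is itself $O(2^{2k/3})$-ish under the tuning. The same holds when the comparator is the best point of $\cP$ for the prefix, rather than for the full horizon, since \Cref{lem:43} applies to any sequence. Hence every prefix regret is at most $c\,(2^k)^{2/3}n|C|\log^{1/3}(n|C|)$. I expect this step to be the main obstacle. It is where I have to check that no term in \citet{balcan2015}'s analysis, such as the variance of the importance-weighted estimates or the approximation error $2\epsilon L'$ with $\epsilon=1/T$ fixed, grows beyond its full-horizon value.

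\textbf{Summing over phases.} Given $L$, let $k^*=\lceil\log_2(L+1)\rceil-1$ be the index of the last phase touched. The comparator is the best $\pvec$ for the whole prefix, and the sum over phases of each phase's supremum is at least the supremum of the sum. So the total regret is at most the sum of the per-phase prefix bounds:
\[
\sum_{k=0}^{k^*} c\,2^{2k/3}n|C|\log^{1/3}(n|C|)\;\le\;\frac{c}{1-2^{-2/3}}\,2^{2k^*/3}n|C|\log^{1/3}(n|C|).
\]
Since $2^{k^*}\le L$, this is at most $c_4L^{2/3}n|C|\log^{1/3}(n|C|)$, where $c_4$ absorbs the geometric-series constant. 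The bound is simultaneous in $L$ because the algorithm does not depend on $L$. The expectation over the internal coins factors across phases, because the copies use independent randomness and the attackers are oblivious.
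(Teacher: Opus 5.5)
Your proposal is correct and is essentially the paper's proof. The paper also reopens the chain of \citet[Lemma 6.2]{balcan2015}: it charges exploration $2d\tau_0$ and the background learner $G\,R(\tau_0)$, and handles estimation by unbiasedness together with ``the expectation of a max is at least the max of the expectations.'' It notes that both error terms are nondecreasing in the number of completed blocks and charges a block cut off mid-way at most $2G$; it then sums fresh copies tuned to horizons $2^{s}$, using subadditivity across stages and the geometric series with $2^{S}\le L$. The paper makes two details explicit that you leave implicit: the tiny early stages, where a block cannot hold its $|C|$ exploration rounds and the trivial bound $2H$ is already $O(H^{2/3}n|C|\log^{1/3}(n|C|))$, and the fact that the block length is really $G=O(H^{1/3})$, which makes the cut-block term lower order.
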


The benchmark on the left is the best fixed coverage vector for the
sequence $a'_1,\dots,a'_L$ alone; the proof of \Cref{thm:block} passes from
it to the global benchmark $\pvec^{*}$. The variant is built with the
doubling trick of \citet{auer1995} \citep[see also][Sec.~2.3]{cesa2006},
which turns a learner tuned to a known horizon into one that needs none.

\begin{proof}[Proof sketch]
\emph{(i) A run tuned to a horizon $H$ obeys its bound at every prefix.}
The proof of \Cref{thm:balcan61} bounds the regret of $\Api(C)$ by two error
terms: the payoff lost on exploration rounds, and the regret of its internal
expert algorithm on the estimated payoffs. Stopping that argument after
$L\le H$ rounds bounds the regret at round $L$ by the same two terms
evaluated at $L$ (a block cut in the middle adds only a lower-order term).
Both terms are nondecreasing in $L$, so at every prefix they are at most
their values at $H$: the regret at any round $L\le H$ is at most
$c_0H^{2/3}\phi$, the bound of \Cref{thm:balcan61}, where
$\phi\coloneqq n|C|\log^{1/3}(n|C|)$.
\emph{(ii) Doubling.} The variant runs fresh copies of $\Api(C)$ tuned to
horizons $1,2,4,\dots$, one after another, so it needs no horizon. Given
$L$, let $S$ be the stage containing round $L$; the earlier stages fill
$2^{S}-1$ rounds, so $2^{S}\le L$. Only stage $S$ can be cut off before its
end, and by (i) its bound still holds. The best fixed vector over $[1,L]$
earns at most the sum of the stages' best vectors, so the regret over
$[1,L]$ is at most the sum of the stage bounds,
\[
c_0\phi\sum_{s=0}^{S}2^{2s/3}\;\le\;\frac{2^{2/3}}{2^{2/3}-1}\,c_0\phi\,2^{2S/3}
\;\le\;2.71\,c_0\phi\,L^{2/3},
\]
which is \Cref{eq:anytime} with $c_4=2.71\,c_0$. The full proof is in
\Cref{app:anytime}.
\end{proof}

With these pieces in place we can state our main result for block
revelation, in which \Cref{alg:bgc} is run with the learner of
\Cref{lem:anytime}.

\begin{theorem}[block revelation restores no-regret]
\label{thm:block}
Let the defender play \Cref{alg:bgc} in \Cref{prob:block}, and let
$K'=\{a_1,\dots,a_T\}$ be the set of types appearing in the oblivious
sequence $\avec\in K^T$. Then
\begin{equation}
\Regret(T)\;\le\;
c\, n\,|K'|^{4/3}\,T^{2/3}\log^{1/3}(n|K'|)\;+\;2|K'|B
\label{eq:block-main}
\end{equation}
for an absolute constant $c$; in particular
$\Regret(T)=O(n\Kmax^{4/3}T^{2/3}\log^{1/3}(n\Kmax)+\Kmax B)$.
\end{theorem}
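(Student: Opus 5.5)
We follow the skeleton of the proof of \Cref{thm:main}, using the decomposition of \Cref{lem:blocks}. For a set of rounds $S$ write $R(S)$ for the regret accumulated on $S$ against the global benchmark $\pvec^{*}$, as in that proof. $R$ is additive over disjoint sets, so
\[
\Regret(T)=\sum_{k=1}^{m}R(\cB_{\beta_k})+\sum_{j=1}^{m}R(\Ep_j).
\]
Each round contributes at most $2$ to $R$, since payoffs lie in $[-1,1]$. Each discovery block has at most $B$ rounds, and by \Cref{lem:blocks}(a) there are $m\le|K'|$ of them. Hence the discovery blocks cost at most $2|K'|B$, which is exactly the additive term of \Cref{eq:block-main}.

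For an epoch $\Ep_j$, \Cref{lem:blocks}(b) gives three facts. The catalog is fixed at $C_j$. Every attacker in the epoch lies in $C_j$. A single fresh copy of the anytime learner of \Cref{lem:anytime} runs on $C_j$, starting at the first round of the epoch. By \Cref{lem:blocks}(c), the attackers of the epoch form a sequence in $C_j$ fixed before play, since the epoch boundaries depend only on $\avec$. The learner's internal randomness is fresh at the restart and independent of it.

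\Cref{lem:anytime} applies with $L=L_j$. This is the step that needs care: the learner does not know $L_j$ when the epoch starts, and \Cref{lem:anytime} holds for every $L\le T$ simultaneously, which handles exactly this. It bounds the regret against the best vector for the epoch alone. The global $\pvec^{*}$ is one candidate in that maximum; if the maximum is only a supremum, we use the near-optimal vector convention of the footnote to \Cref{eq:pstar}. Therefore
\[
R(\Ep_j)\le c_4\,L_j^{2/3}\,n|C_j|\log^{1/3}(n|C_j|)\le c_4\,n|K'|\log^{1/3}(n|K'|)\,L_j^{2/3},
\]
using $|C_j|\le|K'|$ and monotonicity of $x\log^{1/3}(nx)$.

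It remains to sum over epochs. By Hölder's inequality (concavity of $x^{2/3}$) with $m\le|K'|$ terms and $\sum_j L_j\le T$,
\[
\sum_{j=1}^{m}L_j^{2/3}\le m^{1/3}\Bigl(\sum_j L_j\Bigr)^{2/3}\le|K'|^{1/3}T^{2/3}.
\]
Multiplying gives $c_4\,n|K'|^{4/3}T^{2/3}\log^{1/3}(n|K'|)$. Adding the discovery-block cost yields \Cref{eq:block-main} with $c=c_4$. The $O(\cdot)$ form then follows from $|K'|\le\Kmax$.

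The main obstacle is verifying the hypotheses of \Cref{lem:anytime} inside each epoch. First, one checks that the restart coincides exactly with the epoch start. The catalog changes only at the checkpoint closing a discovery block, so the learner restarted there sees exactly the rounds of $\Ep_j$ with catalog $C_j$. Second, one checks that no type outside $C_j$ attacks during $\Ep_j$. A type attacking in a non-discovery block was already named in $D_{\nu-1}=C_j$.

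The approximation loss from $\epsilon=1/T$ is already absorbed into \Cref{lem:anytime}'s constant, as in \Cref{lem:known}. If it were not, it would add at most $2\sum_j L_j/T\le2$, a lower-order term.
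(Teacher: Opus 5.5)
Your proposal is correct and follows the paper's proof essentially step for step. It uses the same split via \Cref{lem:blocks} into at most $|K'|$ discovery blocks, charged $2$ per round, and epochs bounded by \Cref{lem:anytime} against the epoch-local benchmark and hence the global one, followed by the same H\"older step giving $|K'|^{1/3}T^{2/3}$ and $c=c_4$; your closing hedge about the $\epsilon=1/T$ loss is unnecessary, since \Cref{lem:anytime} is already stated against the best vector in all of $\cP$.
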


\begin{proof}
Adopt the notation of \Cref{def:blocks,lem:blocks}. For a set of rounds
$S\subseteq[1,T]$ let
\begin{equation}
\Regret(S)=\sum_{t\in S}U_d(b_{a_t}(\pvec^{*}),\pvec^{*})
-\EE\Bigl[\sum_{t\in S}U_d(b_{a_t}(\pvec_t),\pvec_t)\Bigr]
\label{eq:regret-of-S}
\end{equation}
denote the regret incurred on the rounds in $S$; it is additive over disjoint
sets of rounds. The idea is to split the rounds into two kinds, bound the
regret on each kind separately, and add.

\emph{Splitting the rounds.} By \Cref{lem:blocks}(a), the horizon is the
disjoint union of the $m\le |K'|$ discovery blocks $\cB_{\beta_j}$ and the
$m$ epochs $\Ep_j$, so by additivity
\begin{equation}
\Regret(T)=\sum_{j=1}^{m}\Regret(\cB_{\beta_j})+\sum_{j=1}^{m}\Regret(\Ep_j).
\label{eq:block-split}
\end{equation}

\emph{Discovery blocks.} Inside a discovery block the defender may face a type
it has not yet been told about, so we make no use of what the learner does
there and charge each round the largest possible regret. Since all payoffs
lie in $[-1,1]$, a round contributes at most $2$ to \Cref{eq:regret-of-S};
each discovery block holds at most $B$ rounds, and there are at most $|K'|$
of them, so
\begin{equation}
\sum_{j=1}^{m}\Regret(\cB_{\beta_j})\;\le\;2|K'|B.
\label{eq:disc-block-sum}
\end{equation}

\emph{Epochs.} Fix an epoch $\Ep_j$. By \Cref{lem:blocks}(b), a single
instance of $\Api(C_j)$ runs throughout it and every attacker in it belongs
to $C_j$; by \Cref{lem:blocks}(c), the attackers in it form an oblivious
sequence. These are exactly the hypotheses of \Cref{lem:anytime}, which
moreover does not require the epoch length $L_j$ to be known in advance. The
lemma bounds the regret against the best fixed coverage vector \emph{for
this epoch}; since $\pvec^{*}$ is one of the candidates in that maximum, the
same bound holds against $\pvec^{*}$. Using also $|C_j|\le |K'|$, as the
catalog only ever contains types that appear,
\begin{equation}
\Regret(\Ep_j)\;\le\;c_4\,n\,|C_j|\log^{1/3}(n|C_j|)\,L_j^{2/3}
\;\le\;c_4\,n\,|K'|\log^{1/3}(n|K'|)\,L_j^{2/3}.
\label{eq:combined-epoch-block}
\end{equation}

\emph{Adding up.} The adversary decides when new types arrive and so controls
the epoch lengths, subject only to $m\le |K'|$ and $\sum_j L_j\le T$. By
H\"older's inequality with exponents $3$ and $3/2$, applied to the vectors
$(1,\dots,1)$ and $(L_1^{2/3},\dots,L_m^{2/3})$,
\begin{equation}
\sum_{j=1}^{m}L_j^{2/3}
\;\le\;m^{1/3}\Bigl(\sum_{j=1}^{m}L_j\Bigr)^{2/3}
\;\le\;|K'|^{1/3}\,T^{2/3},
\label{eq:holder}
\end{equation}
with equality when all epochs have the same length: the adversary does the
most damage by spreading the arrivals of new types evenly. Substituting
\Cref{eq:disc-block-sum,eq:combined-epoch-block,eq:holder} into
\Cref{eq:block-split} gives
\[
\Regret(T)\;\le\;c_4\,n\,|K'|^{4/3}\,T^{2/3}\log^{1/3}(n|K'|)+2|K'|B,
\]
which is \Cref{eq:block-main} with $c=c_4$; using $|K'|\le\Kmax$ gives its
$\Kmax$ form. The exponent $4/3$ collects a factor $|K'|$ from the size of
the catalog inside each epoch and a factor $|K'|^{1/3}$ from summing over at
most $|K'|$ epochs.
\end{proof}

The proof uses only one property of the inner learner: an anytime bound on
its regret within each epoch. A learner with a better rate therefore gives a
better bound. If the per-epoch regret is at most $g(|C|)\sqrt{L}$, the same
argument, with the Cauchy--Schwarz inequality in place of H\"older's,
gives $\Regret(T)\le g(|K'|)\sqrt{|K'|T}+2|K'|B$. The learner of
\citet{bernasconi2023} attains $\tilde O(\mathrm{poly}(|K|)\sqrt T)$ for a
known type set under partial information, so if it admits an anytime
version, block revelation costs only the additive delay term on top of a
$\sqrt T$ rate. We state \Cref{thm:block} with the learner of
\citet{balcan2015} because its anytime version is the one we prove
(\Cref{lem:anytime}).

\subsection{How rarely reports are needed}
\label{sec:reading}

The bound \Cref{eq:block-main} has three ingredients. The exponent $2/3$ is
that of the partial-information learner of \Cref{thm:balcan61}, not of the
model; the
extra factor $|K'|^{1/3}$ is the price of learning the catalog gradually,
since the learner is restarted once for each new type; and the additive
$2|K'|B$ is the price of hearing about new types only at the end of a block.
Because \Cref{alg:bgc} never reads $\Kmax$, the bound depends only on the
types that actually appear.

Block revelation sits between the two feedback models of the earlier
sections. As $B$ grows it approaches the partial-information model of
\Cref{sec:partial}, where no algorithm learns, and at $B=1$ it reduces to
partial information with a known type set, apart from the rounds at which
new types first appear.

Reports are costly in practice, since each is a piece of intelligence that
someone has to gather, so it matters how rarely they can arrive. The delay
is free as long as the batching term $2|K'|B$ stays below the learning term.

\begin{corollary}[how rarely reports are needed]
\label{cor:B}
If
\begin{equation}
B\;\le\;B^{*}\coloneqq\tfrac{1}{2}\,c\,n\,\Kmax^{1/3}\,T^{2/3}\log^{1/3}(n\Kmax),
\label{eq:Bstar}
\end{equation}
then $\Regret(T)=O\bigl(n\Kmax^{4/3}T^{2/3}\log^{1/3}(n\Kmax)\bigr)$, the
same rate as with a report after every round. The defender therefore needs
only
\begin{equation}
\lceil T/B^{*}\rceil\;=\;O\!\left(\frac{T^{1/3}}{n\,\Kmax^{1/3}\log^{1/3}(n\Kmax)}\right)
\label{eq:ncalls}
\end{equation}
reports over the whole horizon.
\end{corollary}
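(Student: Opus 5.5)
The plan is to substitute the hypothesis \Cref{eq:Bstar} directly into the bound of \Cref{thm:block}, stated in its $\Kmax$ form. That bound is
\[
\Regret(T)\;\le\;c\,n\,|K'|^{4/3}T^{2/3}\log^{1/3}(n|K'|)+2|K'|B .
\]
First replace $|K'|$ by $\Kmax$ in both terms. This is valid because $x\mapsto x^{4/3}\log^{1/3}(nx)$ is increasing and $|K'|\le\Kmax$. Then apply $B\le B^{*}$ to the additive term:
\[
2\Kmax B\;\le\;2\Kmax\cdot\tfrac12\,c\,n\,\Kmax^{1/3}T^{2/3}\log^{1/3}(n\Kmax)\;=\;c\,n\,\Kmax^{4/3}T^{2/3}\log^{1/3}(n\Kmax).
\]
This is exactly the learning term. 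Hence the total is at most $2c\,n\,\Kmax^{4/3}T^{2/3}\log^{1/3}(n\Kmax)$, which gives the claimed rate. The threshold $B^{*}$ was chosen precisely so that the two terms balance.

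For the statement ``the same rate as with a report after every round'', specialize \Cref{thm:block} to $B=1$. The additive term becomes $2\Kmax$, which is lower order, so the bound at $B=1$ is again $O(n\Kmax^{4/3}T^{2/3}\log^{1/3}(n\Kmax))$. The rate is therefore unchanged for every $B\le B^{*}$. If a matching comparison is wanted, it concerns only this upper bound, since the paper treats the learner as a black box. That point just needs to be worded carefully rather than proved.

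For the count \Cref{eq:ncalls}, take $B=\lfloor B^{*}\rfloor$, or $B^{*}$ itself when it is an integer. The number of reports is then $N_B=\lceil T/B\rceil$. Dividing gives $T/B^{*}=2T^{1/3}/(c\,n\,\Kmax^{1/3}\log^{1/3}(n\Kmax))$, which is the stated expression up to constants.

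The only genuine obstacle is the edge case $B^{*}<1$, or the gap between $B^{*}$ and its floor. I would handle it in two cases:
\begin{itemize}
\item If $B^{*}\ge1$, then $\lfloor B^{*}\rfloor\ge B^{*}/2$, so $\lceil T/\lfloor B^{*}\rfloor\rceil\le 2T/B^{*}+1$.
\item If $B^{*}<1$, then $T^{2/3}$ is bounded by a constant over $n\Kmax^{1/3}$. In that case the regret bound is trivial, since the bound exceeds $T$, and the $O(\cdot)$ statement on the number of reports is read with the usual $+1$ absorbed.
\end{itemize}
With this observation the rest is a direct substitution.
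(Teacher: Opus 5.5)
Your argument is the paper's own: bound the batching term by $2\Kmax B^{*}$, note that this equals the learning term of \Cref{eq:block-main} at $|K'|=\Kmax$, so the bound at most doubles, and read the report count off $T/B^{*}$; rounding to $\lfloor B^{*}\rfloor\ge B^{*}/2$ is more careful than the paper, which simply takes $B=B^{*}$. The one slip is your $B^{*}<1$ case: the claim that the bound exceeds $T$ does not follow, and it is not needed, because the regret hypothesis is then vacuous ($B\ge 1$) and $\lceil T/B^{*}\rceil\le 2T/B^{*}$ holds outright, whereas the $+1$ that must be absorbed actually comes from your first case when $B^{*}$ is comparable to or larger than $T$.
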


The threshold $B^{*}$ guides whoever schedules the reports; the defender
itself need not know it, and \Cref{alg:bgc} never uses it.

\begin{proof}
For $B\le B^{*}$ the batching term satisfies $2|K'|B\le2\Kmax B^{*}$, which
is the learning term of \Cref{eq:block-main} evaluated at $|K'|=\Kmax$, so the
bound at most doubles. The number of reports is $\lceil T/B\rceil$, which is
smallest at $B=B^{*}$.
\end{proof}

For example, with $T=10^{6}$ rounds and a handful of types,
$T^{2/3}=10^{4}$, so, ignoring the factors in $n$ and $\Kmax$, a report
every $10{,}000$ rounds, about $100$ reports in all, gives the same rate as
a report after every round.

\subsection{The additive term is necessary}
\label{sec:blocklb-main}

The additive term of \Cref{thm:block} is not an artifact of the analysis:
no algorithm can avoid a cost of order $\Kmax B$.

\begin{theorem}[a matching lower bound in $B$]
\label{thm:blocklb}
Let $G_3$ and $\Univ$ be as in \Cref{thm:impossible}. For every $B\ge 1$,
every $\Kmax\ge 1$, every $T\ge \Kmax B$ and every online algorithm for
\Cref{prob:block}, there is an instance with $|K|=\Kmax$ on which
\begin{equation}
\Regret(T)\;\ge\;\Kmax\,(B-2).
\label{eq:blocklb}
\end{equation}
\end{theorem}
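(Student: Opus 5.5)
The plan is to build a hard instance from $\Kmax$ consecutive ``fresh'' blocks. In the $k$-th of them a new single type attacks, and within a block the defender sees only attacked targets. Inside such a block the defender faces exactly the search problem of \Cref{lem:count}, for a horizon of $B$ rounds. Concretely, take the first $\Kmax$ revelation blocks $\cB_1,\dots,\cB_{\Kmax}$, which exist since $T\ge\Kmax B$. In block $\cB_k$ the attacker is a single type $\beta_k$, used in every round of that block. After the last of these blocks, $\beta_{\Kmax}$ attacks in all remaining rounds. Each $\beta_k$ is drawn independently and uniformly from a family of $M=2^{B}$ threshold types with pairwise disjoint intervals. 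I will choose all these types to lie inside one common subinterval $J$ of $(0,1]$, with a single type $\gamma$ whose interval is $J$ itself. The families could be the dyadic ones of \Cref{eq:dyadic} rescaled into $J$.

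The first obstacle is the benchmark. The regret in \Cref{eq:regret} is measured against one fixed $\pvec^{*}$, and a single coverage vector cannot steer several disjoint types. So I will make the types in different blocks \emph{nested} rather than disjoint: every $\beta_k$ contains a common point $q$ in its interval. The unknown part of $\beta_k$ is then a pair of thresholds $r_1^{(k)}<q<r_2^{(k)}$, and the family for block $k$ is a set of $M$ intervals around $q$. These intervals are not pairwise disjoint, so \Cref{cor:onehit} does not apply directly. The fix is to have the defender pay whenever it strikes target~2 with a vector that does not distinguish the type. A cleaner alternative, which I will try first, keeps the families disjoint in the coordinate $p_1$ but changes the benchmark accounting. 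In block $k$ the adversary uses $\beta_k$ only for that block. The comparator is $\pvec^{*}$ steering $\beta_{\Kmax}$, which attacks for all remaining $T-\Kmax B\ge 0$ rounds. The regret is then at least the expected failures over all blocks minus the benchmark's failures, which is at most $(\Kmax-1)B$. This loses too much, so nesting is needed. Concretely, let $\beta_k=\alpha_{r_1,r_2}$ with $r_1\in\{q-\delta_k 2^{-B}(i)\}$ ranging over a grid below $q$, and $r_2=q+$ a fixed small margin. By \Cref{lem:hit}, steering needs $p_1\in(r_1,r_2]$ and $p_3\ge 1-r_2$. With nested intervals sharing the right end, all types are steered by $p_1$ just below $q+$margin, so $\pvec^{*}$ with $p_1$ near the common right end and $p_3=1-p_1$ steers every $\beta_k$ strictly. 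The benchmark therefore earns $0$ everywhere.

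Hardness must then come from the left thresholds. I will restrict the defender's useful plays using \Cref{lem:hit}. Set the right end equal to $q$ exactly, so $r_2=q$ for every type, and vary $r_1$ over the $M$ disjoint cells of $(0,q)$. Now I need an analogue of \Cref{cor:onehit}: a coverage vector steering type $i$ must have $p_1>r_1^{(i)}$. This holds for all types with smaller $r_1$, so the counting argument fails. At this point I will redesign so that each $\beta_k$'s interval is disjoint from the others \emph{within its block's family}, while $\pvec^{*}$ lies in the intersection across blocks. Intervals $(r_1,r_2]$ with both ends varying, shaped as a ``staircase'' containing $q$, cannot be pairwise disjoint. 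I therefore expect this benchmark-compatibility step to be the main obstacle. My first attempt will be to exploit the third target: use the $p_3\ge 1-r_2$ condition as a second independent coordinate. Let $r_1$ vary over a grid in one block family while the benchmark sits at a common corner.

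Granting a family for which the defender steers at most one candidate per history, while $\pvec^{*}$ steers all of them, the rest is straightforward. The defender learns nothing about $\beta_k$ until the report at the end of $\cB_k$. Reports $D_1,\dots,D_{k-1}$ are independent of $\beta_k$, so conditioned on them the defender's play in $\cB_k$ is a function of the in-block targets and its coins. \Cref{lem:count} with $M=2^B$ and horizon $B$ gives at least $B-2$ expected failures in $\cB_k$. Summing over $k\le\Kmax$ gives \Cref{eq:blocklb} on average, and fixing the best draw yields a deterministic instance with $|K|=\Kmax$, as in the proof of \Cref{thm:impossible}.
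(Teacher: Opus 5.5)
\textbf{There is a genuine gap.} It sits exactly at the step you flag as the main obstacle: you never produce the family of types, and the framework you set up cannot produce one.

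You draw each $\beta_k$ independently from a family $F_k$ fixed in advance. By \Cref{lem:hit}, steering every type in a set $S$ requires $p_1>\max_S r_1$ and $p_3\ge 1-\min_S r_2$. Since $p_1+p_3\le 1$, this is feasible if and only if $\max_S r_1<\min_S r_2$, that is, if and only if the intervals share a point. So the third target gives you no second coordinate, and everything reduces to interval intersection on a line.

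Now suppose $F_1$ contains disjoint $(a,b]$ and $(c,d]$ with $b\le c$. Any interval $(e,f]$ of $F_2$ meeting both has $e<b<f$, so it contains the point $b$. Two members of $F_2$ therefore cannot be disjoint. Hence, with fixed families, independent draws, $\Kmax\ge 2$ and $M\ge 2$, some draws produce realized types that no single coverage vector steers. On those draws the benchmark is no longer $0$ and regret no longer counts failures; steering only $\beta_{\Kmax}$ instead loses $(\Kmax-1)B$, as you note. The property you grant at the end is also contradictory if $\pvec^{*}$ is a fixed vector steering all candidates, since the defender could simply play it.

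\textbf{The missing idea.} Let the candidate family of block $k$ depend on the realized type of block $k-1$, and let the benchmark depend on the draw. Take $M=2^B$ and independent uniform indices $k_1,\dots,k_{\Kmax}\in[M]$. Starting from $J_0=(0,1]$, let $J_k$ be the $k_k$-th of the $M$ equal sub-intervals of $J_{k-1}$, and let $\beta_k$ be the threshold type with interval $J_k$.
\begin{itemize}
\item Within block $k$ the $M$ candidates are pairwise disjoint, so \Cref{cor:onehit} and \Cref{lem:count} apply.
\item Across blocks the realized intervals are nested. So $\pvec^{*}=(\mu,0,1-\mu)$, with $\mu$ the midpoint of $J_{\Kmax}$, steers every $\beta_k$ strictly, and the benchmark earns $0$ on that instance.
\end{itemize}

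\textbf{One adjustment to your conditioning step.} The reports before block $k$ are not independent of $\beta_k$: they reveal $J_{k-1}$ and hence the family. However, for a deterministic algorithm the whole transcript before block $k$ is a function of $k_1,\dots,k_{k-1}$ alone. Conditionally on it, $\beta_k$ is therefore uniform over $M$ disjoint candidates, which is all \Cref{lem:count} needs. It gives more than $B-2$ expected failures per block. Summing over the $\Kmax$ blocks, averaging over the coins, and fixing a good draw then finishes the proof as you describe.
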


\begin{proof}[Proof sketch]
The adversary uses the game $G_3$ of \Cref{sec:gadget} and picks $\Kmax$
types that one fixed coverage vector would steer onto target~2 all at once,
so the best fixed strategy loses nothing and every failed round counts as
regret. The types arrive one per revelation block, and each new type is
drawn at random from $2^B$ candidates; the defender learns which one it was
only when the oracle names it at the end of that block. Inside the block
the defender must therefore find the new type by trial and error from the
attacked targets alone, and by \Cref{lem:count} such a search over $2^B$
candidates fails in more than $B-2$ of its $B$ rounds in expectation over
the draw. This cost is paid once for each of the $\Kmax$ types. Since the
bound holds on average over the random draw, some fixed draw attains it,
and that fixed draw is the instance in the statement. The full proof is in
\Cref{app:blocklb}.
\end{proof}

Together with \Cref{thm:block}, the cost of batching is $\Theta(\Kmax B)$
for $B\ge 3$. \Cref{app:blocklb} compares the two bounds and shows that, in
the regime where the batching term dominates, they differ by a factor of at
most $2B/(B-2)$, up to lower-order terms. At $\Kmax=1$ and $B=T$ the lower
bound reads $T-2$, so \Cref{thm:impossible} is the limiting case of
\Cref{thm:blocklb} in which no report arrives in time. We now step back and
summarize what the three settings show together.

\section{Conclusions and Future Work}
\label{sec:conc}

We studied repeated Stackelberg security games in which the defender does
not know the set of attacker types, only that it is finite, and showed that
learnability then hinges on what the defender observes. Seeing the type of
each attacker, \textsc{Grow-Catalog} attains $\tilde O(n\Kmax\sqrt T)$
regret without knowing $\Kmax$, the known-type rate up to a factor
$\sqrt{\Kmax}$. Seeing only the attacked target, no algorithm achieves
sublinear regret, even against a single type, because the defender cannot
identify its opponent. A periodic list of the types seen so far is enough to
restore sublinear regret, $\tilde O(n\Kmax^{4/3}T^{2/3}+\Kmax B)$, with an
additive cost of the delay that is unavoidable, and only $O(T^{1/3})$ such
reports are needed.

Several questions remain open. Neither the factor $\sqrt{\Kmax}$ of
\Cref{thm:main} nor the factor $\Kmax^{1/3}$ of \Cref{thm:block} is known to
be necessary. It is also open whether the $\sqrt T$ learner of
\citet{bernasconi2023} admits an anytime version, which would bring block
revelation to a $\sqrt T$ rate plus the delay term, and, since a single type
costs only $\Theta(\log M)$ at precision $1/M$, whether a few types already
force polynomial regret in security games at polynomial payoff precision, as
three do in the general games of \citet{bollini2026}. Finally, our adversary is oblivious;
extending the results to adaptive attackers, or to reports that are noisy or
incomplete, would bring the model closer to the intelligence that defenders
actually receive.

\bibliography{refs}

\clearpage
\appendix

\section{Proofs for \texorpdfstring{\Cref{sec:partial}}{Section 4}}
\label{app:partial-proofs}

\subsection{Proof of \texorpdfstring{\Cref{lem:hit}}{Lemma} and \texorpdfstring{\Cref{cor:onehit}}{Corollary}}
\label{app:proof-hit}

\paragraph{\Cref{lem:hit}.} Ties are broken toward the lowest-indexed target.
For target~2 to be selected it must therefore be \emph{strictly} better than
target~1, since a tie is resolved in favor of target~1, but only
\emph{weakly} better than target~3, since a tie is resolved in favor of
target~2. By \Cref{eq:type-U}, these conditions are respectively
\[
U_\alpha(2,\pvec)>U_\alpha(1,\pvec)
\iff 0>r_1-p_1
\iff p_1>r_1
\]
and
\[
U_\alpha(2,\pvec)\ge U_\alpha(3,\pvec)
\iff 0\ge(1-r_2)-p_3
\iff p_3\ge1-r_2.
\]
Conversely, if both conditions hold, then target~2 beats target~1 strictly
and ties with or beats target~3, so lowest-index tie-breaking selects
target~2. This proves \Cref{eq:hit-iff}. For the second claim, the
single-resource constraint $p_1+p_2+p_3\le1$ together with $p_2\ge 0$ gives
\[
p_1\le1-p_3\le1-(1-r_2)=r_2,
\]
where the second inequality uses $p_3\ge1-r_2$. Together with $p_1>r_1$, this
gives $p_1\in(r_1,r_2]$.\hfill$\square$

\paragraph{\Cref{cor:onehit}.} Suppose two of the types, say $\alpha^{(j)}$
and $\alpha^{(j')}$, both strike target~2 under the same coverage vector
$\pvec$. The second assertion of \Cref{lem:hit}, applied to each of them,
places the single number $p_1$ in the interval of $\alpha^{(j)}$ and in the
interval of $\alpha^{(j')}$. Since the intervals are pairwise disjoint, this
forces $j=j'$.\hfill$\square$

\subsection{Proof of \texorpdfstring{\Cref{lem:count}}{Lemma}}
\label{app:proof-count}

The proof rests on a counting argument: we bound, for each round $t$
separately, how many of the $M$ candidate indices can be steered onto
target~2 for the very first time at that round. Since the events
$\{\tau=t\}$ for different $t$ are disjoint, summing those bounds gives the
tail bound \Cref{eq:tau-tail}, and a standard identity for the expectation of
a nonnegative integer random variable then gives \Cref{eq:count-final}.
Write $\tau(j)$ for the value of $\tau$ when the hidden index is $j$.

\emph{Step 1: how many histories are possible.} Fix a round $t$ and
consider the indices $j$ with $\tau(j)=t$. For such an index the attacker
struck target~1 or target~3 in each of the rounds $1,\dots,t-1$; had it
struck target~2, we would have $\tau(j)<t$. The defender's record of the
past is the string $(i_1,\dots,i_{t-1})$ of attacked targets, so at the start
of round $t$ it is in one of at most $2^{t-1}$ histories. Write $H_t$ for
this set of histories, so $|H_t|\le 2^{t-1}$.

\emph{Step 2: each history determines one coverage vector.} By hypothesis
the coverage vector played in round $t$ is a fixed function $\pvec_t(h)$ of
the history $h$. In particular, two indices that produce the same history
lead the defender to play the same coverage vector.

\emph{Step 3: each coverage vector steers at most one candidate.} The
intervals are pairwise disjoint, so by \Cref{cor:onehit} at most one index
$j'$ has $b_{\alpha^{(j')}}(\pvec_t(h))=2$.

\emph{Step 4: combining the three.} Let $j$ be any index with $\tau(j)=t$.
By Step~1 it leads the defender into some history $h\in H_t$; by Step~2 the
defender then plays $\pvec_t(h)$; and by the definition of $\tau$ the
attacker $\alpha^{(j)}$ strikes target~2 under it. By Step~3, at most one
index does so. Hence distinct indices with $\tau=t$ arise from distinct
histories, and
\[
|\{j:\tau(j)=t\}|\;\le\;|H_t|\;\le\;2^{t-1}.
\]
Since $j$ is uniform on $[M]$, dividing by $M$ turns this count into a
probability, $\Pr[\tau=t]\le 2^{t-1}/M$. The first few values are
$\Pr[\tau=1]\le 1/M$, $\Pr[\tau=2]\le 2/M$ and $\Pr[\tau=3]\le 4/M$.

\emph{Step 5: the tail bound.} The events $\{\tau=t\}$, $t=1,\dots,u$, are
pairwise disjoint, so their probabilities add, and summing the geometric
series $1+2+\cdots+2^{u-1}=2^u-1$ gives
\[
\Pr[\tau\le u]\;=\;\sum_{t=1}^{u}\Pr[\tau=t]\;\le\;\frac{1}{M}\sum_{t=1}^{u}2^{t-1}
\;=\;\frac{2^u-1}{M},
\]
which is \Cref{eq:tau-tail}.

\emph{Step 6: from the tail bound to the expectation.} Put
$X=\min\{\tau-1,T\}$, a random variable with values in $\{0,1,\dots,T\}$. For
such a variable the tail-sum formula gives $\EE[X]=\sum_{s=0}^{T-1}\Pr[X>s]$.
Now $X>s$ holds exactly when $\tau>s+1$, so writing $u=s+1$,
\[
\EE[X]\;=\;\sum_{u=1}^{T}\Pr[\tau>u]\;=\;\sum_{u=1}^{T}\bigl(1-\Pr[\tau\le u]\bigr)
\;\ge\;T-\sum_{u=1}^{T}\frac{2^{u}-1}{M},
\]
where the inequality substitutes \Cref{eq:tau-tail}. Finally
$\sum_{u=1}^{T}2^u=2^{T+1}-2$ and $\sum_{u=1}^{T}1=T$, so the subtracted sum
equals $(2^{T+1}-2-T)/M$, which is \Cref{eq:count-final}.\hfill$\square$

\subsection{Proof of \texorpdfstring{\Cref{thm:impossible}}{Theorem}}
\label{app:proof-impossible}

Set $M=2^{T}$ and take the family $\Univ_M\subset\Univ$ of \Cref{eq:dyadic},
whose $M$ intervals each have width $2^{-T}$. For $j\in[M]$, the instance
$j$ is $K=\{\alpha^{(j)}\}$ with $a_t=\alpha^{(j)}$ in every round. It is
admissible for every $\Kmax\ge1$, since $|K|=1$, and the adversary is
oblivious. Fix an online algorithm, write $\Regret_j(T)$ for its regret on
instance $j$, and let $\overline{R}=\frac1M\sum_{j=1}^M\Regret_j(T)$ be the
average over the instances. We show that $\overline{R}$ is at least the
right-hand side of \Cref{eq:impossible}; then some instance $j$ has
$\Regret_j(T)\ge\overline{R}$, which is the claim.

\emph{Part 1: the benchmark earns zero, so regret counts failed rounds.}
On instance $j$ let $\pvec^{*}=\bigl(\tfrac{2j-1}{2M},0,1-\tfrac{2j-1}{2M}\bigr)$,
whose first coordinate is the midpoint of $I_j$. Writing $r_1=(j-1)/M$ and
$r_2=j/M$ for the endpoints of $I_j$, we have $p^{*}_1>r_1$ and
$p^{*}_3=1-p^{*}_1>1-r_2$, both strictly. By \Cref{eq:hit-iff},
$b_{\alpha^{(j)}}(\pvec^{*})=2$ whatever the tie-breaking rule, so
$\pvec^{*}$ earns $0$ in every round. As $U_d\le 0$ everywhere by
\Cref{eq:g3-payoff}, no coverage vector does better, so $\pvec^{*}$ is a
maximizer in \Cref{eq:pstar} and the benchmark term of \Cref{eq:regret}
vanishes. Every other round is worth $-1$, so
\begin{equation}
\Regret_j(T)=\EE\bigl[\#\{t\le T:\text{the defender fails at }t\}\bigr],
\label{eq:regret-is-misses}
\end{equation}
the expectation being over the algorithm's coins.

\emph{Part 2: deterministic algorithms.} Suppose first that the algorithm is
deterministic. Under partial information its only feedback is the attacked
target, so its coverage vector in round $t$ is a function of
$i_1,\dots,i_{t-1}$, which is the setting of \Cref{lem:count}. Every round
before the first success $\tau$ is a failure, so the count in
\Cref{eq:regret-is-misses} is at least $\min\{\tau-1,T\}$. Averaging over
$j$ uniform on $[M]$ and applying \Cref{eq:count-final} with $M=2^T$,
\[
\overline{R}\ \ge\ \EE_j\bigl[\min\{\tau-1,T\}\bigr]
\ \ge\ T-\frac{2^{T+1}-T-2}{2^{T}}
\ =\ T-2+\frac{T+2}{2^{T}}.
\]

\emph{Part 3: randomized algorithms.} A randomized algorithm is a
distribution over deterministic ones, and its coins are independent of $j$.
The benchmark of instance $j$ does not depend on the coins, so
$\Regret_j(T)$ is the average over the coins of the regret of the
deterministic algorithm they select, and so is $\overline{R}$. Each
deterministic algorithm satisfies the bound of Part~2, hence so does their
average. This is the easy direction of Yao's principle \citep{yao1977}, and
it completes the proof.\hfill$\square$

\subsection{Proof of \texorpdfstring{\Cref{prop:sqrtT}}{Proposition}}
\label{app:proof-sqrtT}

Draw $a_1,\dots,a_T$ i.i.d.\ uniformly from $K$, fix an online algorithm,
and let $\overline{R}$ be its regret averaged over this draw. It suffices to
show $\overline{R}\ge\tfrac12\sqrt{\lfloor T/2\rfloor}$, since then some
fixed sequence has at least this regret. The two types have intervals
$(0,\tfrac12]$ and $(\tfrac12,1]$, which are disjoint, so by
\Cref{cor:onehit} any single coverage vector steers at most one of them onto
target~2. Write $N_i=|\{t:a_t=\alpha^{(i)}\}|$ for the number of rounds in
which type $i$ attacks, so that $N_1+N_2=T$.

\emph{Benchmark.} A fixed coverage vector steers at most one of the two
types, so on the rounds of the other type it fails and earns $-1$. Its total
payoff is therefore at most $-\min\{N_1,N_2\}$, and this is achieved: the
vector $(\tfrac14,0,\tfrac34)$ has $p_1=\tfrac14\in(0,\tfrac12]$ and
$p_3=\tfrac34\ge 1-\tfrac12$, so by \Cref{eq:hit-iff} it steers
$\alpha^{(1)}$ on all $N_1$ of its rounds, while $(\tfrac34,0,\tfrac14)$
steers $\alpha^{(2)}$ on all $N_2$ of its rounds. The better of the two
attains $-\min\{N_1,N_2\}$, so the benchmark of \Cref{eq:pstar} has value
exactly $-\min\{N_1,N_2\}$.

\emph{Algorithm.} Fix a round $t$ and condition on the history up to that
round. The history is a function of $a_1,\dots,a_{t-1}$ and of the
algorithm's own coins, and the sequence is drawn i.i.d., so the history
carries no information about $a_t$; in particular the coverage vector
$\pvec_t$, which is determined by the history, is independent of $a_t$. By
\Cref{cor:onehit}, $\pvec_t$ steers at most one of the two types, and $a_t$
equals that type with probability exactly $\tfrac12$. Hence the probability
of success in round $t$ is at most $\tfrac12$ conditionally on any history,
and therefore unconditionally. By linearity of expectation over the $T$
rounds, the algorithm's expected payoff is at most $-T/2$.

\emph{Combining.} Subtracting the algorithm's expected payoff from the
benchmark's gives
\[
\overline{R}\;\ge\;\tfrac{T}{2}-\EE[\min\{N_1,N_2\}].
\]
The identity $\min\{a,b\}=\tfrac12\bigl((a+b)-|a-b|\bigr)$ with $a=N_1$,
$b=N_2$ and $N_1+N_2=T$ gives $\min\{N_1,N_2\}=\tfrac12\bigl(T-|N_1-N_2|\bigr)$,
so
\[
\overline{R}\;\ge\;\tfrac{1}{2}\EE\bigl[|N_1-N_2|\bigr].
\]
Each $a_t$ is an independent fair coin flip between the two types, so
$N_1-N_2$ is the position after $T$ steps of a simple symmetric random walk
on the integers, and \Cref{lem:walk} below gives
$\EE|N_1-N_2|\ge\sqrt{\lfloor T/2\rfloor}$, which proves \Cref{eq:sqrtT}.
\hfill$\square$

The bound is tight in a strong sense: the success probability $\tfrac12$ is
attained by any algorithm that always plays inside one of the two intervals,
so all of them, follow the leader and an algorithm simply told $K$ alike,
have regret exactly $\tfrac12\EE|N_1-N_2|=\sqrt{T/(2\pi)}\,(1+o(1))$ on
average over the draw, the value \Cref{fig:impossible}(b) measures.

\begin{lemma}[random walk]
\label{lem:walk}
Let $S_T=\sum_{t\le T}\varepsilon_t$ with $\varepsilon_t\in\{-1,+1\}$
independent and uniform. Then $\EE|S_T|\ge\sqrt{\lfloor T/2\rfloor}$.
\end{lemma}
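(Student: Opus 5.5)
The plan is to compute $\EE|S_T|$ exactly through the standard closed form for the simple random walk, and then bound the resulting central binomial coefficient from below. A cruder route, via $\EE|S_T|\ge(\EE S_T^2)^{3/2}/(\EE S_T^4)^{1/2}$ (Hölder), gives only a constant times $\sqrt T$ and would miss the clean constant. First I reduce to even $T$. Write $T=2k$ or $T=2k+1$ with $k=\lfloor T/2\rfloor$. If $T=2k+1$, then $S_T=S_{2k}+\varepsilon_T$ with $\varepsilon_T$ independent of $S_{2k}$ and of mean zero. Since $x\mapsto|x|$ is convex, conditional Jensen gives $\EE|S_{2k+1}|\ge\EE|S_{2k}|$. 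So it suffices to show $\EE|S_{2k}|\ge\sqrt k$ for every $k\ge0$; the case $k=0$ is trivial.

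For the even case I use the classical identity $\EE|S_{2k}|=2k\binom{2k}{k}4^{-k}$. To derive it, write $S_{2k}=2X-2k$ with $X\sim\mathrm{Bin}(2k,1/2)$. By symmetry,
\[
\EE|S_{2k}|=2\,\EE\bigl[(S_{2k})_+\bigr]=4\sum_{x>k}(x-k)\binom{2k}{x}4^{-k},
\]
and this sum telescopes via $x\binom{2k}{x}=2k\binom{2k-1}{x-1}$. Alternatively, I can use the martingale fact $\EE|S_{n+1}|-\EE|S_n|=\Pr[S_n=0]$, which holds because $|S|$ increases in expectation only from $0$. Summing $\Pr[S_{2j}=0]=\binom{2j}{j}4^{-j}$ for $j<k$ gives $\EE|S_{2k}|=\sum_{j=0}^{k-1}2\binom{2j}{j}4^{-j}$, since odd times contribute nothing at zero. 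Either route is routine, and I will take whichever telescopes more cleanly.

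The remaining step, which I expect to be the main obstacle since the constant must be exactly $1$, is the lower bound $\binom{2k}{k}4^{-k}\ge 1/(2\sqrt k)$. Given that, $\EE|S_{2k}|\ge 2k/(2\sqrt k)=\sqrt k$. Set $c_k=\binom{2k}{k}4^{-k}$, and note $c_k/c_{k-1}=(2k-1)/(2k)$. I prove $c_k^2\ge 1/(4k)$ by induction. The base case is $c_1^2=1/4$. For the inductive step,
\[
c_k^2\ge\frac{1}{4(k-1)}\Bigl(\frac{2k-1}{2k}\Bigr)^2,
\]
so it suffices that $k(2k-1)^2\ge 4k^2(k-1)$. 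Expanding, this reads $4k^3-4k^2+k\ge 4k^3-4k^2$, which holds. This completes the even case, and with the Jensen reduction, the lemma.
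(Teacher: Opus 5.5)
Your proof is correct and matches the paper's: the same reduction to even length (your one-step conditional Jensen is the paper's submartingale monotonicity), the same closed form $\EE|S_{2k}|=2k\binom{2k}{k}4^{-k}$, and the same induction showing $\binom{2k}{k}4^{-k}\ge 1/(2\sqrt k)$, which you index from $k-1$ to $k$. One slip, in your optional second derivation only: the increments $\Pr[S_{2j}=0]$ sum to $\EE|S_{2k}|=\sum_{j=0}^{k-1}\binom{2j}{j}4^{-j}$ with no factor $2$ (at $k=1$ your formula gives $2$, while $\EE|S_2|=1$), so use the binomial telescoping route, which you carry out correctly.
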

\label{app:walk}

\begin{proof}
\emph{Reduction to even length.} Since $|\cdot|$ is convex and $S_m$ is a
martingale, $|S_m|$ is a submartingale, so $\EE|S_m|$ is nondecreasing in
$m$. With $\nu=\lfloor T/2\rfloor$ we have $2\nu\le T$, hence
$\EE|S_T|\ge\EE|S_{2\nu}|$.

\emph{Closed form.} For a walk of $2\nu$ steps the classical identity
$\EE|S_{2\nu}|=2\nu\binom{2\nu}{\nu}4^{-\nu}$ holds \citep{feller1968}, so it
suffices to show
\begin{equation}
\binom{2\nu}{\nu}4^{-\nu}\;\ge\;\frac{1}{2\sqrt{\nu}}.
\label{eq:centralbinom}
\end{equation}

\emph{Central binomial bound.} We prove \Cref{eq:centralbinom} by induction
on $\nu$. At $\nu=1$ both sides equal $1/2$. Passing from $\nu$ to $\nu+1$
multiplies the left-hand side by $(2\nu+1)/(2\nu+2)$ and the right-hand side
by $\sqrt{\nu/(\nu+1)}$, so the inductive step holds provided the first ratio
is at least the second. Both are positive, so squaring and cross-multiplying,
this is $(2\nu+1)^2(\nu+1)\ge\nu(2\nu+2)^2$, that is,
$4\nu^3+8\nu^2+5\nu+1\ge 4\nu^3+8\nu^2+4\nu$, which reduces to $\nu+1\ge 0$.
Hence \Cref{eq:centralbinom} holds for all $\nu\ge1$, and
$\EE|S_T|\ge 2\nu\cdot\frac{1}{2\sqrt\nu}=\sqrt{\nu}=\sqrt{\lfloor T/2\rfloor}$.
\end{proof}

\section{Proofs for \texorpdfstring{\Cref{sec:block}}{Section 5}}
\label{app:block}

\subsection{The partial-information learner of \texorpdfstring{\citet{balcan2015}}{Balcan et al.}}
\label{app:balcan-alg}

The proof of \Cref{lem:anytime} needs more than the statement of
\Cref{thm:balcan61}: it reopens the argument behind it. This subsection
describes the learner $\Api(C)$ of \citet[Algorithm 1]{balcan2015} and the
two properties of it that the argument uses. Everything here is due to
\citet[Sec.~6]{balcan2015}; we restate it in our notation, with the catalog
$C$ in the role of their known type set, a horizon $H\le T$ known in
advance, and $\epsilon=1/T$ as in \Cref{alg:bgc}. Throughout, the attacker
sequence is oblivious and takes values in $C$, and we write
$u_t(\pvec)=U_d(b_{a_t}(\pvec),\pvec)\in[-1,1]$ for the payoff that $\pvec$
would earn at round $t$.

\paragraph{Experts and the obstacle.}
Like \Cref{alg:gc}, the learner competes only against the finite set
$\cE(C;\epsilon)$ of \Cref{lem:43}, which on any $H\le T$ rounds loses at
most $2\epsilon H\le 2$ against the best $\pvec\in\cP$. Under full
information one would run the expert algorithm of \Cref{prop:hedge} over
$\cE(C;\epsilon)$, charging every expert $\pvec$ the loss $-u_t(\pvec)$ at
every round. Under partial information the defender observes only the
attacked target $i_t$, and $u_t(\pvec)$ for $\pvec\ne\pvec_t$ depends on
$a_t$, so these losses are not available. \citet{balcan2015} instead give the
expert algorithm one \emph{estimated} loss vector per block of rounds. We
call it the \emph{background learner}.

\paragraph{Estimation blocks.}
The horizon $[1,H]$ is split into $Z$ consecutive \emph{estimation blocks}
$B_1,\dots,B_Z$ of equal length $G=H/Z$; these are unrelated to the
revelation blocks $\cB_\nu$. At the start of $B_\tau$ the background learner
holds a distribution $q_\tau$ over $\cE(C;\epsilon)$, with $q_1$ uniform.
Inside $B_\tau$, $d\le|C|$ rounds chosen uniformly at random are
\emph{exploration} rounds, on which fixed probing strategies, described
below, are played. On every other round, an \emph{exploitation} round, the
learner draws $\pvec_t\sim q_\tau$ and plays it. At the end of the block the
exploration rounds yield, for every $\pvec\in\cE(C;\epsilon)$, an estimate
$\hat c_\tau(\pvec)$ of the block's average payoff
\[
c_\tau(\pvec)\;=\;\frac{1}{G}\sum_{t\in B_\tau}u_t(\pvec),
\]
and the background learner receives the loss vector $-\hat c_\tau$ as its
$\tau$-th round and returns $q_{\tau+1}$. The number of blocks is
\citep[Algorithm 1]{balcan2015}
\begin{equation}
Z\;=\;n\bigl(H^{2}\log(n|C|)\bigr)^{1/3},
\label{eq:Z}
\end{equation}
rounded to an integer. It is chosen to balance the two error terms of
\Cref{eq:chain} below, and it is the only place where the learner uses $H$.

\paragraph{The estimates.}
Few probes suffice because, over a block, the payoff of every expert depends
on the attackers only through how often each type attacked. Let
$m_\tau\in\NN^{C}$ count the rounds of $B_\tau$ at which each type of $C$
attacked. Take $\pvec\in\cP_\sigma(C)$ (\Cref{eq:Psigma}), so that under
$\pvec$ each type $\alpha$ attacks $\sigma(\alpha)$, and let
$w_{\sigma,i}\in\{0,1\}^{C}$ mark the types $\alpha$ with $\sigma(\alpha)=i$.
Then $w_{\sigma,i}\cdot m_\tau$ is the number of rounds of $B_\tau$ at which
$\pvec$ would be answered by an attack on $i$, so
\begin{equation}
c_\tau(\pvec)\;=\;\frac{1}{G}\sum_{i\in N}U_d(i,\pvec)\;w_{\sigma,i}\cdot m_\tau .
\label{eq:ctau-app}
\end{equation}
A single probe estimates a single term $w_{\sigma,i}\cdot m_\tau/G$: play a
fixed $\pvec_w\in\cP_\sigma(C)$ at a uniformly random round of $B_\tau$ and
record $\hat p_\tau(w)=1$ if target $i$ is attacked and $0$ otherwise. Under
$\pvec_w$ the attacker strikes $i$ exactly when it is marked by
$w_{\sigma,i}$, and since the sequence is fixed in advance and the round is
uniform, $\EE[\hat p_\tau(w)]=w_{\sigma,i}\cdot m_\tau/G$
\citep[Lemma 6.3]{balcan2015}. The map $w\mapsto w\cdot m_\tau$ is linear,
so it suffices to probe a basis of the span of the finite set
$\mathcal{W}=\{w_{\sigma,i}\}$, whose dimension is $d\le|C|$. To keep the
estimates bounded, \citet{balcan2015} probe a \emph{barycentric spanner}
\citep{awerbuch2008}: a basis $b_1,\dots,b_d\in\mathcal{W}$ such that every
$w\in\mathcal{W}$ equals $\sum_{j}\lambda_j(w)\,b_j$ with all
$\lambda_j(w)\in[-1,1]$. Substituting into \Cref{eq:ctau-app} gives the
estimate \citep[Eq.~2]{balcan2015}
\begin{equation}
\hat c_\tau(\pvec)\;=\;\sum_{i\in N}U_d(i,\pvec)\sum_{j=1}^{d}
\lambda_j(w_{\sigma,i})\,\hat p_\tau(b_j),
\label{eq:chat-app}
\end{equation}
which has the two properties the argument uses
\citep[Lemma 6.4]{balcan2015}:
\begin{equation}
\EE[\hat c_\tau(\pvec)]=c_\tau(\pvec)
\qquad\text{and}\qquad
|\hat c_\tau(\pvec)|\le\kappa\coloneqq n|C| .
\label{eq:est-props}
\end{equation}
The first follows from linearity, the second from $|U_d|\le 1$,
$|\lambda_j|\le 1$, $\hat p_\tau\in\{0,1\}$ and $d\le|C|$. The exploration
rounds of $B_\tau$ are drawn afresh, independently of earlier blocks, so the
first property also holds conditionally on $q_\tau$.

\subsection{Proof of \texorpdfstring{\Cref{lem:anytime}}{Lemma}}
\label{app:anytime}

By \Cref{app:balcan-alg}, $\Api(C)$ uses its horizon $H$ only through the
number of estimation blocks $Z$ of \Cref{eq:Z}, chosen to balance
exploration against exploitation at that particular $H$. Inside an epoch the
horizon is the epoch length, which is unknown when the learner starts, so $Z$
cannot be set. This difficulty differs from the one addressed in
full information feedback: there the only horizon-dependent quantity was a learning
rate, which can be set without knowing the horizon. Here the horizon fixes
how often the algorithm explores, and no choice of background learner makes
that horizon-free, so we keep $\Api(C)$ and wrap it. Two ingredients
suffice: (i) the bound of \Cref{thm:balcan61}, though stated for a run tuned
to a horizon $H$, in fact bounds the regret at every prefix of that run, not
only at its end; (ii) a doubling wrapper removes the tuning.

\emph{(i) The horizon-tuned bound is also a prefix bound.}
Fix $H\le T$ and run $\Api(C)$ tuned to it. Fix $\tau_0\in\{1,\dots,Z\}$;
the first $\tau_0$ blocks are the rounds $t\le\tau_0G$. We follow the chain
of inequalities in the proof of \citet[Lemma 6.2]{balcan2015}, but stop it
after $\tau_0$ blocks instead of $Z$. It has three links.

\emph{Exploration.} On an exploitation round of $B_\tau$ the expected payoff
given $q_\tau$ is $q_\tau\cdot u_t$, where
$q_\tau\cdot u_t=\sum_{\pvec}q_\tau(\pvec)u_t(\pvec)$. The
$d$ exploration rounds earn some other payoff, and since all payoffs lie in
$[-1,1]$, replacing them by $q_\tau\cdot u_t$ changes the block's total by at
most $2d$. Hence the expected payoff on $B_\tau$ is at least
$G\,\EE[q_\tau\cdot c_\tau]-2d$.

\emph{Background learner.} By \Cref{eq:est-props}, conditionally on $q_\tau$,
$\EE[q_\tau\cdot\hat c_\tau]=q_\tau\cdot c_\tau$. After $\tau_0$ blocks the
background learner has received the loss vectors
$-\hat c_1,\dots,-\hat c_{\tau_0}$, with entries in $[-\kappa,\kappa]$, so by
\Cref{prop:hedge}, which holds at every round without knowing the number of
rounds in advance, it trails the best expert on the estimates by at most
$R(\tau_0)=O\bigl(\sqrt{\tau_0\,\kappa\log|\cE(C;\epsilon)|}\bigr)$.

\emph{Back to true payoffs.} The expectation of a maximum is at least the
maximum of the expectations, and by \Cref{eq:est-props} the latter is the
maximum of the true block averages.

Chaining the three links, multiplying by the block length $G$, and finally
passing from $\cE(C;\epsilon)$ to $\cP$ by \Cref{lem:43}:
\begin{align}
\EE\Bigl[\sum_{t\le\tau_0G}u_t(\pvec_t)\Bigr]
&\ge G\sum_{\tau\le\tau_0}\EE[q_\tau\cdot c_\tau]-2d\tau_0\notag\\
&= G\,\EE\Bigl[\sum_{\tau\le\tau_0}q_\tau\cdot\hat c_\tau\Bigr]-2d\tau_0\notag\\
&\ge G\,\EE\Bigl[\max_{\pvec\in\cE(C;\epsilon)}\sum_{\tau\le\tau_0}\hat c_\tau(\pvec)\Bigr]
   -G\,R(\tau_0)-2d\tau_0\notag\\
&\ge \max_{\pvec\in\cE(C;\epsilon)}\sum_{t\le\tau_0G}u_t(\pvec)
   -G\,R(\tau_0)-2d\tau_0\notag\\
&\ge \max_{\pvec\in\cP}\sum_{t\le\tau_0G}u_t(\pvec)
   -2-\underbrace{G\,R(\tau_0)}_{\text{background learner}}
   -\underbrace{2d\tau_0}_{\text{exploration}} .
\label{eq:chain}
\end{align}
So the regret over the first $\tau_0$ blocks is at most $2$ plus the two
error terms marked in \Cref{eq:chain}. Both are nondecreasing in $\tau_0$, so
for every $\tau_0\le Z$ they are at most their values at $\tau_0=Z$, which is
where \citet{balcan2015} evaluate the chain. There, with $d\le|C|$,
$\kappa=n|C|$, $\log|\cE(C;\epsilon)|=O(n^2|C|\log(n|C|))$ by \Cref{lem:44},
and $Z$ as in \Cref{eq:Z},
\[
G\,R(Z)=O\Bigl(H\sqrt{\kappa\log|\cE(C;\epsilon)|/Z}\Bigr)
\quad\text{and}\quad
2dZ\le 2|C|Z
\]
are both $O(H^{2/3}n|C|\log^{1/3}(n|C|))$: this is the bound of
\Cref{thm:balcan61} with horizon $H$ and type set $C$.

Two cases remain. A prefix $L\le H$ that ends inside a block adds at most
$G$ rounds after the last complete block, each contributing at most $2$ to
the regret, and $2G=2H^{1/3}/(n\log^{1/3}(n|C|))$ is of lower order. And if
$H$ is so small that a block cannot hold its $d$ exploration rounds, that is,
$G\le|C|$, then $H^{1/3}\le n|C|\log^{1/3}(n|C|)$, so the trivial bound $2H$
on the regret is already $O(H^{2/3}n|C|\log^{1/3}(n|C|))$. We conclude that
there is an absolute constant $c_0$ with
\begin{equation}
\text{regret at prefix }L\;\le\;c_0\,H^{2/3}\,n|C|\log^{1/3}(n|C|)
\label{eq:prefix-bound}
\end{equation}
for every $L\le H$. The right-hand side depends on $H$, not $L$: at short
prefixes we still pay for the horizon we tuned to. Converting $H^{2/3}$ into
$L^{2/3}$ is the task of part~(ii).

\emph{(ii) Doubling wrapper.}
Run $\Api(C)$ in \emph{stages} of exponentially growing length: stage
$s=0,1,2,\dots$ has length $2^{s}$ and is a fresh copy of $\Api(C)$ tuned to
horizon $2^{s}$; at the end of stage $s$ we discard its state and start stage
$s+1$. This variant needs no horizon in advance, and the sequence restricted
to any stage is still oblivious. This is the doubling trick of
\citet{auer1995} \citep[see also][Sec.~2.3]{cesa2006}, and the computation
below is that of
\citet[Theorem~4]{besson2018} at $\gamma=2/3$; what it requires from the base
learner is a bound on a stage that is cut off early, which part~(i)
provides. Fix any $L\le T$ and let $S$ be the stage
containing round $L$; stages $0,\dots,S-1$ occupy $2^{S}-1$ rounds, all
before $L$, so $2^{S}\le L$. Stage $S$ is in general cut off at $L$ before
its end, and this is why (i) is needed: \Cref{eq:prefix-bound} bounds the
regret of a run at every prefix, not only at the horizon it was tuned to.

Regret is subadditive across stages: for any $\pvec\in\cP$ the payoff over
$[1,L]$ decomposes as a sum of per-stage payoffs, and the maximum over $\cP$
of a sum is at most the sum of the per-stage maxima. Subtracting the
algorithm's payoff stage by stage,
\[
\text{regret over }[1,L]\;\le\;\sum_{s=0}^{S}\text{regret of stage }s
\;\le\;c_0\,\phi\sum_{s=0}^{S}2^{2s/3},
\]
where $\phi\coloneqq n|C|\log^{1/3}(n|C|)$ and each stage's regret comes
from \Cref{eq:prefix-bound} with $H=2^{s}$. The remaining sum is a
geometric series with ratio $2^{2/3}>1$, so summing it from the top
downwards,
\[
\sum_{s=0}^{S}2^{2s/3}\;\le\;\frac{2^{2/3}}{2^{2/3}-1}\cdot 2^{2S/3}
\;\le\;\frac{2^{2/3}}{2^{2/3}-1}\cdot L^{2/3},
\]
the last step using $2^{S}\le L$. Setting
$c_4\coloneqq 2^{2/3}c_0/(2^{2/3}-1)<2.71\,c_0$ yields
\Cref{eq:anytime}.\hfill$\square$

\subsection{Proof of \texorpdfstring{\Cref{thm:blocklb}}{Theorem}}
\label{app:blocklb}

The proof sketch after \Cref{thm:blocklb} rests on two devices.

\emph{Nested intervals, so that one strategy suits every type.} By
\Cref{eq:type-line}, a coverage vector $(p,0,1-p)$ steers the type with
interval $J$ precisely when $p\in J$. So if the adversary chooses $\Kmax$
types whose intervals are strictly nested, $J_1\supset J_2\supset\cdots\supset
J_{\Kmax}$, then any $p$ in the innermost interval lies in all of them and
therefore steers every type at once. The benchmark thus has value $0$ and,
as in \Cref{thm:impossible}, regret counts failed rounds.

\emph{Fresh randomness, so that each block is a new search.} The type
$\alpha^{[j]}$ attacks only during the $j$-th block, so the oracle names it
only at that block's closing checkpoint. Given $J_{j-1}$, the next interval
$J_j$ is a uniformly chosen one of the $M=2^B$ equal sub-intervals of
$J_{j-1}$. When block $j$ begins, the defender knows $J_{j-1}$ from the
previous checkpoint, but nothing about where inside $J_{j-1}$ the interval
$J_j$ lies. It must locate $J_j$ using the $B$ rounds of that block, which is
the search problem of \Cref{lem:count} with $M=2^B$ candidates and horizon
$B$.

\emph{Construction.}
Put $M=2^{B}$, draw $k_1,\dots,k_{\Kmax}$ independently and uniformly from
$[M]$, and define nested intervals $J_0=(0,1]\supset J_1\supset\cdots$ by
letting $J_j$ be the $k_j$-th of the $M$ equal sub-intervals of $J_{j-1}$:
\begin{equation}
J_j=\Bigl(\ell_{j-1}+\tfrac{k_j-1}{M}|J_{j-1}|,\ \
        \ell_{j-1}+\tfrac{k_j}{M}|J_{j-1}|\Bigr],
\label{eq:nested}
\end{equation}
$\ell_{j-1}$ and $|J_{j-1}|$ being the left endpoint and length of
$J_{j-1}$. Let $\alpha^{[j]}$ be the threshold type with interval $J_j$, put
$K=\{\alpha^{[1]},\dots,\alpha^{[\Kmax]}\}$, and let $a_t=\alpha^{[j]}$ for
$t\in\cB_j$ and $a_t=\alpha^{[\Kmax]}$ for $t>\Kmax B$. The sequence is fixed
once the $k_j$ are drawn, so the adversary is oblivious, and each $J_j$ is a
\emph{proper} sub-interval of $J_{j-1}$, so the $\Kmax$ types are distinct.
Fix an online algorithm and let $\overline{R}$ be its regret averaged over
the draw of $k_1,\dots,k_{\Kmax}$. As in the proof of \Cref{thm:impossible},
we first bound $\overline{R}$ for deterministic algorithms and then average
over the coins of a randomized one.

\emph{The benchmark is $0$.} Let $\mu$ be the midpoint of the innermost
interval $J_{\Kmax}$ and $\pvec^{*}=(\mu,0,1-\mu)$. Write
$J_j=(r_1^{[j]},r_2^{[j]}]$. Because the intervals are nested, each contains
$J_{\Kmax}$, so their left endpoints increase and their right endpoints
decrease with $j$. Hence $r_1^{[j]}\le r_1^{[\Kmax]}<\mu$ and
$1-r_2^{[j]}\le 1-r_2^{[\Kmax]}<1-\mu$ for every $j$, the strict
inequalities holding because $\mu$ lies strictly inside $J_{\Kmax}$. Both
conditions of \Cref{eq:hit-iff} are therefore met for every $j$, and
strictly, so $\pvec^{*}$ steers every type regardless of the tie-breaking
rule, and regret counts failed rounds.

\emph{Each block is a fresh search.} Let the algorithm be deterministic.
When $\cB_j$ begins, the checkpoint closing $\cB_{j-1}$ has revealed
$\alpha^{[1]},\dots,\alpha^{[j-1]}$, hence $J_{j-1}$, and nothing about
$\alpha^{[j]}$, which has not yet attacked. Moreover the whole transcript of
$\cB_1,\dots,\cB_{j-1}$ is a function of $k_1,\dots,k_{j-1}$ alone: this
follows by induction on the rounds, since each play is determined by earlier
observations, each observation by that play and the interval then attacking,
and the intervals active in those blocks by $k_1,\dots,k_{j-1}$. As $k_j$ is
drawn independently of $k_1,\dots,k_{j-1}$, it remains uniform on $[M]$
conditionally on everything the defender has seen and been told before
block $j$ starts.

Condition on that transcript. Within $\cB_j$ the defender's plays are
functions of the targets attacked inside $\cB_j$, the only attacker is
$\alpha^{[j]}$, and its interval is one of $M$ disjoint sub-intervals of
$J_{j-1}$. The oracle says nothing during $\cB_j$, so \Cref{lem:count}
applies with these $M$ candidates and horizon $B$, and gives
$\EE[\#\text{failures in }\cB_j]\ge B-(2^{B+1}-B-2)/2^{B}>B-2$.
The $\Kmax$ blocks are disjoint sets of rounds, so their failure counts
add, and summing over $j\in[\Kmax]$ gives $\overline{R}\ge\Kmax(B-2)$ for
every deterministic algorithm. A randomized algorithm is a distribution over
deterministic ones whose coins are independent of the draw, so averaging over
its coins gives the same bound. Finally, since $\overline{R}$ is an average
over the draw, some fixed draw of $k_1,\dots,k_{\Kmax}$ has regret at least
$\Kmax(B-2)$, and that fixed draw is the instance in the statement of
\Cref{thm:blocklb}.\hfill$\square$

\paragraph{Comparing the two bounds.}
\label{rem:blocktight}
\Cref{thm:block,thm:blocklb} bracket the cost of batching between
$\Kmax(B-2)$ and $2\Kmax B$, so it is $\Theta(\Kmax B)$ for $B\ge 3$; for
$B\le 2$ the lower bound is vacuous. The comparison needs one caveat:
\Cref{thm:blocklb} bounds the \emph{total} regret on its instance, whereas
the $2\Kmax B$ of \Cref{thm:block} is only the batching part of an upper
bound whose other part is the learning term. Take the random instance above
with $T=\Kmax B$, so that $|K'|=\Kmax$ and $T^{2/3}=\Kmax^{2/3}B^{2/3}$.
\Cref{thm:block} holds on every instance, hence also on average over the
draw, and together with the bound $\overline{R}\ge\Kmax(B-2)$ just proved it
gives
\[
\Kmax(B-2)\;\le\;\overline{R}\;\le\;
\underbrace{c\,n\Kmax^{2}B^{2/3}\log^{1/3}(n\Kmax)}_{\text{learning term}}
\;+\;2\Kmax B .
\]
Fix $\delta>0$. The learning term is at most $\delta\cdot 2\Kmax B$ exactly
when $B\ge (c/2\delta)^{3}\,n^{3}\Kmax^{3}\log(n\Kmax)$. For such $B$, and
$B\ge 3$, the upper bound is at most $(1+\delta)\,2\Kmax B$, so the upper and
lower bounds differ by a factor of at most
\[
\frac{(1+\delta)\,2\Kmax B}{\Kmax(B-2)}\;=\;(1+\delta)\Bigl(2+\frac{4}{B-2}\Bigr),
\]
which equals $6(1+\delta)$ at $B=3$ and $4(1+\delta)$ at $B=4$, and
decreases to $2(1+\delta)$ as $B\to\infty$. At the other extreme, $B=1$, the
additive term is $2\Kmax$ and \Cref{eq:block-main} becomes
$O(n\Kmax^{4/3}T^{2/3}\log^{1/3}(n\Kmax)+\Kmax)$.

\section{Experiments}
\label{sec:exp}

We test \Cref{alg:gc,alg:bgc} against the bounds of
\Cref{thm:main,thm:block}, and we play the game $G_3$ exactly to check that
the lower bounds of \Cref{thm:impossible,prop:sqrtT,thm:blocklb} are
attained.

\subsection{Setup}
\label{sec:exp-setup}

Besides \Cref{alg:gc}, we run the known-type learner of
\citet[Sec.~5]{balcan2015}, which is given $K$ before the first round, and a
\emph{uniform} baseline that plays $(R/n,\dots,R/n)$ in every round. Unless
stated otherwise, $n=3$, $R=1$, and the utilities are drawn independently
per target with fixed signs, $u_\alpha^c(i),u_d^u(i)\sim\mathrm{Unif}[-1,-0.1]$
and $u_\alpha^u(i),u_d^c(i)\sim\mathrm{Unif}[0.1,1]$,
so that coverage deters every attacker. At $n=3$ the expert set
$\cE(C;\epsilon)$ is computed by enumeration. At $n=10$ and $n=20$ this is
infeasible, and both algorithms instead use the optimization oracle of
\Cref{app:oracle}, which returns the best coverage vector for a given linear
objective without building $\cE(C;\epsilon)$.

We use two adversaries. The \emph{staggered-uniform} adversary splits the
horizon into $\Kmax$ phases of equal length and in phase $i$ draws the
attacker uniformly from the first $i$ types, so that new types arrive at
known times while old and new types keep mixing. The
\emph{uniform-random} adversary draws every attacker uniformly from $K$.
The staggered adversary is non-stationary, so an algorithm that tracks the
current phase can beat the best fixed coverage vector, and cumulative regret
against $\pvec^{*}$ can be negative (at $n=10$ it was on $14$ of the $40$ seeds). We
therefore use it only to show the epoch structure, and the stationary
uniform-random adversary to measure cumulative regret.

\Cref{alg:bgc} needs estimation blocks long enough to hold $|C|$ exploration
rounds. The block length $H/Z$ implied by \Cref{eq:Z} is asymptotic and
falls below one round at $n=10$, $|C|=4$, $T=2000$, so we use blocks of
$\max\{|C|+1,\lceil T^{1/3}\rceil\}$ rounds, which has the same order in
$T$. At $T=2000$ and $\Kmax=4$ that is $13$ rounds, of which $4$ explore.

At $n=3$ the inner learner is Hedge with rate
$\eta_t=\sqrt{\ln|\cE(C;\epsilon)|/t}$, and in \Cref{fig:block}
$B=\max\{4,\operatorname{round}(T^{2/3}/4)\}$ except in panel~(c).

\subsection{Full information}
\label{sec:exp-results}
\label{sec:exp-full}

\begin{figure}[!htbp]
\centering
\includegraphics[width=0.85\linewidth]{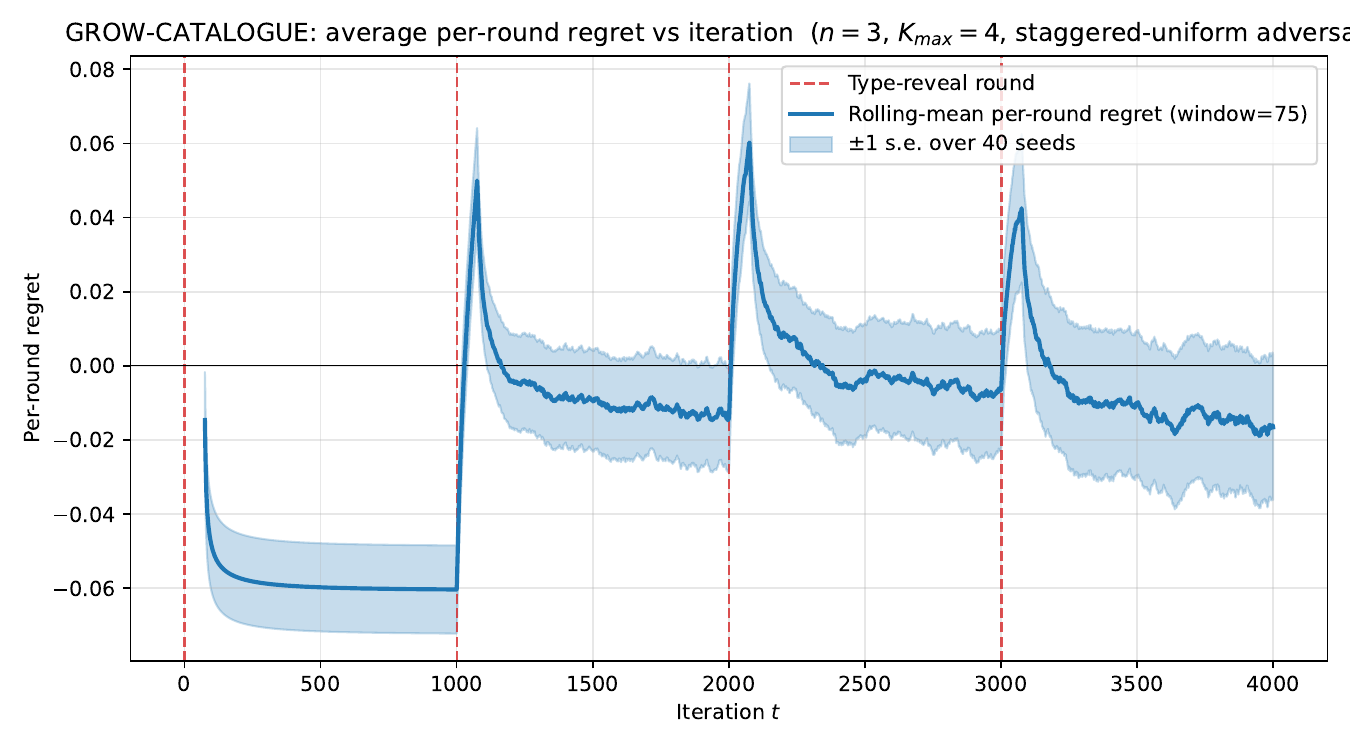}
\caption{Rolling-window per-round regret of \textsc{Grow-Catalog} under
the staggered-uniform adversary ($n=3$, $\Kmax=4$, $T=4000$, $40$ seeds,
window $75$). Dashed lines mark the discovery rounds $0,1000,2000,3000$.}
\label{fig:per-round}
\end{figure}

\begin{figure}[!htbp]
\centering
\includegraphics[width=0.85\linewidth]{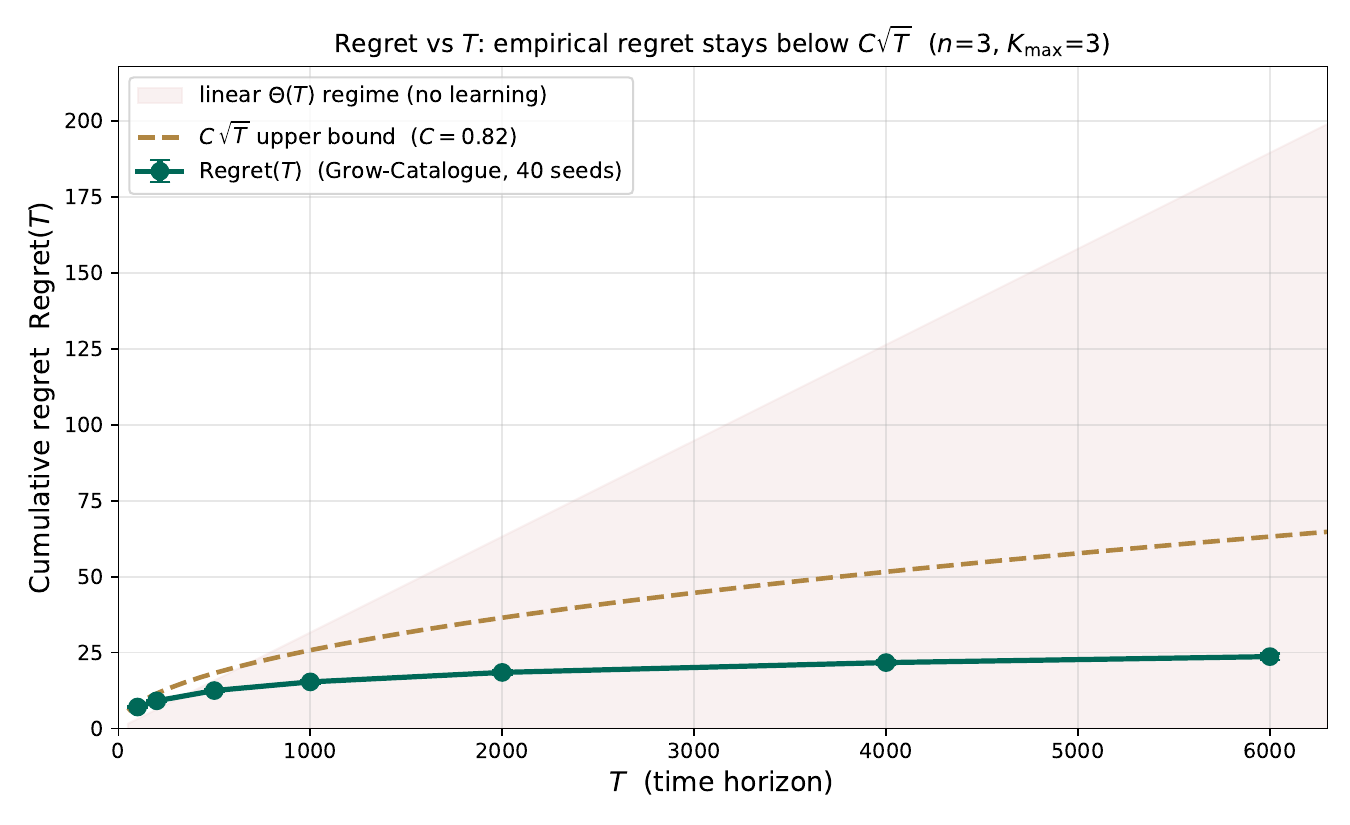}
\caption{Cumulative regret of \textsc{Grow-Catalog} against the horizon $T$
($n=3$, $\Kmax=3$, uniform-random adversary, $40$ seeds), with the
$C\sqrt T$ envelope of \Cref{thm:main}.}
\label{fig:scaling}
\end{figure}

\begin{table}[!htbp]
\centering
\footnotesize
\renewcommand{\arraystretch}{1.15}
\begin{tabular}{lcc}
\toprule
Algorithm & Regret & Ratio\\
\midrule
Known-type learner of \citet{balcan2015} & $21.95\ [16.67,\,24.84]$  & $1.00$\\
\textsc{Grow-Catalog}                     & $22.71\ [18.29,\,25.65]$  & $1.03$\\
Uniform baseline                          & $519.31\ [385.00,\,716.73]$ & $23.66$\\
\bottomrule
\end{tabular}
\caption{Regret at $T=4000$, $n=3$, $\Kmax=3$, uniform-random adversary,
median [interquartile range] over $40$ seeds.}
\label{tab:comparison}
\end{table}

\begin{figure}[!htbp]
\centering
\includegraphics[width=0.98\linewidth]{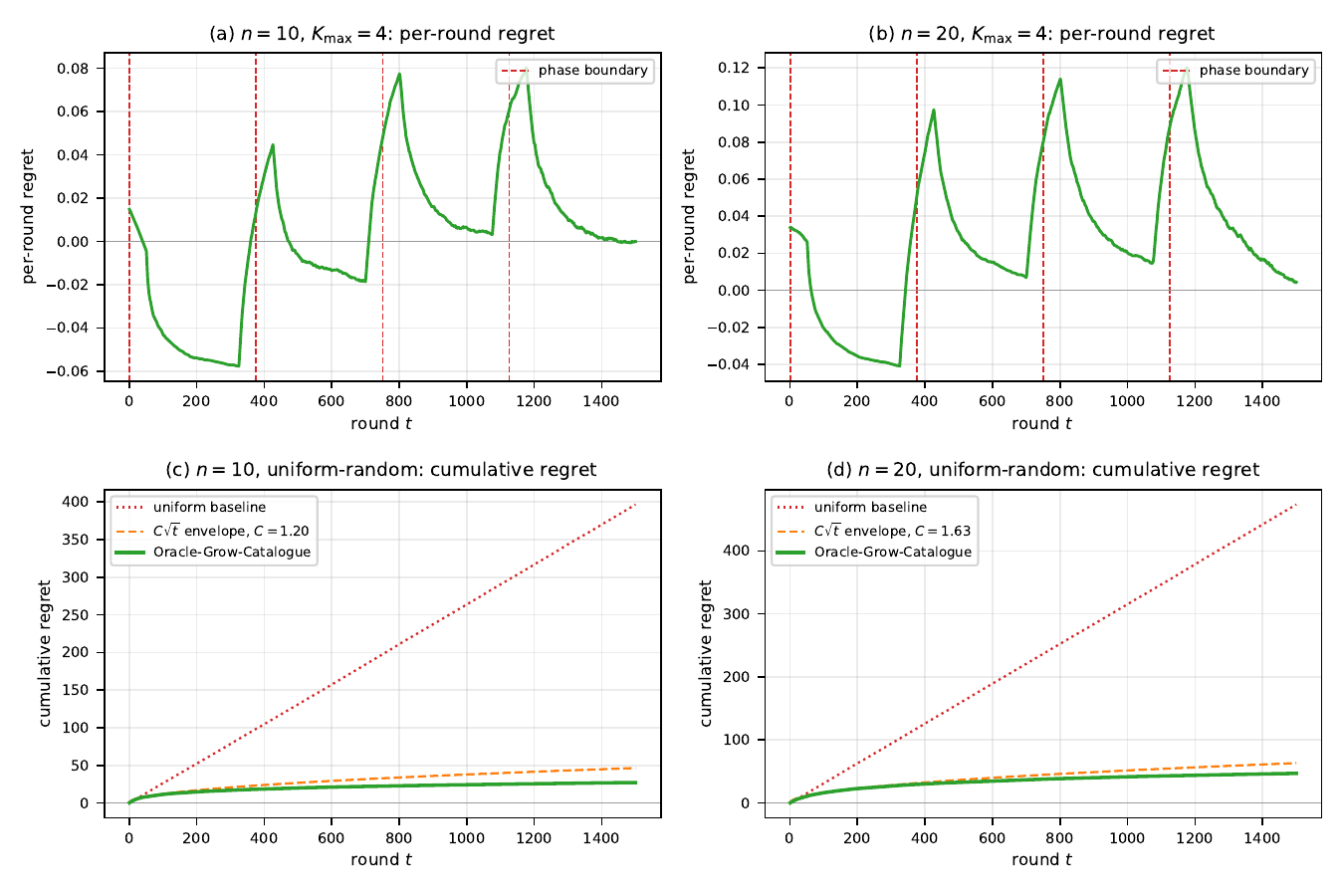}
\caption{\textsc{Grow-Catalog} with the oracle of \Cref{app:oracle},
$\Kmax=4$, $T=1500$, $40$ seeds. (a), (b) per-round regret under the
staggered-uniform adversary, rolling window $101$, dashed lines at the phase
boundaries. (c), (d) cumulative regret under the uniform-random adversary,
with the $C\sqrt t$ envelope of \Cref{thm:main} and the uniform baseline.}
\label{fig:tractable}
\end{figure}

\Cref{fig:per-round} shows the epoch structure of the proof of
\Cref{thm:main}: each discovery round produces a bounded spike in per-round
regret, which decays as the restarted learner converges within the new
epoch. \Cref{fig:scaling} shows the cumulative regret staying below the
$C\sqrt T$ envelope at every horizon tested, with $\Regret(T)/\sqrt T$
decreasing, so the stochastic adversary is far from the worst case.
\Cref{tab:comparison} shows that \textsc{Grow-Catalog} comes within $3\%$
of the learner that is given $K$ in advance, and is an order of magnitude
better than the uniform baseline.

\begin{figure}[!htbp]
\centering
\includegraphics[width=0.98\linewidth]{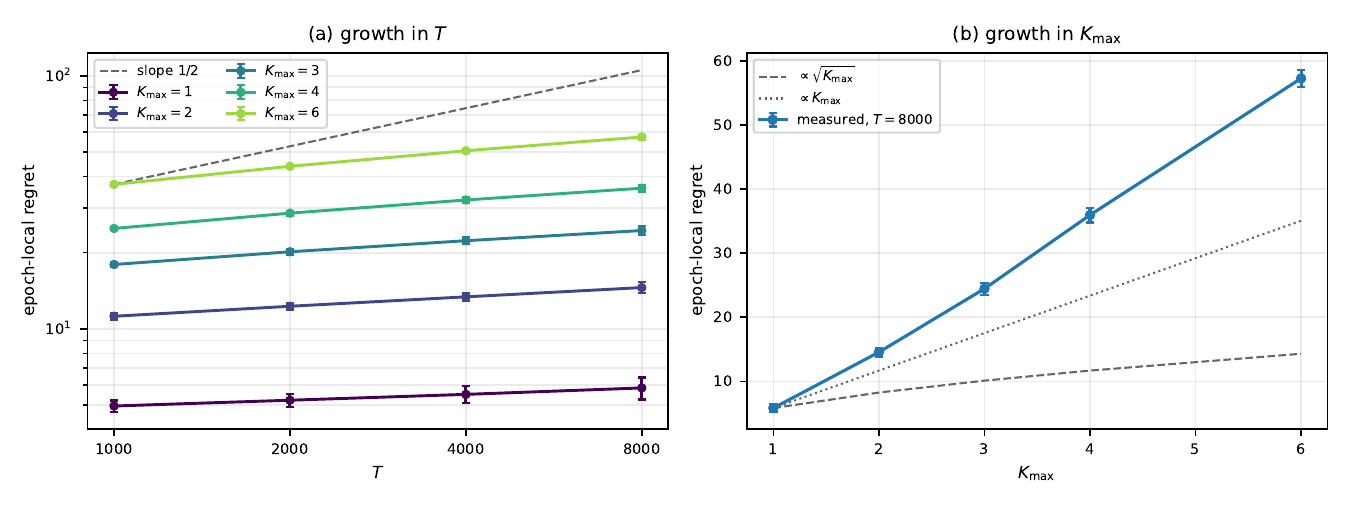}
\caption{The hard adversary: $\Kmax$ equal epochs, each attacked only by its
new type ($n=3$, $40$ seeds). (a) epoch-local regret against $T$, with a
slope-$1/2$ reference; (b) epoch-local regret against $\Kmax$ at $T=8000$.}
\label{fig:hard}
\end{figure}

\Cref{fig:hard} uses the adversary on which the proof of \Cref{thm:main} is
tight: $\Kmax$ equal epochs, each attacked only by its new type. Against the
best fixed strategy over the whole horizon the regret is negative ($-1032$ at
$\Kmax=4$, $T=8000$), since no single coverage vector suits every type, so we
report the sum over epochs of the regret against each epoch's own best
strategy, which is what the proof bounds. It grows far more slowly than
$\sqrt{\Kmax T}$: like $T^{0.08}$ to $T^{0.21}$, and slightly faster than
linearly in $\Kmax$, as each epoch restarts Hedge on a larger expert set,
reaching $57$ at $\Kmax=6$, $T=8000$, where $\sqrt{\Kmax T}\approx 219$. With
one type per epoch the best response never changes, so Hedge locks on within
a few rounds, and the bound is not attained.

\Cref{fig:tractable} repeats both experiments at $n=10$ and $n=20$. The
per-round regret again spikes after each phase boundary and decays within
the following epoch, and the cumulative regret stays below the $C\sqrt t$
envelope, with $C=1.20$ at $n=10$ and $C=1.64$ at $n=20$, while being $14.6$
and $10.1$ times smaller than that of the uniform baseline.

\subsection{Block revelation}
\label{sec:exp-block}
\label{app:bign}

\begin{figure}[!htbp]
\centering
\includegraphics[width=0.98\linewidth]{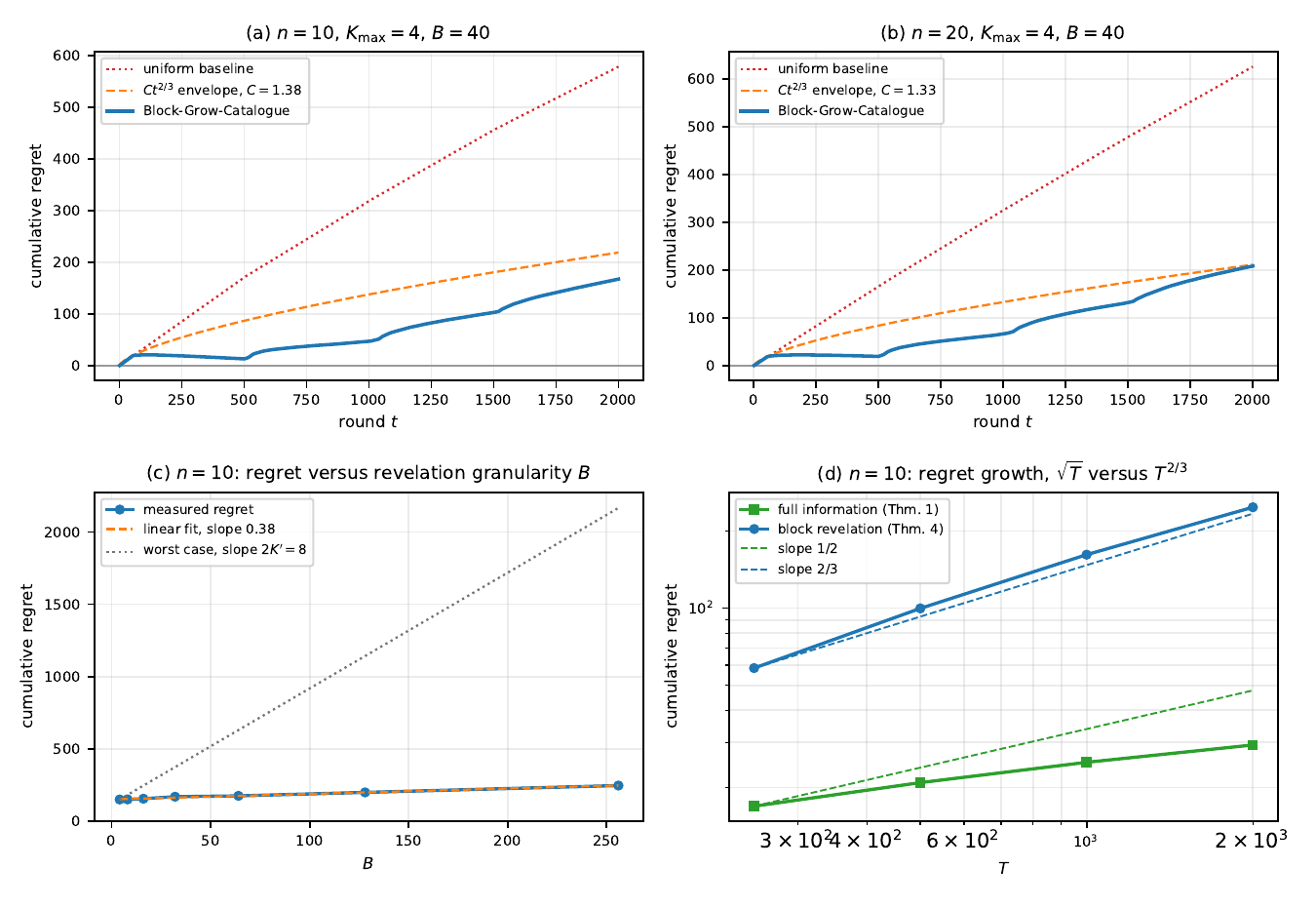}
\caption{\textsc{Block-Grow-Catalog} under partial information with block
revelation, $\Kmax=4$, $T=2000$, $40$ seeds. (a), (b) cumulative regret
under the staggered-uniform adversary against the $Ct^{2/3}$ envelope of
\Cref{thm:block}. (c) final regret against the revelation length $B$, with
a least-squares fit and the worst-case slope $2|K'|$. (d) full information
against block revelation as $T$ grows, uniform-random adversary.}
\label{fig:block}
\end{figure}

In \Cref{fig:block} the simulator passes \Cref{alg:bgc} only the attacked
target in each round, and at the end of each revelation block only the set
of types seen so far, which the algorithm reads through membership tests,
so no order or frequency information can leak. Panels (a) and (b) use the
staggered-uniform adversary, which spreads the discovery blocks over the
horizon and so produces the $\Theta(\Kmax)$ epochs that the proof of
\Cref{thm:block} sums over; the regret stays below the $Ct^{2/3}$
envelope, with $C=1.38$ at $n=10$ and $C=1.33$ at $n=20$. Panel (c) sweeps
$B$ over six doublings: the regret grows linearly in $B$, as
\Cref{thm:block,thm:blocklb} predict, with fitted slope $0.38$, well below
the worst case $2|K'|=8$, which only the nested instance of
\Cref{thm:blocklb} attains. Panel (d) compares the two feedback models on the
uniform-random adversary over $T\in\{250,500,1000,2000\}$: the fitted
exponents are $0.695$ ($95\%$ CI $[0.683,0.706]$, bootstrapped over seeds)
for block revelation, close to the $2/3$ of \Cref{thm:block}, and $0.264$
($[0.230,0.295]$) for full information, inside the $1/2$ of
\Cref{thm:main}. The difference has a simple source: \Cref{alg:bgc} spends
$|C|$ exploration rounds in every estimation block whatever the sequence,
about $31\%$ of all rounds at $T=2000$, whereas \Cref{alg:gc} has no such
floor.

\subsection{The lower bounds}
\label{sec:exp-imp}

\begin{figure}[!htbp]
\centering
\includegraphics[width=0.98\linewidth]{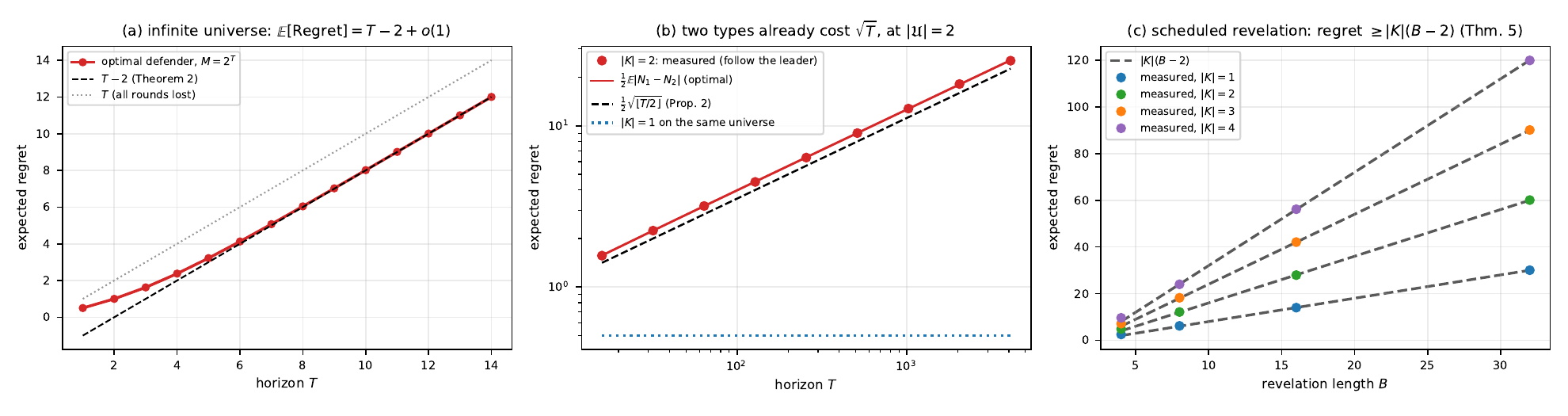}
\caption{The game $G_3$ in simulation, against the optimal defender.
(a) \Cref{thm:impossible}: with $|\Univ_M|=2^{T}$ the regret is $T-2+o(1)$.
(b) \Cref{prop:sqrtT}: two types cost $\Theta(\sqrt T)$, one type costs
$0.5$ at every horizon; logarithmic vertical axis.
(c) \Cref{thm:blocklb}: on the nested instance the regret matches
$|K|(B-2)$ (dashed lines).}
\label{fig:impossible}
\end{figure}

\Cref{fig:impossible} plays $G_3$ exactly: types are built from the
utilities \Cref{eq:type-utils}, best responses are computed from them, and
the expectation over the hidden type is computed by averaging over all $M$
candidates rather than by sampling. The defender is the midpoint search of
\Cref{rem:tight-lb}. In panel (a), with $M=2^T$, the regret equals
$T-2+(T+2)2^{-T}$ to machine precision for every $T\le 14$, confirming that
\Cref{thm:impossible} is exactly attained. In panel (b), with two types,
follow the leader has regret $1.57$, $6.35$ and $25.60$ at $T=16$, $256$ and
$4096$, against the exact value $\tfrac12\EE|N_1-N_2|=1.571$, $6.377$,
$25.531$ and the bound $\tfrac12\sqrt{\lfloor T/2\rfloor}=1.41$, $5.66$,
$22.63$ of \Cref{prop:sqrtT}; with one type the regret is $0.5$ at every
horizon. In panel (c), on the nested instance, the regret tracks $|K|(B-2)$
for $|K|\in\{1,2,3,4\}$ and $B\in\{4,8,16,32\}$, so the additive term of
\Cref{thm:block} is real.

\section{An Oracle-Efficient Implementation}
\label{app:oracle}

The experiments at $n=10$ and $n=20$ cannot build the expert set
$\cE(C;\epsilon)$, whose construction already takes minutes at $n=5$
(\Cref{tab:oracle}). This appendix describes how they avoid it. The idea is
that both algorithms only ever need a linear optimum over the experts, and
such an optimum can be computed by one Bayesian-Stackelberg optimization.

\paragraph{The lift.}
For a catalog $C$ and $\pvec\in\cP$ define
\begin{equation}
y_{\alpha,i}(\pvec)=\mathbb{I}\{b_\alpha(\pvec)=i\},\qquad
w_{\alpha,i}(\pvec)=y_{\alpha,i}(\pvec)\,p_i
\label{eq:lift}
\end{equation}
for $(\alpha,i)\in C\times N$, and let
$\mathcal{S}(C)=\{(y(\pvec),w(\pvec)):\pvec\in\cP\}\subseteq\RR^{2|C|n}$.

\begin{lemma}[both algorithms are linear over the lift]
\label{lem:lift}
Assume the types in $C$ have pairwise distinct best-response maps, so that
the set $\mathcal{W}$ of \Cref{app:balcan-alg} has rank $|C|$ and the matrix
$\Lambda$ whose rows are its barycentric spanner $b_1,\dots,b_{|C|}$ is
invertible. For any weights $\omega\in\RR^{C}$,
\begin{equation}
\sum_{\alpha\in C}\omega_\alpha\,U_d(b_\alpha(\pvec),\pvec)
=\sum_{\alpha\in C}\sum_{i\in N}\Bigl[\omega_\alpha u_d^u(i)\,y_{\alpha,i}(\pvec)
+\omega_\alpha\bigl(u_d^c(i)-u_d^u(i)\bigr)w_{\alpha,i}(\pvec)\Bigr],
\label{eq:lift-linear}
\end{equation}
which is linear in the lift. Moreover:
\begin{enumerate}[leftmargin=1.6em,itemsep=0.05em,topsep=0.2em,label=(\roman*)]
  \item the loss used by \Cref{alg:gc} at round $t$ is
  \Cref{eq:lift-linear} with $\omega=-\mathbb{I}_{\{a_t\}}$;
  \item the estimate $\hat c_\tau$ of \Cref{eq:chat-app}, used inside
  \Cref{alg:bgc}, is \Cref{eq:lift-linear} with $\omega=g_\tau$, where
  $g_\tau=\Lambda^{-1}\hat p_\tau$ and $\hat p_\tau\in\{0,1\}^{|C|}$
  collects the block's exploration outcomes; $g_\tau$ is an unbiased
  estimate of the per-round type frequencies in the block.
\end{enumerate}
\end{lemma}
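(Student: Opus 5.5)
The identity \Cref{eq:lift-linear} comes from a pointwise computation. For each type $\alpha\in C$, exactly one target $i$ has $y_{\alpha,i}(\pvec)=1$, namely $i=b_\alpha(\pvec)$. Hence
\[
U_d(b_\alpha(\pvec),\pvec)=\sum_{i\in N}y_{\alpha,i}(\pvec)\bigl(u_d^u(i)+(u_d^c(i)-u_d^u(i))p_i\bigr).
\]
By \Cref{eq:Ud} this equals
\[
\sum_i\bigl[u_d^u(i)\,y_{\alpha,i}(\pvec)+(u_d^c(i)-u_d^u(i))\,w_{\alpha,i}(\pvec)\bigr],
\]
since $w_{\alpha,i}=y_{\alpha,i}p_i$. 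Multiplying by $\omega_\alpha$ and summing over $\alpha$ gives \Cref{eq:lift-linear}. The coefficients depend only on $\omega$ and the defender's payoffs, so the expression is linear in $(y,w)$.

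Part (i) is immediate. The loss of \Cref{alg:gc} is $\ell_t(\pvec)=-U_d(b_{a_t}(\pvec),\pvec)$, and with $\omega=-\mathbb{I}_{\{a_t\}}$ only the $\alpha=a_t$ term survives, with sign $-1$.

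For part (ii), I would rewrite the estimate \Cref{eq:chat-app}. Take $\pvec\in\cP_\sigma(C)$, so that $w_{\sigma,i}\in\{0,1\}^C$ is the indicator of the types $\alpha$ with $b_\alpha(\pvec)=i$; that is, $(w_{\sigma,i})_\alpha=y_{\alpha,i}(\pvec)$. The barycentric coefficients satisfy $w_{\sigma,i}=\sum_j\lambda_j(w_{\sigma,i})b_j=\Lambda^\top\lambda(w_{\sigma,i})$. Because $\Lambda$ is invertible, $\lambda(w)=\Lambda^{-\top}w$ is uniquely determined and linear in $w$. Then
\[
\sum_j\lambda_j(w_{\sigma,i})\,\hat p_\tau(b_j)=\lambda(w_{\sigma,i})\cdot\hat p_\tau=w_{\sigma,i}\cdot\Lambda^{-1}\hat p_\tau=\sum_\alpha y_{\alpha,i}(\pvec)\,(g_\tau)_\alpha .
\]
Substituting into \Cref{eq:chat-app} gives
\[
\hat c_\tau(\pvec)=\sum_\alpha(g_\tau)_\alpha\sum_i U_d(i,\pvec)\,y_{\alpha,i}(\pvec)=\sum_\alpha(g_\tau)_\alpha\,U_d(b_\alpha(\pvec),\pvec),
\]
which by the first part equals \Cref{eq:lift-linear} with $\omega=g_\tau$.

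The claim that $g_\tau$ is unbiased follows from \citet[Lemma 6.3]{balcan2015}, which gives $\EE[\hat p_\tau(b_j)]=b_j\cdot m_\tau/G$, i.e.\ $\EE[\hat p_\tau]=\Lambda m_\tau/G$. By linearity, $\EE[g_\tau]=\Lambda^{-1}\EE[\hat p_\tau]=m_\tau/G$, the vector of per-round type frequencies in the block.

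The main point needing care is the role of the hypothesis. The span of $\mathcal{W}$ must have dimension $|C|$ for the spanner to be a basis of all of $\RR^C$, so that $\Lambda$ is square and invertible. Distinct best-response maps are what guarantee that the indicator vectors of the individual types can be realized as combinations of the $w_{\sigma,i}$, giving full rank. Once that is granted, the transpose/inverse bookkeeping above is routine.
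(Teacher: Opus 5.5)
Your argument is the paper's proof essentially line for line. It uses the same expansion $U_d(b_\alpha(\pvec),\pvec)=\sum_i y_{\alpha,i}(\pvec)\,U_d(i,\pvec)$ split into the $y$ and $w$ coordinates, the same identity $\sum_j\lambda_j(w)\hat p_\tau(b_j)=w\cdot\Lambda^{-1}\hat p_\tau$ obtained from the uniqueness of $\lambda(w)$, and the same appeal to Lemma~6.3 of \citet{balcan2015} to get $\EE[g_\tau]=m_\tau/G$.

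The one thing to drop is the justification in your closing paragraph. Like the paper, you use only the invertibility of $\Lambda$ that the statement grants. Your claim that distinct best-response maps by themselves force rank $|C|$ is unproved, and it fails when tie-breaking may depend on the type. For a counterexample, take two targets and one resource, and let $U_{\alpha_1}(1,\pvec)=\tfrac13-p_1=-U_{\alpha_2}(1,\pvec)$ and $U_{\alpha_3}(2,\pvec)=\tfrac13-p_2=-U_{\alpha_4}(2,\pvec)$, with all other utilities $0$ and with $\alpha_2,\alpha_4$ breaking ties toward target~2. The four maps are pairwise distinct. Yet every response set contains exactly one of $\alpha_1,\alpha_2$ and exactly one of $\alpha_3,\alpha_4$, so every $w\in\mathcal{W}$ is orthogonal to $\mathbb{I}_{\{\alpha_1\}}+\mathbb{I}_{\{\alpha_2\}}-\mathbb{I}_{\{\alpha_3\}}-\mathbb{I}_{\{\alpha_4\}}$. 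The invertibility of $\Lambda$ therefore belongs to the hypothesis, not to the proof.
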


\begin{proof}
For each $\alpha$ exactly one of $y_{\alpha,1},\dots,y_{\alpha,n}$ equals
$1$, so $U_d(b_\alpha(\pvec),\pvec)=\sum_i y_{\alpha,i}(\pvec)U_d(i,\pvec)$,
and by \Cref{eq:Ud}
$y_{\alpha,i}U_d(i,\pvec)=u_d^u(i)y_{\alpha,i}+\bigl(u_d^c(i)-u_d^u(i)\bigr)w_{\alpha,i}$.
Summing against $\omega$ gives \Cref{eq:lift-linear}, and (i) is the case
$\omega=-\mathbb{I}_{\{a_t\}}$. For (ii), let $\lambda(w)$ be the unique
vector with $\lambda(w)^{\top}\Lambda=w^{\top}$. Then
$\sum_j\lambda_j(w)\hat p_\tau(b_j)=w^{\top}\Lambda^{-1}\hat p_\tau=w\cdot g_\tau$,
and the $\alpha$-th coordinate of $w_{\sigma,i}$ is $y_{\alpha,i}(\pvec)$, so
\Cref{eq:chat-app} becomes
$\hat c_\tau(\pvec)=\sum_i\bigl(\sum_\alpha g_{\tau,\alpha}y_{\alpha,i}(\pvec)\bigr)U_d(i,\pvec)
=\sum_\alpha g_{\tau,\alpha}U_d(b_\alpha(\pvec),\pvec)$. Finally
$\EE[\hat p_\tau]=\Lambda m_\tau/|B_\tau|$ by \citet[Lemma 6.3]{balcan2015},
so $\EE[g_\tau]=m_\tau/|B_\tau|$.
\end{proof}

\paragraph{The oracle and the learner.}
By \Cref{lem:lift}, both algorithms are instances of online linear
optimization over $\mathcal{S}(C)$, whose only primitive is
\begin{equation}
\textsc{Oracle}(A,\Delta)=\argmax_{\pvec\in\cP}\ \sum_{\alpha,i}
\bigl(A_{\alpha,i}\,y_{\alpha,i}(\pvec)+\Delta_{\alpha,i}\,w_{\alpha,i}(\pvec)\bigr).
\label{eq:oracle}
\end{equation}
The implementation therefore replaces the Polynomial Weights learner of the
analysis by Follow the Perturbed Leader over $\mathcal{S}(C)$
\citep{kalai2005}, which calls \Cref{eq:oracle} once per round and not at all
on the exploration rounds of \Cref{alg:bgc}. This changes the learner but
not the structure of either algorithm: the catalog, the restarts and the
estimation blocks are as analyzed. The guarantee is of the same order in
$T$. Follow the Perturbed Leader has regret $O(\sqrt{DRAT})$ for linear
losses, where $D$ is the $\ell_1$-diameter of the decision set, $R$ bounds
the per-round loss and $A$ the $\ell_1$-norm of a loss vector
\citep{kalai2005}. On $\mathcal{S}(C)$ these are $D\le 4|C|$, $R\le 1$ and
$A\le 3n$, the last because
$\sum_i(|u_d^u(i)|+|u_d^c(i)-u_d^u(i)|)\le 3n$. We run its anytime
perturbation schedule, since segment lengths are not known in advance. The
constants of \Cref{thm:main,thm:block} are those of Polynomial Weights, and
the experiments test the exponents, not the constants.

\paragraph{Computing the oracle.}
Grouping \Cref{eq:oracle} by the response profile $\sigma:C\to N$ makes the
objective linear on each region $\cP_\sigma(C)$ of \Cref{eq:Psigma}, so
\Cref{eq:oracle} is the maximum of $n^{|C|}$ linear programs in $n$
variables, the extension to a set of types of the multiple-LP method of
\citet{conitzer2006}. In the experiments we instead solve one mixed-integer
program with $n|C|$ binary variables, in the style of the
Bayesian-Stackelberg program of \citet{paruchuri2008}, and use the profile
enumeration as a correctness check. We do not claim polynomial time in $|C|$,
since the Bayesian Stackelberg problem is NP-hard \citep{conitzer2006}; the
gain is in $n$, the parameter that made $\cE(C;\epsilon)$ unusable. Both
methods solve over regions shrunk by a margin $\gamma=10^{-6}$, so that the
returned $\pvec$ realizes its profile under any tie-breaking rule; this plays
the role of the $\epsilon$-approximation in \Cref{lem:43}. As in the
experiments, $\cP=\{\pvec\ge 0:\sum_i p_i=1\}$ with one resource, and the
algorithm and the benchmark optimize over the same set.

\paragraph{The rank condition.}
The rank hypothesis of \Cref{lem:lift} is a normalization of the catalog,
not an assumption on the game. Two types with the same best-response map on
all of $\cP$ are indistinguishable to the defender's payoff, to the benchmark
\Cref{eq:pstar} and to the feedback, so the catalog may keep one
representative of each map. This can only shrink the catalog, so every bound
of \Cref{sec:block} is unaffected. The implementation does not remove
duplicates but detects them: if no full-rank spanner exists, it records a
rank failure and skips exploration until the catalog next changes. No rank
failure occurred in the runs behind \Cref{fig:block}.

\paragraph{The spanner and the range of the estimates.}
The implementation does not enumerate $\mathcal{W}$. It samples coverage
vectors, records the vectors $w_{\sigma,i}$ they realize, and runs the
greedy determinant maximization of \citet{awerbuch2008} on that pool. A
spanner of a sub-pool need not give coefficients $\lambda\in[-1,1]$ for every
$w\in\mathcal{W}$, so the range bound $|\hat c_\tau|\le n|C|$ of
\Cref{eq:est-props} is not guaranteed. Unbiasedness, which is what the
estimator's correctness rests on, needs only that $\Lambda$ be invertible,
and holds exactly. For the same reason the implementation scales the
perturbation by the range of the estimates it actually computes rather than
by the worst case $n|C|$. This is a tuning choice: \Cref{thm:block} does not
rest on these constants, and \Cref{fig:block} tests the exponent.

\paragraph{Verification.}
On $60$ random instances with $n\in\{3,4,5\}$ and $|C|\in\{1,2,3\}$, the two
oracle methods and a brute-force maximization over the enumerated
$\cE(C;\epsilon)$ agree to within $1.4\times10^{-6}$. Over $4\times10^{4}$
replications of one estimation block at $n=6$, $|C|=3$, the mean of
$|B_\tau|\,g_\tau$ matches the true type counts of the block to within $0.05$
out of $40$, inside the Monte Carlo error of $1.1$ (three standard errors),
as \Cref{lem:lift}(ii) predicts.

\paragraph{Tractability.}
\label{sec:exp-tract}
\Cref{tab:oracle} times the three routines. Building $\cE(C;\epsilon)$
takes $0.24$ seconds at $n=3$, $|C|=4$, already $20$ minutes at $n=5$, $|C|=2$,
and is out of reach beyond that, whereas one oracle call costs $145$\,ms at
$n=10$, $0.9$\,s at $n=20$ and $1.6$\,s at $n=40$. Since neither algorithm makes more than one oracle call
per round, this is what moves the experiments from $n=3$ to $n=20$.

\begin{table}[t]
\centering
\footnotesize
\renewcommand{\arraystretch}{1.15}
\begin{tabular}{rrrrrr}
\toprule
$n$ & $|C|$ & $\binom{H}{n}$ & build $\cE$ & $n^{|C|}$ LPs & MILP\\
\midrule
 3 & 4 & $3.3\!\times\!10^{3}$ & $242$\,ms ($|\cE|=444$) & $30$\,ms & $19$\,ms\\
 5 & 4 & $3.5\!\times\!10^{7}$ & infeasible & $209$\,ms & $64$\,ms\\
10 & 4 & $1.2\!\times\!10^{19}$ & infeasible & $3.2$\,s & $145$\,ms\\
20 & 4 & $2.1\!\times\!10^{45}$ & infeasible & $65$\,s & $0.93$\,s\\
30 & 4 & $7.7\!\times\!10^{73}$ & infeasible & infeasible & $0.89$\,s\\
40 & 4 & $9.0\!\times\!10^{103}$ & infeasible & infeasible & $1.60$\,s\\
\bottomrule
\end{tabular}
\caption{Cost of one step, with $H=n+1+2|C|\binom{n}{2}$ the number of
hyperplanes. Column 3 is the number of $n$-subsets that the exhaustive
construction of $\cE(C;\epsilon)$ must solve. Columns 5 and 6 are the two
methods for \Cref{eq:oracle}; wherever both ran they returned the same
optimum to within $2\times 10^{-6}$. Median of $10$ calls. The MILP time is
not monotone in $n$ because it is an exact branch-and-bound solve.}
\label{tab:oracle}
\end{table}

\end{document}